\documentclass[11pt,a4paper]{article}

\usepackage[margin=2.8cm]{geometry}
\usepackage{amsmath,amssymb,amsthm,mathtools}
\usepackage{enumitem}
\usepackage{microtype}
\usepackage[hidelinks]{hyperref}
\usepackage{xcolor}
\usepackage{tikz}
\usetikzlibrary{arrows.meta,calc,decorations.markings}

\newtheorem{theorem}{Theorem}[section]
\newtheorem{lemma}[theorem]{Lemma}

\newtheorem{proposition}[theorem]{Proposition}
\newtheorem{corollary}[theorem]{Corollary}
\theoremstyle{remark}
\newtheorem{remark}[theorem]{Remark}
\theoremstyle{plain}

\newcommand{\HS}{\mathrm{HS}}
\newcommand{\cH}{\mathcal H}
\newcommand{\cS}{\mathcal  S}
\newcommand{\cB}{\mathcal B}
\newcommand{\cA}{\mathcal A}
\newcommand{\cW}{\mathcal W}
\newcommand{\dd}{\mathrm d}
\newcommand{\tr}{\operatorname{tr}}

\newcommand{\cX}{\mathcal X}
\newcommand{\cF}{\mathcal F}
\newcommand{\cI}{\mathcal I}
\newcommand{\bE}{\mathbb E}
\newcommand{\Var}{\operatorname{Var}}
\newcommand{\supp}{\operatorname{supp}}

\title{Information Geometry of the Geodesic Quantum $f$-Divergences}
\author{%
 Ángela Capel$^{1,2}$, Pablo Costa Rico$^{3}$\\[0.8em]
 \parbox{0.92\textwidth}{\centering\small
 $^{1}$Department of Applied Mathematics and Theoretical Physics, University of Cambridge,  Cambridge CB3 0WA, United Kingdom \\
 $^{2}$Fachbereich Mathematik, Universit\"at T\"ubingen, 72076 T\"ubingen, Germany\\
 $^{3}$Institute for Quantum Information, RWTH Aachen University,
 Aachen, Germany\\
 }}
\date{}

\begin{document}
\maketitle
\vspace{-2.3em}

\begin{abstract}
We study the differential, statistical, and geometrical consequences generated by the geodesic quantum $f$-divergences introduced in \cite{capel2026geodesic}, which  are constructed by interpolating the relative modular operator and the commutant Radon-Nikodym derivative using a geodesic with parameter $t\in [0,1]$. For an invertible state $\rho$ and an operator convex function $f$, we compute the Hessian and
obtain an explicit formula for the induced
monotone quantum information metric  $g_{\rho,t}^{(f)}$. Furthermore, we also compare these metrics with the Petz-Hasegawa metric and find the meaning of the interpolation parameter $t$ in this new geometry.  We next show that the interpolation of relative modular operators $\Gamma_t$ in the reference purification of a state $\sigma$ defines a canonical finite binary experiment $(p_t,q_t)$, and introduce a log-likelihood cumulant function $\Psi_{\rho,\sigma}(t,s)$, recovering  the Nussbaum–Szko\l a distributions  at $t=0$ and the Matsumoto construction at $t=1$. Finally, using Busemann functions, we endow this statistical framework with a geometric meaning in the cone of positive operators.

\end{abstract}

\vspace{-0.8em}
\setcounter{tocdepth}{2}
{\small
\tableofcontents
}
\clearpage

\section{Introduction}

A basic structural problem in quantum information theory is how to interpolate between the standard and maximal quantum extensions of a classical $f$-divergence.  For invertible states $\rho$ and $\sigma$ and an operator-convex function $f$, the standard Petz $f$-divergence is generated by the relative modular operator
\begin{equation*}
 \Delta_{\rho\mid\sigma}=L_\rho R_{\sigma^{-1}},
 \qquad
 D_f(\rho\Vert\sigma)
 =\langle\Omega_\sigma,f(\Delta_{\rho\mid\sigma})\Omega_\sigma\rangle_{\mathrm{HS}},
\end{equation*}
whereas the maximal $f$-divergence is generated by the   commutant Radon--Nikodym derivative
\begin{equation*}
 K_1=\sigma^{-1/2}\rho\sigma^{-1/2},
 \qquad
 \widehat{D}_f(\rho\Vert\sigma)
 =\tr\!\left[\sigma f(K_1)\right]
\end{equation*}
\cite{matsumoto2018maximal,hiaiMosonyi2017,hiai2019maximal}.  The two constructions agree when $\rho$ and $\sigma$ commute, but in the noncommutative setting they provide genuinely different extensions \cite{hiaiMosonyi2017}.  For $f(x)=x\log x$, their values are, in particular,  the Umegaki and Belavkin--Staszewski relative entropies \cite{umegaki1962,belavkinStaszewski1982}.

There is no unique way of interpolating between these two families. The closest previous interpolation at the level of monotone additive relative entropies is due to Mosonyi, Bunth and Vrana \cite{mosonyiBunthVrana2024}.  Starting from a relative entropy $D^q$, they move the second state along the Kubo--Ando path $\sigma\#_\gamma\rho$ and do not modify the relative modular operator. 
For $D^q$ starting in the  Umegaki entropy, this gives an increasing family from Umegaki to Belavkin--Staszewski.  Related geometric and sharp R\'enyi constructions also single out the maximal endpoint through their  limits \cite{fangFawzi2021,berghSalzmannDatta2021}. The construction introduced in \cite{capel2026geodesic} is different: the pair $(\rho,\sigma)$ and the function  $f$ are kept fixed, while the quantum relative operator itself is transported along the affine-invariant geodesic
\begin{equation}\label{eq:relative-operator-geodesic-updated}\Gamma_t(\rho,\sigma):=L_{\rho^{1-t}}R_{K_t}
 =\Delta_{\rho\mid\sigma}\#_t\widehat\Delta_{\rho\mid\sigma}\, ,
 \qquad
 K_t:=\sigma^{-1/2}\rho^t\sigma^{-1/2}\, ,
 \qquad 0\le t\le1,
\end{equation}
where $\widehat\Delta_{\rho\mid\sigma}
 =R_{\sigma^{-1/2}\rho\sigma^{-1/2}}.$
The associated geodesic quantum $f$-divergences are
\begin{equation}\label{eq:intro-geodesic-divergence}
 D_f^t(\rho\Vert\sigma)
 :=\left\langle\Omega_\sigma,
 f(\Gamma_t)\Omega_\sigma\right\rangle_{\mathrm{HS}},
 \qquad \Omega_\sigma:=\sigma^{1/2}.
\end{equation}
 These divergences recover the standard and maximal constructions at $t=0$ and $t=1$, respectively, and satisfy data processing under quantum channels \cite{capel2026geodesic}.  The mathematical application of this relative-operator interpolation is the common object studied here.

In this work, we are concerned with the study of these geodesic $f$-divergences in terms of their geometry and statistics.  We show that their  Hessians   generate monotone Riemannian metrics, a theory that was  developed for quantum divergences through the work on quasi-entropies, noncommutative $\alpha$-divergences and monotone metrics by many authors including Petz, Hasegawa, Lesniewski, Ruskai, etc. \cite{petz1986quasi,hasegawa1993,hasegawa1997non,petzHasegawa1996,lesniewskiRuskai1999,petz1994canonical,petz1996monotone,hiai-2012}.  For the generator $f(x)=x\log(x)$, the standard endpoint yields the Bogoliubov--Kubo--Mori metric, whose origins lie in the Kubo--Mori canonical correlation and quantum statistical mechanics \cite{kubo1957,mori1965,petzToth1993,petz1994canonical}, and  the maximal endpoint yields the right-logarithmic-derivative metric familiar from multiparameter quantum estimation \cite{yuenLax1973,holevo2011,petz1996monotone}.  A known Hasegawa--Petz path connects the metric endpoints by varying the scalar generator while keeping the standard relative modular operator fixed \cite{hasegawa1993,hasegawa1997non,petzHasegawa1996,besenyei2012hasegawa}.  The first question addressed here is therefore what local information geometry is produced when, instead, the generator is fixed and the relative operator moves along $t\mapsto\Gamma_t$.

The two endpoints also have established statistical interpretations.  Classical binary inference is organized by the likelihood ratio and its log-moment-generating function, whose derivatives and variational transforms control relative entropy, information variance, Stein asymptotics, Chernoff information and large deviations \cite{chernoff1952,coverThomas2006}.  At $t=0$, the spectral measure of $\Delta_{\rho\mid\sigma}$ gives the Nussbaum--Szko\l a distributions, and
\begin{equation*}
 \log\left\langle\Omega_\sigma,
 \Delta_{\rho\mid\sigma}^{\,s}\Omega_\sigma\right\rangle_{\mathrm{HS}}
 =\log\tr\rho^s\sigma^{1-s}
\end{equation*}
is the relative-modular moment function used in quantum hypothesis testing and  statistics \cite{nussbaumSzkola2009,jaksicEtAl2012}.  

The corresponding exact classicalization of standard quantum $f$-divergences is made explicit in works like \cite{matsumoto2018maximal} or \cite{androulakisJohn2024}.  At $t=1$, the spectral decomposition of $\sigma^{-1/2}\rho\sigma^{-1/2}$ gives Matsumoto's optimal reverse test for the maximal $f$-divergence \cite{matsumoto2018maximal}, with explicit realizing distributions described also in \cite{lanierBeguinotRioul2025}.  The second question of the paper is whether every intermediate $\Gamma_t$ similarly determines a canonical binary experiment, and whether one surface can organize its complete likelihood calculus. We give an affirmative answer to this question.

Finally, the same construction has a natural global geometry.  The cone of strictly positive operators, equipped with the affine-invariant metric, is a nonpositively curved symmetric space whose geodesics are weighted operator geometric means \cite{corachPortaRecht1993,bhatiaHolbrook2006,bhatia2007}.  Busemann functions are the associated asymptotic height functions: their level sets are horospheres and their sublevel sets are horoballs \cite{bridsonHaefliger1999}.  Logarithmic expectations of the form
\begin{equation*}
 A\longmapsto\log\langle v,Av\rangle
\end{equation*}
are known to occur as rank-one Busemann, or Kempf--Ness, functions on the positive cone, and their Hessians have an expectation--variance interpretation \cite[Section 6]{hiraiNieuwboerWalter2026}.  Busemann functions have also appeared directly in statistical models through Poisson kernels and Fisher geometry \cite{itohShishido2008,itohSatoh2015}, while the Chernoff point is classically described through likelihood-ratio exponential families \cite{nielsen2022chernoff}.  The third question addresed here is whether these established geometric and statistical structures coincide for the particular rays generated by $\Gamma_t$.

\subsection*{Main results}

In this work we are concerned with the geometric and statistical applications of the geodesic divergences studied in \cite{capel2026geodesic}.  Figure \ref{fig:intro-three-contributions} summarizes the main results and  the organization of the paper. 

\begin{figure}[ht]
\centering
\begin{tikzpicture}[
  box/.style={draw=black,rounded corners=2pt,line width=0.8pt,align=center,inner sep=7pt,text=black},
  arr/.style={-{Latex[length=2.6mm]},draw=black,line width=0.9pt},
  lab/.style={font=\scriptsize,fill=white,inner sep=1.5pt,text=black}
]
\node[box,text width=7.0cm] (source) at (0,0) {
 \textbf{Geodesic quantum $f$-divergences}\\[1mm]
 $\Gamma_t=\Delta_{\rho\mid\sigma}\#_t\widehat\Delta_{\rho\mid\sigma}$,\qquad
 $D_f^t(\rho\Vert\sigma)=\langle\Omega_\sigma,f(\Gamma_t)\Omega_\sigma\rangle_{\mathrm{HS}}$
};
\node[box,text width=4.25cm] (metric) at (-5.0,-3.7) {
 \textbf{Local information\\ metric}\\[1mm]
 Hessian $g_{\rho,t}^{(f)}$\\
 representing function $m_{f,t}$
};
\node[box,text width=4.25cm] (stats) at (0,-3.7) {
 \textbf{Likelihood-statistics\\ surface}\\[1mm]
 canonical experiment $(p_t,q_t)$\\
 $\Psi_{\rho,\sigma}(t,s)$
};
\node[box,text width=4.25cm] (busemann) at (5.0,-3.7) {
 \textbf{Busemann and\\ horospherical geometry}\\[1mm]
 $b_\sigma(\Gamma_t^s)=\Psi_{\rho,\sigma}(t,s)$
};
\draw[arr] (source.south west) -- node[lab,sloped,above,pos=.38] {Hessian geometry} (metric.north);
\draw[arr] (source.south) -- node[lab,right] {spectral decomposition} (stats.north);
\draw[arr] (source.south east) -- node[lab,sloped,above,pos=.34] {affine-invariant geometry} (busemann.north);
\end{tikzpicture}
\caption{The three  applications of the  geodesic $f$-divergences.  Section \ref{sec:Section1} studies their local Hessian geometry, Section \ref{sec:Section_2} its  statistics, and Section \ref{sec:horospherical-likelihood-geometry} its Busemann and horospherical geometry.}
\label{fig:intro-three-contributions}
\end{figure}

\paragraph{1. Local information metric.}

Section 2 computes the Hessian of $D_t^f$ 
for every non-affine operator-convex generator $f$. If an invertible quantum state $\rho$ has spectral decomposition $\rho=\sum_i p_i \vert v_i \rangle \langle v_i\vert $, we show in Theorem \ref{theo:Hessian_metric_general} that $D_t^f$
 induces a metric of the form
\begin{equation*}
 g_{\rho,t}^{(f)}(X,Y)
 =\sum_{i,j}
 \frac{f''(1)}{p_jm_{f,t}(p_i/p_j)}\,
 \overline{X}_{ij}Y_{ij},
\end{equation*}
with an explicit normalized symmetric representing function
\begin{equation*}
 m_{f,t}(r)
 =\frac{f''(1)(r^{1-t}-1)^2}
 {\widetilde f(r^{1-t})+r^{1-2t}\widetilde f(r^{t-1})},
\end{equation*}
where $\widetilde f(x)=f(x)-f(1)-f'(1)(x-1)$. In addition, we also show in Proposition \ref{prop:monotone_metric} that this is a monotone metric. This result extends the work of \cite[Section 2.3]{hiai-2012} to every $t\in [0,1]$. Using continuity of the interpolation in $t$, we recover Matsumoto’s result  that the Hessian at $t=1$ is universally the RLD metric and depends only on $f''(1)$ \cite{matsumoto2018maximal}. In addition,  for $f(x)=x\log x$, the metrics form a monotone and convex path from BKM to RLD.  The same parameter that locates $\Gamma_t$ on the global affine-invariant geodesic can be recovered from the local metric alone through the boundary index
\begin{equation*}
 \lim_{r\downarrow0}\frac{\log m_t(r)}{\log r}=t.
\end{equation*}
The comparison with the Hasegawa--Petz interpolation \cite{petzHasegawa1996,hasegawa1997non} shows that while their interpolation fixes the relative modular operator and changes the operator convex function $f_t$, the interpolation presented in this work fixes the function and interpolates the relative modular operator.

\paragraph{2. Canonical likelihood statistics.}
Section \ref{sec:Section_2} proves the quantum-to-classical step for statistics.  If we write the spectral decomposition $\Gamma_t=\sum_z\lambda_{t,z}E_{t,z}$, 
then the spectral measurement on $\Omega_\sigma$ and on
$ \Omega_{\rho,t}:=\Gamma_t^{1/2}\Omega_\sigma$
produces distributions $q_t$ and $p_t$, respectively; moreover, $\Omega_{\rho,t}$ purifies $\rho$ and
\begin{equation*}
 \frac{p_t(z)}{q_t(z)}=\lambda_{t,z}.
\end{equation*}
Thus the spectrum of the quantum relative operator is exactly the likelihood ratio of a canonical finite classical experiment, and
\begin{equation*}
 D_f^t(\rho\Vert\sigma)=D_f(p_t\Vert q_t).
\end{equation*}
 We also introduce the two-parameter surface
\begin{equation}\label{eq:intro-likelihood-surface}
 \Psi_{\rho,\sigma}(t,s)
 :=\log\left\langle\Omega_\sigma,
 \Gamma_t^s\Omega_\sigma\right\rangle_{\mathrm{HS}}
 =\log\sum_zp_t(z)^sq_t(z)^{1-s}\, ,
\end{equation}
which corresponds to the  log-likelihood cumulant of $(p_t,q_t)$.  Its endpoint slope and curvature give the relative entropy and information variance, its minimum gives the negative Chernoff information, and its Legendre transform gives the likelihood large-deviation tradeoff.  This follows by the well-known classical results.

\paragraph{3. Busemann and horospherical geometry.}
Section \ref{sec:horospherical-likelihood-geometry} provides a geometric interpretation of the likelihood surface in the affine-invariant positive
cone.  Logarithmic quadratic forms of the type $A\mapsto\log\langle v,Av\rangle$ are known to admit a Busemann interpretation in the positive-definite cone; see, for example, \cite[Section 6.1]{hiraiNieuwboerWalter2026}.  Specializing to $v=\Omega_\sigma$, we use the normalized Busemann function
\begin{equation*}
 b_\sigma(A)=\log\langle\Omega_\sigma,A\Omega_\sigma\rangle_{\mathrm{HS}}.
\end{equation*}
In our setting, the two main special properties are
\begin{equation*}
 b_\sigma(\Gamma_t)=0,
 \qquad
 b_\sigma(\Gamma_t^s)=\Psi_{\rho,\sigma}(t,s).
\end{equation*}
Hence the entire $t$-geodesic lies in one normalization horosphere, while for fixed $t$ the radial geodesic $s\mapsto\Gamma_t^s$ moves through the horospherical levels.  The tilted mean log-likelihood becomes its signed normal velocity, the likelihood variance becomes the Hessian of the Busemann height along the radial curve, and the Chernoff point is the unique point of maximal depth in the normalization horoball.  In particular, the Chernoff information is the distance from this point to the normalization horosphere, see Figure \ref{fig:horosphere}.  

\subsection*{Notation and preliminaries}

Throughout this work, $\mathcal H$ will denote a finite dimensional Hilbert space and $\mathcal B_2(\mathcal H)$ denotes the Hilbert--Schmidt space with inner product
\begin{equation*}
 \langle X,Y\rangle_{\mathrm{HS}}:=\operatorname{tr}(X^*Y).
\end{equation*}
For $A,B\in\mathcal B(\mathcal H)$, the operators $L_A$ and $R_B$ act on $\mathcal B_2(\mathcal H)$ by
\begin{equation*}
 L_A(X)=AX,
 \qquad
 R_B(X)=XB.
\end{equation*}
The trace on $\mathcal H$ is denoted by $\operatorname{tr}$, while $\operatorname{tr}_{\mathcal B_2(\mathcal H)}$ denotes the operator trace on the Hilbert--Schmidt space when a distinction is needed.  All logarithms are natural.

Let $\mathbb P(\mathcal{H})$ denote the cone of strictly positive self-adjoint operators on $\mathcal{H}$.  Its tangent space at every point is the real vector space of self-adjoint operators, and the affine-invariant metric is
\begin{equation}\label{eq:metric}
 g_A(X,Y)
 :=\operatorname{tr}
 (A^{-1}XA^{-1}Y)
 =\left\langle A^{-1/2}XA^{-1/2},
 A^{-1/2}YA^{-1/2}\right\rangle_{\mathrm{HS}}.
\end{equation}
The cone is convex and therefore simply connected.  With this metric it is complete and has nonpositive sectional curvature, hence is a finite-dimensional Hadamard manifold; see \cite[Chapter 6]{bhatia2007}, together with \cite{bhatia2003exponential,bhatiaHolbrook2006,corachPortaRecht1993}.  Its distance and unique geodesic are
\begin{equation*}
 d_2(A,B)
 =\left\|\log(A^{-1/2}BA^{-1/2})\right\|_{\mathrm{HS}},
\end{equation*}
and
\begin{equation*}
 A\#_sB
 =A^{1/2}(A^{-1/2}BA^{-1/2})^sA^{1/2},
 \qquad 0\le s\le1.
\end{equation*}

\section{Monotone metrics and geodesic $f$-divergences}\label{sec:Section1}

\subsection{The Hessian  metric}

The set of invertible density matrices is an open manifold relative to the full manifold of states, and its tangent space at $\rho$ is given by \cite{petz2008book}
\begin{equation*}
 T_\rho\cS(\cH)
 =\{X\in\cB(\cH):X=X^*,\ \tr X=0\}.
\end{equation*}
For a quantum state $\rho$ and traceless self-adjoint $X,Y \in T_{\rho}(\mathcal{S}(\mathcal{H}))$, it is possible to introduce a real bilinear form using the Hessian of the geodesic divergences, i.e.
\begin{equation}\label{eq:Definition_metric_Hessian}
    g_{\rho,t}^{(f)}(X,Y)
    :=-\left.\frac{\partial^2}{\partial s\,\partial u}
    D_f^t(\rho+sX\Vert \rho+uY)\right|_{s=u=0},
\end{equation}
where $f:(0,\infty) \to  \mathbb{R}$ is a non-affine operator convex function. This is the usual construction of a metric from a smooth divergence in information geometry \cite{eguchi1992geometry,amariNagaoka2000}; in the quantum setting, Hessians of quasi-entropies and relative entropies were studied in \cite{hasegawa1993,hasegawa1997non,lesniewskiRuskai1999,petz1986quasi,petz1994canonical,petzHasegawa1996}. Notice that the affine part of $f$ does not contribute to the Hessian and, for this reason, it will be useful to denote
\begin{equation*}
    \widetilde{f}(x)=f(x)-f(1)-f'(1)(x-1)\, ,
\end{equation*}
so that $\widetilde{f}(1)=0$, $\widetilde{f}'(1)=0$ and $\widetilde{f}''(1)=f''(1)$.

  Now, fix a spectral decomposition
\begin{equation*}
    \rho=\sum_{i=1}^d p_i E_{ii},
    \qquad E_{ij}:=|v_i\rangle\langle v_j|.
\end{equation*}
For $i<j$, set
\begin{equation*}
    A_{ij}:=E_{ij}+E_{ji},
     \qquad
B_{ij}:=\mathrm{i}(E_{ij}-E_{ji}).
\end{equation*}
   Then  we can decompose $T_{\rho}(\mathcal{S}(\mathcal{H}))\cong \mathcal{D}_0\oplus \text{span}_{\mathbb{R}}\{ A_{ij},B_{ij}\}$, where  $\mathcal{D}_0$ corresponds to real diagonal matrices with trace zero.

Although $g_{\rho,t}^{(f)}$ is defined on the real vector space of self-adjoint tangent vectors, we use the same symbol for its canonical sesquilinear extension to the complexification;  this extension is conjugate-linear in the first entry and linear in the second. 

\begin{remark}
    Although $g_{\rho,t}^{(f)}$ is introduced above on the manifold of faithful
density matrices, the defining expression for $D_f^t(A\Vert B)$ is well-defined
for arbitrary strictly positive matrices $A,B>0$. Accordingly, whenever the
base point $A>0$ is not normalized, we use the same notation for the Hessian
form on the positive cone,
\begin{equation*}
 g_{A,t}^{(f)}(X,Y)
 :=
 -\left.
 \frac{\partial^2}{\partial s\,\partial u}
 D_f^t(A+sX\Vert A+uY)
 \right|_{s=u=0},
\end{equation*}
where $X,Y$ are self-adjoint matrices and $s,u$ are sufficiently small so that
$A+sX$ and $A+uY$ remain strictly positive.
\end{remark}

The next lemma records the orthogonality relations among the above sectors.

\begin{lemma}\label{lemma:Orthogonality_Relations}

Let $\rho=\sum_i p_iE_{ii}>0$, and  let  $D$ be  a traceless diagonal self-adjoint matrix in the eigenbasis of $\rho$. Then, for $i<j$ and $k<l$,
\begin{subequations}
\begin{align}
     g_{\rho,t}^{(f)}(D,A_{ij})
&=g_{\rho,t}^{(f)}(D,B_{ij})=0,\label{eq:Orthogonality_D}\\
         g_{\rho,t}^{(f)}(A_{ij},A_{kl})
   &=g_{\rho,t}^{(f)}(B_{ij},B_{kl})
    =g_{\rho,t}^{(f)}(A_{ij},B_{kl})=0,
    \qquad (i,j)\neq(k,l),\label{eq:orthogonality_pairs}\\
g_{\rho,t}^{(f)}(A_{ij},B_{ij})&=0,\label{eq:Orthogonality_A_B}\\
         g_{\rho,t}^{(f)}(A_{ij},A_{ij})
   &=g_{\rho,t}^{(f)}(B_{ij},B_{ij}).\label{eq:Same_norm_AB}
\end{align}
\end{subequations}
\end{lemma}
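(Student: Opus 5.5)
Our plan is to use symmetry rather than compute the Hessian explicitly. The key observation is that $D_f^t$ is unitarily covariant: for any unitary $U$,
\[
\Gamma_t(U\rho U^*,U\sigma U^*)=\mathrm{Ad}_U\,\Gamma_t(\rho,\sigma)\,\mathrm{Ad}_U^{*},
\qquad \mathrm{Ad}_U(X)=UXU^*,
\]
which is a unitary on $\cB_2(\cH)$. This holds because $L_{U\rho^{1-t}U^*}=\mathrm{Ad}_U L_{\rho^{1-t}}\mathrm{Ad}_U^*$, and similarly for the right multiplication by $UK_tU^*$. Moreover $\Omega_{U\sigma U^*}=\mathrm{Ad}_U\Omega_\sigma$. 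By functional calculus it follows that $D_f^t(U\rho U^*\Vert U\sigma U^*)=D_f^t(\rho\Vert\sigma)$. Substituting into the definition \eqref{eq:Definition_metric_Hessian}, we get for every unitary $U$ commuting with $\rho$ the invariance
\[
g_{\rho,t}^{(f)}(UXU^*,UYU^*)=g_{\rho,t}^{(f)}(X,Y).
\]
This also holds for the sesquilinear extension, since $X\mapsto UXU^*$ is real-linear on self-adjoints and extends complex-linearly.

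We will apply this with diagonal unitaries $U_\theta=\sum_k e^{\mathrm i\theta_k}E_{kk}$, which commute with $\rho$. They act on matrix units by $U_\theta E_{ij}U_\theta^*=e^{\mathrm i(\theta_i-\theta_j)}E_{ij}$. By bilinearity we pass to the complexified form and write $A_{ij}=E_{ij}+E_{ji}$ and $B_{ij}=\mathrm i(E_{ij}-E_{ji})$. Invariance gives
\[
g(E_{ab},E_{cd})=e^{\mathrm i(-(\theta_a-\theta_b)+(\theta_c-\theta_d))}\,g(E_{ab},E_{cd}),
\]
so $g(E_{ab},E_{cd})=0$ unless $e_c-e_d=e_a-e_b$ as vectors in $\mathbb Z^d$. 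Since the $\theta_k$ are arbitrary, the only surviving pairs are these:
\begin{itemize}
\item $a=b$ and $c=d$, which is the diagonal sector;
\item $(c,d)=(a,b)$ with $a\neq b$.
\end{itemize}
We then read off the claims.
\begin{itemize}
\item \eqref{eq:Orthogonality_D}: $D$ is a combination of $E_{kk}$, while $A_{ij}$ and $B_{ij}$ are combinations of $E_{ij}$ and $E_{ji}$ with $i\neq j$, so every pairing vanishes.
\item \eqref{eq:orthogonality_pairs}: for $(i,j)\neq(k,l)$ with $i<j$ and $k<l$, none of $E_{ij},E_{ji}$ matches $E_{kl}$ or $E_{lk}$.
\end{itemize}

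For \eqref{eq:Orthogonality_A_B} and \eqref{eq:Same_norm_AB}, set $\alpha=g(E_{ij},E_{ij})$ and $\beta=g(E_{ji},E_{ji})$; the cross terms $g(E_{ij},E_{ji})$ vanish by the above. Expanding sesquilinearly gives
\[
g(A_{ij},A_{ij})=\alpha+\beta,\qquad g(B_{ij},B_{ij})=|\mathrm i|^2(\alpha+\beta)=\alpha+\beta,
\]
which proves \eqref{eq:Same_norm_AB}. Similarly,
\[
g(A_{ij},B_{ij})=\mathrm i(\alpha-\beta).
\]
The remaining point, which is the main obstacle, is to show $\alpha=\beta$. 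We would use real symmetry of $g$ on self-adjoint arguments. The form defined by \eqref{eq:Definition_metric_Hessian} is real-valued on self-adjoint $X,Y$, because $D_f^t$ is real. Its complexification is therefore Hermitian, so $g(A_{ij},B_{ij})=\overline{g(B_{ij},A_{ij})}$ and $g(A_{ij},B_{ij})$ is real. Then $\mathrm i(\alpha-\beta)$ is real. Since $\alpha$ and $\beta$ are both real, as diagonal entries of a Hermitian form, this forces $\alpha=\beta$ and hence $g(A_{ij},B_{ij})=0$.

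If this real-valuedness argument turns out to need more care, we would fall back on symmetry of the real Hessian, $g(X,Y)=g(Y,X)$. We would try to obtain it from the transpose symmetry $D_f^t(\rho^T\Vert\sigma^T)=D_f^t(\rho\Vert\sigma)$ taken in the eigenbasis of $\rho$. Transposition fixes $A_{ij}$, sends $B_{ij}\mapsto -B_{ij}$, and fixes $\rho$, so it gives $g(A_{ij},B_{ij})=-g(A_{ij},B_{ij})$ directly. Checking that transposition preserves $\Gamma_t$ up to a unitary conjugation of $\cB_2(\cH)$ is the delicate step. It should follow because $X\mapsto X^T$ intertwines $L_A$ with $R_{A^T}$, and $\Omega_\sigma^T=\Omega_{\sigma^T}$.
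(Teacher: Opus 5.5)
Your treatment of \eqref{eq:Orthogonality_D}, \eqref{eq:orthogonality_pairs} and \eqref{eq:Same_norm_AB} is correct. It is essentially the paper's diagonal-unitary phase argument; your $\alpha+\beta$ computation is a tidy variant of the paper's $\varphi_i-\varphi_j=\pi/2$ rotation.

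\textbf{The gap is in \eqref{eq:Orthogonality_A_B}.} The inference ``$g$ is real on self-adjoint arguments, so its complexification is Hermitian, so $\alpha,\beta\in\mathbb R$'' is false. The sesquilinear extension of a real bilinear form is Hermitian exactly when the real form is symmetric, and real-valuedness says nothing about symmetry. The canonical extension always satisfies $g(Z^*,W^*)=\overline{g(Z,W)}$. Hence $\beta=\overline{\alpha}$, and $g(A_{ij},B_{ij})=\mathrm i(\alpha-\beta)=-2\operatorname{Im}\alpha$ is automatically real. So ``$\alpha$ is real'' is equivalent to the very statement you are proving, and the argument is circular. Phase invariance alone only yields $g(A_{ij},A_{ij})=g(B_{ij},B_{ij})$ and $g(B_{ij},A_{ij})=-g(A_{ij},B_{ij})$. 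The antisymmetric part $g(A_{ij},B_{ij})$ needs an extra input.

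The paper supplies this input through symmetry of $g$ on the real tangent space. With $\varphi_i-\varphi_j=\pi/2$,
\[
g(A_{ij},B_{ij})=g(B_{ij},-A_{ij})=-g(B_{ij},A_{ij})=-g(A_{ij},B_{ij}).
\]
Symmetry is not automatic for the mixed derivative $-\partial_s\partial_u$, but it follows from the second-variation identity. Since $D_{\widetilde f}^t\geq0$ and $D_{\widetilde f}^t(\sigma\Vert\sigma)=0$ for every $\sigma>0$, the first derivative in the second argument vanishes on the diagonal. Differentiating that identity along the diagonal gives $g_{\rho,t}^{(f)}(X,Y)=\partial_u\partial_{u'}D_{\widetilde f}^t(\rho\Vert\rho+uX+u'Y)|_{u=u'=0}$, which is symmetric in $X,Y$.

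\textbf{Your fallback.} This is a genuinely different and valid route once repaired. A symmetry of $D_f^t$ that fixes $\rho$ and $A_{ij}$ but negates $B_{ij}$ gives $g(A_{ij},B_{ij})=-g(A_{ij},B_{ij})$ directly, with no appeal to symmetry of $g$. It does not give $g(X,Y)=g(Y,X)$ in general, nor does it need to.

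The mechanism you propose fails, however. The map $X\mapsto X^T$ intertwines $L_A$ with $R_{A^T}$ and $R_B$ with $L_{B^T}$. It therefore sends $\Gamma_t(\rho,\sigma)$ to $L_{K_t^T}R_{(\rho^T)^{1-t}}$, with left and right exchanged, and not to $\Gamma_t(\rho^T,\sigma^T)=L_{(\rho^T)^{1-t}}R_{K_t^T}$.

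The right tool is entrywise complex conjugation $\mathcal C X=\overline X$ in the eigenbasis of $\rho$. It has the following properties:
\begin{itemize}
\item it is antiunitary on $\cB_2(\cH)$;
\item it satisfies $\mathcal C L_A\mathcal C^{-1}=L_{\overline A}$ and $\mathcal C R_B\mathcal C^{-1}=R_{\overline B}$;
\item it commutes with real functional calculus;
\item it maps $\Omega_\sigma$ to $\Omega_{\overline\sigma}$.
\end{itemize}
Hence $D_f^t(\overline\rho\Vert\overline\sigma)=\overline{D_f^t(\rho\Vert\sigma)}=D_f^t(\rho\Vert\sigma)$. Since $\overline\rho=\rho^T$ for self-adjoint $\rho$, this is exactly the transpose invariance you wanted. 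Equivalently, follow your transposition by the antiunitary $X\mapsto X^*$, which swaps $L$ and $R$ back and fixes the self-adjoint vector $\Omega_{\sigma^T}$. With $\overline\rho=\rho$, $\overline{A_{ij}}=A_{ij}$ and $\overline{B_{ij}}=-B_{ij}$, this closes \eqref{eq:Orthogonality_A_B}.
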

\begin{proof}
The proof is given in   Lemma \ref{lemma:proof-orthogonality-relations} in the Appendix.
\end{proof}

    The previous lemma shows that the metric $g_{\rho,t}^{(f)}$ preserves the orthogonality relations of  the  decomposition  of the tangent space into  the traceless diagonal  sector and the two-dimensional real sectors spanned by $A_{ij}$ and $B_{ij}$. It is therefore enough to compute the  metric on one vector $A_{ij}$ in each off-diagonal sector. The next lemma drastically simplifies this computation: the corresponding coefficient depends on $(p_i,p_j)$ only through the scale $p_j^{-1}$ and the ratio $p_i/p_j$.

\begin{lemma}\label{lemma:Metric_In_Selfadjoint_Basis}
     Let $\rho\in \cS(\cH)$ be an invertible quantum state with eigenvalues  $(p_i)_i$ and corresponding eigenvectors $(v_i)_i$. Let $E_{ij}=\vert v_i \rangle \langle v_j \vert$ and define the traceless self-adjoint matrix $A_{ij}=E_{ij}+E_{ji}$. Then, for any operator convex function $f:(0,\infty)\to \mathbb{R}$ with $f(1)=0$,
\begin{equation}\label{eq:Equation_Statement_Lemma_Metric_Basis}
    g_{\rho,t}^{(f)}(A_{ij},A_{ij})=\frac{1}{p_j}g_{P_r,t}^{(f)}(A,A)\, ,
\end{equation}
where, for $r=p_i/p_j$,
\begin{equation*}
         P_r=\begin{pmatrix}
            r & 0\\
             0 & 1
         \end{pmatrix},
         \qquad
         A=\begin{pmatrix}
            0 & 1\\
            1 & 0
         \end{pmatrix}.
\end{equation*}
\end{lemma}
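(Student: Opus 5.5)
We will prove the lemma using two structural properties of $D_f^t$ on the positive cone: a \emph{homogeneity} property and a \emph{localization} property for perturbations supported on a two-dimensional block.

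\textbf{Step 1 (homogeneity).} For $c>0$ and $A,B>0$ we have
\[
\Gamma_t(cA,cB)=L_{c^{1-t}A^{1-t}}R_{c^{t-1}K_t(A,B)}=\Gamma_t(A,B),
\]
since $K_t(cA,cB)=c^{-1}B^{-1/2}c^tA^tB^{-1/2}$. Also $\Omega_{cB}=c^{1/2}\Omega_B$. It follows that
\[
D_f^t(cA\Vert cB)=c\,D_f^t(A\Vert B).
\]
Applying this with $c=p_j^{-1}$ to $(\rho+sX,\rho+uY)$ and differentiating twice gives
\[
g^{(f)}_{\rho,t}(X,Y)=p_j\,g^{(f)}_{\rho/p_j,t}(X/p_j,Y/p_j)=p_j^{-1}g^{(f)}_{\rho/p_j,t}(X,Y),
\]
by bilinearity. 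So it suffices to show that $g^{(f)}_{\rho',t}(A_{ij},A_{ij})=g^{(f)}_{P_r,t}(A,A)$, where $\rho'=\rho/p_j$ has eigenvalues $p_k/p_j$, with $r$ at position $i$ and $1$ at position $j$. Unitary invariance, i.e.\ $\Gamma_t$ transforms by $L_UR_{U^*}$ and $\Omega$ accordingly, lets us work in the eigenbasis $(v_k)$.

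\textbf{Step 2 (localization).} Set $\mathcal{H}=\mathcal{H}_{ij}\oplus\mathcal{H}_{ij}^\perp$, where $\mathcal{H}_{ij}=\mathrm{span}\{v_i,v_j\}$. The perturbations $\rho'+sA_{ij}$ and $\rho'+uA_{ij}$ are block diagonal, of the form $M_s\oplus C$ and $M_u\oplus C$, where $C$ is the fixed diagonal block carrying the other eigenvalues. For block diagonal $A=A_1\oplus A_2$ and $B=B_1\oplus B_2$, the operators $L_{A^{1-t}}$ and $R_{K_t}$ preserve the decomposition of $\mathcal{B}_2(\mathcal{H})$ into the four corners $\mathcal{B}_2(\mathcal{H}_a,\mathcal{H}_b)$. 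The vector $\Omega_B=B_1^{1/2}\oplus B_2^{1/2}$ lives only in the two diagonal corners, and on the $(a,a)$ corner $\Gamma_t$ acts as $\Gamma_t(A_a,B_a)$. Hence
\[
D_f^t(A\Vert B)=D_f^t(A_1\Vert B_1)+D_f^t(A_2\Vert B_2).
\]
The second summand is independent of $(s,u)$, so
\[
g^{(f)}_{\rho',t}(A_{ij},A_{ij})=g^{(f)}_{\rho'|_{\mathcal{H}_{ij}},t}(A,A),
\]
and $\rho'|_{\mathcal{H}_{ij}}=\mathrm{diag}(r,1)=P_r$ after ordering the basis as $(v_i,v_j)$. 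This yields \eqref{eq:Equation_Statement_Lemma_Metric_Basis}.

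The main obstacle is justifying the block additivity carefully. One has to check that $f(\Gamma_t)$ respects the corner decomposition; this holds because $\Gamma_t$ is positive (a product of commuting positive operators $L$ and $R$) and reduced by each corner, so the functional calculus acts cornerwise. The homogeneity step only needs the cancellation $c^{1-t}c^{t-1}=1$. The hypothesis $f(1)=0$ is not needed for these identities, but it is harmless.
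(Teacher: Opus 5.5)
Your proof is correct. It rests on the same two structural facts as the paper: cornerwise additivity of $D_f^t$ for block-diagonal pairs (via reducing subspaces of $\Gamma_t$), and the degree-one homogeneity $D_f^t(cA\Vert cB)=c\,D_f^t(A\Vert B)$. It applies them along a different path.

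The paper's main argument varies only the second entry, along the rotation $\rho_\theta=U_\theta\rho U_\theta^*$ with $U_\theta=R_\theta\oplus I$. It invokes the second-variation identity $g(X,X)=\frac{\dd^2}{\dd\theta^2}D_f^t(\eta\Vert\sigma_\theta)|_{\theta=0}$ to obtain $(p_i-p_j)^2g_{\rho,t}^{(f)}(A_{ij},A_{ij})=(p_i-p_j)^2g_{P,t}^{(f)}(A,A)$, and then cancels the prefactor. This forces a separate treatment of $p_i=p_j$, where the rotation path is constant. In that degenerate case the paper falls back on exactly your argument: perturb both entries by $sA_{ij}$ and $uA_{ij}$ and take the mixed derivative directly from the definition.

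You use this direct perturbation uniformly. So your version needs neither the second-variation identity nor a case split. As you note, it also does not need $f(1)=0$, since the complementary block contributes only the constant $D_f^t(C\Vert C)$.

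What the paper gains from the rotation route is that the same computation also yields $(r-1)^2g_{P,t}^{(f)}(A,A)=\frac{\dd^2}{\dd\theta^2}D_f^t(P\Vert P_\theta)|_{\theta=0}$. That is the identity the main theorem then evaluates through the Taylor lemma. With your proof, that identity would have to be derived separately at that point.
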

\begin{proof}
The proof is given in Lemma \ref{lemma:proof-two-level-reduction} in the Appendix.
\end{proof}

We are now ready to state the main result of this section. Before computing the Hessian explicitly, recall that monotone quantum information metrics admit a complete characterization in terms of normalized symmetric operator-monotone functions.  More precisely, if $\rho=\sum_i p_i\vert v_i\rangle\langle v_i\vert$, every monotone metric can be written in the form
\begin{equation}
g_\rho(X,Y)
=
\sum_{i,j}
\frac{\overline{X}_{ij}Y_{ij}}
{p_jm(p_i/p_j)},
\end{equation}
where $m:(0,\infty)\to \mathbb{R}$ is operator monotone, $m(1)=1$, and $m(r)=rm(r^{-1})$ \cite{morozovaChentsov1991,petz1996monotone,lesniewskiRuskai1999}.  Notice that the function $m$ appearing in this classification is different from the operator-convex function $f$ defining the divergence $D_f^t$. Furthermore, for the case of $f$-divergences the explicit formula was shown in \cite[eq. (14)]{hiai-2012} using the results obtained in  \cite{lesniewskiRuskai1999}. The next theorem computes the metric for the interpolated divergences recovering the expression in \cite{hiai-2012} for the case $t=0$.

\begin{theorem}\label{theo:Hessian_metric_general}
     Let $f:(0,\infty)\to \mathbb{R}$ be a non-affine operator convex function and define $\kappa_f=f''(1)>0$. For the invertible quantum state $\rho  \in \mathcal{S}(\mathcal{H})$, write its spectral decomposition $\rho=\sum_i p_i \vert v_i \rangle \langle  v_i  \vert$, where $(v_i)_i$ is an orthonormal basis of eigenvectors of $\rho$. Then, for $X,Y\in T_\rho\cS(\cH)$, the Hessian of $D_f^t$ has the form
    \begin{equation}\label{eq:Formula_metric_general}
         g_{\rho,t}^{(f)}(X,Y)
        =\sum_{i,j}\frac{\kappa_f}{p_j m_{f,t}(p_i/p_j)}
        \overline{X}_{ij}Y_{ij},
    \end{equation}
where, for $0\leq t <1$,
\begin{equation}\label{eq:Definition_mft}
    m_{f,t}(r)
    =\frac{\kappa_f(r^{1-t}-1)^2}
     {\widetilde{f}(r^{1-t})+r^{1-2t}\widetilde{f}(r^{t-1})}\, ,
    \qquad m_{f,t}(1)=1\, .
\end{equation}
The function $m_{f,t}$ further satisfies
\begin{equation}\label{eq:Symmetry_m}
 m_{f,t}(r)=r\,m_{f,t}(r^{-1}),
\end{equation}
and all removable singularities are filled in by continuity.
 \end{theorem}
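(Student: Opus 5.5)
Since the affine part of $f$ does not contribute to the Hessian, we may replace $f$ by $\widetilde f$ throughout, so that $f(1)=f'(1)=0$ and $\kappa_f=\widetilde f''(1)$. We use Lemma \ref{lemma:Orthogonality_Relations} to split $T_\rho\cS(\cH)$ into the traceless diagonal sector $\mathcal D_0$ and the real two-dimensional sectors $\mathrm{span}_{\mathbb R}\{A_{ij},B_{ij}\}$. These sectors are mutually $g^{(f)}_{\rho,t}$-orthogonal, and inside each off-diagonal sector $A_{ij}\perp B_{ij}$ with equal norms. Therefore it suffices to do three things. First, show that on $\mathcal D_0$ the metric equals $\kappa_f\sum_i |D_{ii}|^2/p_i$; this matches \eqref{eq:Formula_metric_general} because $m_{f,t}(1)=1$. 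Second, compute the single number $g^{(f)}_{\rho,t}(A_{ij},A_{ij})$. Third, check that the resulting Hermitian form agrees with \eqref{eq:Formula_metric_general} after expanding $X=\sum X_{ij}E_{ij}$.

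For the third step, note that \eqref{eq:Formula_metric_general} evaluated at $A_{ij}$ gives
\[
\kappa_f\Big(\frac{1}{p_j m(p_i/p_j)}+\frac{1}{p_i m(p_j/p_i)}\Big)=\frac{2\kappa_f}{p_j m(p_i/p_j)},
\]
where the equality uses the symmetry \eqref{eq:Symmetry_m}. So both the consistency check and the off-diagonal computation reduce to the same scalar.

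\textbf{Diagonal sector.} For commuting perturbations $\rho+sD$ and $\rho+uD'$, all operators are diagonal in the basis $(v_i)$. Then $\Gamma_t$ acts on $E_{kl}$ by multiplication by $a_k^{1-t}b_l^{-1}a_l^{t}$, with $a=\rho+sD$ and $b=\rho+uD'$. Pairing with $\Omega_\sigma$ only involves the entries $E_{kk}$, whose multiplier is $a_k/b_k$. Hence $D_f^t$ reduces to the classical $f$-divergence $\sum_k b_k f(a_k/b_k)$, and its mixed Hessian is the classical Fisher form times $f''(1)$. This is the $i=j$ part of the formula.

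\textbf{Off-diagonal sector.} By Lemma \ref{lemma:Metric_In_Selfadjoint_Basis}, we only need $g^{(f)}_{P_r,t}(A,A)$ in dimension two. We compute it by differentiating
\[
F(s,u)=\langle (P_r+uA)^{1/2}, f(\Gamma_t(P_r+sA,P_r+uA))(P_r+uA)^{1/2}\rangle_{\HS}.
\]
Since the Hessian is bilinear in $f$, and operator convex functions admit an integral representation $f(x)=\int \big(\text{quadratic}+\text{terms } (x-1)^2/(x+\lambda)\big)\,d\mu(\lambda)$, we could reduce to resolvent generators. However, a direct computation is cleaner, so we proceed as follows.

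\textbf{First derivative in $s$ and the Daleckii--Krein formula.} First compute $\partial_s$ at $s=0$ for general $u$. At $s=0$ we have $\Gamma_t=L_{P_r^{1-t}}R_{K_t(u)}$, a product of commuting operators, so its spectral resolution in terms of eigenvectors of $P_r$ and of $K_t(u)$ is explicit. The derivative of $L_{a^{1-t}}$ in direction $A$ is given by the Daleckii--Krein formula. Its off-diagonal divided difference $(r^{1-t}-1)/(r-1)$ only pairs with the off-diagonal parts of $\Omega_\sigma$ at second order, so the full mixed derivative comes from products of first-order terms.

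\textbf{Second derivative and assembly.} Then differentiate in $u$ at $u=0$. The derivative of $K_t(u)=\sigma^{-1/2}P_r^t\sigma^{-1/2}$ and of $\Omega_\sigma=\sigma^{1/2}$ in direction $A$ are again off-diagonal divided differences, built from $r^{1/2}$, $r^{-1/2}$ and $r^{t}$. Collecting terms, the contributions fall into the two orderings $E_{12}$ and $E_{21}$. The multiplier of $\Gamma_t$ on these is $r^{1-t}$ and $r^{t-1}$ respectively, which produces the values $\widetilde f(r^{1-t})$ and $\widetilde f(r^{t-1})$, the latter weighted by $r^{1-2t}$ (coming from $r\cdot r^{-2t}$ from the $\sigma$ factors). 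The derivatives of the Daleckii--Krein expressions combine into the factor $(r^{1-t}-1)^{-2}$.

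The main obstacle is this bookkeeping. One has to verify that the terms containing $f'$ at non-unit points and the divided differences of $f$ itself cancel, leaving exactly the denominator
\[
\widetilde f(r^{1-t})+r^{1-2t}\widetilde f(r^{t-1})
\]
divided by $(r^{1-t}-1)^2$. The natural check to guide the algebra is the case $t=0$, where the answer must reduce to the formula of \cite[eq. (14)]{hiai-2012}.

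\textbf{Symmetry and removable singularities.} Symmetry \eqref{eq:Symmetry_m} can then be verified directly. Substituting $r\mapsto r^{-1}$ and multiplying numerator and denominator by $r^{2(1-t)}$ shows that $r\,m_{f,t}(r^{-1})=m_{f,t}(r)$, since the two $\widetilde f$ terms are exchanged. Finally, $m_{f,t}(1)=1$ follows from the Taylor expansion $\widetilde f(x)\sim\kappa_f(x-1)^2/2$: each of the two terms contributes $\kappa_f(1-t)^2(r-1)^2/2$, and the numerator behaves like $\kappa_f(1-t)^2(r-1)^2$, so the ratio tends to $1$. This fills the removable singularity at $r=1$ by continuity.
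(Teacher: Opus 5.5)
Your architecture is the paper's, and those parts are correct: centering $f$, the orthogonality relations of Lemma~\ref{lemma:Orthogonality_Relations}, the classical computation on the diagonal sector, the two-level reduction of Lemma~\ref{lemma:Metric_In_Selfadjoint_Basis}, and the direct checks of \eqref{eq:Symmetry_m} and $m_{f,t}(1)=1$.

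The gap is the one step that carries the content of the theorem. The value $g^{(f)}_{P_r,t}(A,A)=2\kappa_f/m_{f,t}(r)$ is never computed. You describe a Daleckii--Krein calculation, assert that everything ``combines'' into $(\widetilde f(r^{1-t})+r^{1-2t}\widetilde f(r^{t-1}))/(r^{1-t}-1)^2$, and yourself flag the verification as the main obstacle. So the formula for $m_{f,t}$ is stated, not derived. Your explanation of the weight $r^{1-2t}$ as ``$r\cdot r^{-2t}$ from the $\sigma$ factors'' is also not how it arises.

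The mixed-derivative sketch is incomplete as written. The first argument enters both multipliers of $\Gamma_t(\rho,\sigma)=L_{\rho^{1-t}}R_{\sigma^{-1/2}\rho^{t}\sigma^{-1/2}}$. So with $\rho_s=P_r+sA$, $\partial_s\Gamma_t$ also contains $L_{P_r^{1-t}}R_{P_r^{-1/2}\,\partial_s(\rho_s^{t})\,P_r^{-1/2}}$, whereas you only mention the derivative of $L_{\rho^{1-t}}$.

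Finally, you never justify $\kappa_f>0$, or $\widetilde f(x)>0$ for $x\neq 1$, which is what makes the denominator of $m_{f,t}$ nonzero. The paper gets both from the integral representation \eqref{eq:integral_rep}.

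The missing computation is short if you avoid the mixed derivative. By the second-variation identity \eqref{eq:Second_derivative_divergence}, you may set $s=0$ and take $\partial_u^2$ of $D^t_{\widetilde f}(P_r\Vert P_r+uA)$. (The paper instead uses the rotation path $P_\theta=R_\theta P R_\theta^*$, with $\dot P_0=(r-1)A$.) With the first argument frozen, $L_{P_r^{1-t}}$ is diagonal, so the row subspaces of $\mathcal B_2(\mathbb C^2)$ reduce $\Gamma_t$, and
\[
D^t_{\widetilde f}(P_r\Vert\sigma_u)=\langle a_u,\widetilde f(qK_u)a_u\rangle+\langle b_u,\widetilde f(K_u)b_u\rangle,\qquad q=r^{1-t},\quad K_u=\sigma_u^{-1/2}P_r^{t}\sigma_u^{-1/2},
\]
where $a_u,b_u$ are the rows of $\sigma_u^{1/2}$.

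At $u=0$:
\begin{itemize}
\item $a_0$ and $b_0$ lie in the eigenvalue-$1$ eigenspaces of $qK_0=\mathrm{diag}(1,q)$ and $K_0=\mathrm{diag}(q^{-1},1)$, respectively;
\item $\dot K_0$ is off-diagonal, $\dot a_0\perp a_0$ and $\dot b_0\perp b_0$.
\end{itemize}
Since $\widetilde f(1)=\widetilde f'(1)=0$, the only surviving divided differences are $\widetilde f^{[1]}(\lambda,1)=\widetilde f(\lambda)/(\lambda-1)$ and $\widetilde f^{[2]}(1,\lambda,1)=\widetilde f(\lambda)/(\lambda-1)^2$, with $\lambda\in\{q,q^{-1}\}$. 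In particular, no $\widetilde f'$ at non-unit points ever appears, so there is nothing to cancel.

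Each row's second derivative then collapses to a perfect square $2\widetilde f(\lambda)\,c_\lambda^2$. A short calculation gives $c_q=-1/(q-1)$ and $c_{q^{-1}}=q/(r^{1/2}(q-1))$. Hence
\[
\partial_u^2D\big|_{u=0}=\frac{2\bigl[\widetilde f(q)+r^{1-2t}\widetilde f(q^{-1})\bigr]}{(q-1)^2}=\frac{2\kappa_f}{m_{f,t}(r)},
\]
and the weight $r^{1-2t}=q^2/r$ comes from $c_{q^{-1}}^2$.

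This is exactly the content of Lemma~\ref{lemma:Taylor}, which the paper obtains by first-order eigenvector perturbation. Your Daleckii--Krein route is equally valid and arguably cleaner, but it must actually be carried out for the proof to be complete.
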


\begin{proof}

    The strict inequality $\kappa_f>0$ follows from the canonical integral representation of operator-convex functions on $(0,\infty)$ \cite{franz2014higher,lesniewskiRuskai1999}. Namely, one can write
\begin{equation}\label{eq:integral_rep}
f(x)=a+b(x-1)+c(x-1)^2
        +\int_{[0,\infty)}\frac{(x-1)^2}{x+s}\,\dd\mu(s),
\end{equation}
  with $c\geq0$ and $\mu$ a positive measure. Hence
\begin{equation*}
f''(1)=2c+2\int_{[0,\infty)}\frac{1}{1+s}\,\dd\mu(s).
\end{equation*}
  If this number vanished, then $c=0$  and $\mu=0$, so $f$ would be affine. Thus non-affinity implies $\kappa_f=f''(1)>0$.

Because the affine part of $f$ does not contribute to the mixed Hessian, we may apply all subsequent calculations to the centered function $\widetilde f$. In particular, the hypotheses $\widetilde  f(1)=\widetilde f'(1)=0$  required in Lemma \ref{lemma:Taylor} are satisfied.

In Lemma \ref{lemma:Metric_In_Selfadjoint_Basis} we showed  that, for the diagonal $2\times2$ matrix
\begin{equation*}
 P=\begin{pmatrix}r&0\\0&1\end{pmatrix}
\end{equation*}
and the rotation $P_{\theta}=R_{\theta}PR_{\theta}^*$, one  has
\begin{equation*}
 P_0'=(r-1)A,
 \qquad
 A=\begin{pmatrix}0&1\\1&0\end{pmatrix}.
\end{equation*}
By Lemma \ref{lemma:Metric_In_Selfadjoint_Basis}, our  goal reduces to computing $g_{P,t}^{(f)}(A,A)$. Since the metric is real bilinear on the  self-adjoint  tangent space,
\begin{equation*}
         g_{P,t}^{(f)}(P_0',P_0')=(r-1)^2g_{P,t}^{(f)}(A,A)\, .
\end{equation*}
As explained in Lemma \ref{lemma:Metric_In_Selfadjoint_Basis},
\begin{equation}\label{eq:second_derivative_rotation}
        (r-1)^2g_{P,t}^{(f)}(A,A)
        =\left. \frac{\dd^2}{\dd \theta^2}
        D_f^t(P\Vert P_{\theta})\right|_{\theta=0}.
\end{equation}
Thus it is enough to determine the coefficient  of $\theta^2$ in the expansion of $D_f^t(P\Vert P_{\theta})$.

Set
\begin{equation*}
 Z_{\theta}:=P_{\theta}^{-1/2}P^tP_{\theta}^{-1/2}\, ,
\end{equation*}
so that we can write
\begin{equation*}
    D_f^t(P\Vert P_{\theta})
    =\langle P_{\theta}^{1/2},
      \widetilde f(L_{P^{1-t}}R_{Z_{\theta}})
      P_{\theta}^{1/2}\rangle_{\mathrm{HS}}\, .
\end{equation*}
Define now the row subspaces
\begin{equation*}
 \cW_1=\operatorname{span}\{E_{11},E_{12}\},
 \qquad
 \cW_2=\operatorname{span}\{E_{21},E_{22}\}.
\end{equation*}
Since left multiplication by $P^{1-t}=\operatorname{diag}(r^{1-t},1)$ multiplies the first row by  $r^{1-t}$ and leaves  the second  row unchanged, while right multiplication by $Z_\theta$ acts separately on each row, $\cW_1$ and $\cW_2$ are invariant under $L_{P^{1-t}}R_{Z_\theta}$. This operator  is self-adjoint on the Hilbert--Schmidt space, so these invariant  subspaces are reducing. Under the natural identification of each row with $\mathbb C^2$, its restrictions are represented by $r^{1-t}Z_\theta$ and $Z_\theta$, respectively.  Therefore, if $a_\theta$ and $b_\theta$ denote the first and second rows of $P_\theta^{1/2}$,  then functional calculus gives the exact decomposition
\begin{equation*}
     D_f^t(P\Vert P_{\theta})
    =\langle a_{\theta},\widetilde f(r^{1-t}Z_{\theta})a_{\theta}\rangle
      +\langle b_{\theta},\widetilde f(Z_{\theta})b_{\theta}\rangle\, .
\end{equation*}

By Lemma \ref{lemma:Taylor},
\begin{equation}\label{eq:Taylor_in_main_theorem}
 D_f^t(P\Vert P_\theta)
 =(r-1)^2
 \frac{\widetilde f(r^{1-t})
       +r^{1-2t}\widetilde f(r^{t-1})}
      {(r^{1-t}-1)^2}\,\theta^2
 +o(\theta^2).
\end{equation}
Combining \eqref{eq:second_derivative_rotation} and \eqref{eq:Taylor_in_main_theorem} yields
\begin{equation}\label{eq:value_of_the_metric_f}
    g_{P,t}^{(f)}(A,A)
    =2\frac{\widetilde f(r^{1-t})
       +r^{1-2t}\widetilde f(r^{t-1})}
      {(r^{1-t}-1)^2}
    =\frac{2\kappa_f}{m_{f,t}(r)}.
\end{equation}
Notice that since the function is operator convex and non-affine $\widetilde f(x)>0$  for $x\neq 1$, so $m_{f,t}(r)$ is well-defined. This is a consequence of its integral representation \eqref{eq:integral_rep}.

  We next verify the normalization and symmetry of $m_{f,t}$. Put $q=r^{1-t}$. As $r\to1$, Taylor's  expansion gives
\begin{equation*}
        \widetilde f(q)=\frac{\kappa_f}{2}(q-1)^2+o((q-1)^2)
\end{equation*}
  and
\begin{equation*}
\widetilde f(q^{-1})
         =\frac{\kappa_f}{2}(q^{-1}-1)^2+o((q^{-1}-1)^2)
   =\frac{\kappa_f}{2q^2}(q-1)^2+o((q-1)^2).
\end{equation*}
Since $r^{1-2t}=q^2/r$, the denominator in \eqref{eq:Definition_mft} is
\begin{equation*}
        \frac{\kappa_f}{2}(q-1)^2
   +\frac{\kappa_f}{2r}(q-1)^2
    +o((q-1)^2)
    =\frac{\kappa_f}{2}\left(1+\frac{1}{r}\right)(q-1)^2+o((q-1)^2).
\end{equation*}
Hence $m_{f,t}(r)\to1$, proving $m_{f,t}(1)=1$.

    For the symmetry,  again write $q=r^{1-t}$. Then
\begin{align*}
m_{f,t}(r^{-1})
        &=\frac{\kappa_f(q^{-1}-1)^2}
   {\widetilde f(q^{-1})+(r^{-1})^{1-2t}\widetilde  f(q)}\\
    &=\frac{\kappa_f(q-1)^2}
    {q^2\widetilde f(q^{-1})+r\widetilde f(q)},
\end{align*}
    where  we used $(r^{-1})^{1-2t}=r^{2t-1}=r/q^2$. Multiplying by $r$ and dividing numerator and denominator by $r$ gives
\begin{equation*}
    r\,m_{f,t}(r^{-1})
     =\frac{\kappa_f(q-1)^2}
{\widetilde f(q)+(q^2/r)\widetilde f(q^{-1})}
        =m_{f,t}(r),
\end{equation*}
which proves \eqref{eq:Symmetry_m}.

  We now return to the general state $\rho$. Lemma \ref{lemma:Metric_In_Selfadjoint_Basis} and \eqref{eq:value_of_the_metric_f} give, for $i<j$,
\begin{equation}\label{eq:Metric_Aij}
    g_{\rho,t}^{(f)}(A_{ij},A_{ij})
    =\frac{2\kappa_f}
    {p_jm_{f,t}(p_i/p_j)}.
\end{equation}
By Lemma \ref{lemma:Orthogonality_Relations},
\begin{equation}\label{eq:Metric_Bij}
    g_{\rho,t}^{(f)}(B_{ij},B_{ij})
    =\frac{2\kappa_f}
{p_jm_{f,t}(p_i/p_j)},
        \qquad
   g_{\rho,t}^{(f)}(A_{ij},B_{ij})=0.
\end{equation}

    It remains to compute the diagonal sector. Let
\begin{equation*}
   D_X=\sum_i x_iE_{ii},
     \qquad
     D_Y=\sum_i y_iE_{ii},
\qquad
          \sum_i x_i=\sum_i y_i=0.
\end{equation*}
    All operators involved commute, and therefore the geodesic divergence reduces to the classical $f$-divergence, independently  of $t$:
\begin{equation*}
D_f^t(\rho+sD_X\Vert\rho+uD_Y)
        =\sum_i(p_i+uy_i)
   \widetilde f\!\left(\frac{p_i+sx_i}{p_i+uy_i}\right).
\end{equation*}
For the $i$th summand, differentiation first in $s$ gives
\begin{equation*}
        \frac{\partial}{\partial s}
   \left[(p_i+uy_i)
    \widetilde f\!\left(\frac{p_i+sx_i}{p_i+uy_i}\right)\right]
    =x_i\widetilde f'\!\left(\frac{p_i+sx_i}{p_i+uy_i}\right).
\end{equation*}
  Differentiating in $u$ and setting $s=u=0$ gives
\begin{equation*}
    \left.\frac{\partial^2}{\partial u\,\partial s}\right|_{s=u=0}
    (p_i+uy_i)
\widetilde f\!\left(\frac{p_i+sx_i}{p_i+uy_i}\right)
        =-\frac{\kappa_f x_i y_i}{p_i}.
\end{equation*}
    The minus sign in the definition \eqref{eq:Definition_metric_Hessian} therefore yields
\begin{equation}\label{eq:Diagonal_metric}
g_{\rho,t}^{(f)}(D_X,D_Y)
          =\kappa_f\sum_i\frac{x_i y_i}{p_i}.
\end{equation}

Finally, write
\begin{equation*}
        X=D_X+\sum_{i<j}(a_{ij}A_{ij}+b_{ij}B_{ij}),
   \qquad
    Y=D_Y+\sum_{i<j}(c_{ij}A_{ij}+d_{ij}B_{ij}),
 \end{equation*}
    where all coefficients are real. Using  Lemma \ref{lemma:Orthogonality_Relations}, \eqref{eq:Metric_Aij}, \eqref{eq:Metric_Bij}, and \eqref{eq:Diagonal_metric}, we obtain
\begin{align}\label{eq:Metric_real_basis}
   g_{\rho,t}^{(f)}(X,Y)
    &=\kappa_f\sum_i\frac{x_i y_i}{p_i}
    +\sum_{i<j}
\frac{2\kappa_f(a_{ij}c_{ij}+b_{ij}d_{ij})}
         {p_jm_{f,t}(p_i/p_j)}.
\end{align}
    For $i<j$,
\begin{equation*}
X_{ij}=a_{ij}+\mathrm{i}b_{ij},
         \qquad
   Y_{ij}=c_{ij}+\mathrm{i}d_{ij},
\end{equation*}
so
\begin{equation*}
        a_{ij}c_{ij}+b_{ij}d_{ij}
   =\operatorname{Re}(\overline{X_{ij}}Y_{ij}).
\end{equation*}
    Since $X_{ji}=\overline{X_{ij}}$ and $Y_{ji}=\overline{Y_{ij}}$, the sum of the two ordered terms associated with the unordered pair $\{i,j\}$ is
\begin{equation}
         \frac{\kappa_f\overline{X_{ij}}Y_{ij}}
   {p_jm_{f,t}(p_i/p_j)}
    +\frac{\kappa_f\overline{X_{ji}}Y_{ji}}
    {p_im_{f,t}(p_j/p_i)}=
        \frac{2\kappa_f\operatorname{Re}(\overline{X_{ij}}Y_{ij})}
   {p_jm_{f,t}(p_i/p_j)}.
\end{equation}
The diagonal terms in \eqref{eq:Formula_metric_general} agree with \eqref{eq:Diagonal_metric} because $m_{f,t}(1)=1$. Thus \eqref{eq:Metric_real_basis} is exactly the spectral formula \eqref{eq:Formula_metric_general}.

\end{proof}

The next result further shows that  $\kappa_f^{-1}g_{\rho,t}^{(f)}$ is a normalized
monotone quantum metric.  The Morozova--Chentsov--Petz classification
\cite{morozovaChentsov1991,petz1996monotone,lesniewskiRuskai1999} then implies
that its normalized symmetric representing function $m_{f,t}$ is operator
monotone.

\begin{proposition}\label{prop:monotone_metric}
    Under the assumptions of Theorem \ref{theo:Hessian_metric_general}, $\kappa_f^{-1}g_{\rho,t}^{(f)}$ is a monotone quantum information metric. More precisely, let
$\Phi:\cB(\cH)\to\cB(\mathcal K)$ be a quantum channel and put
\begin{equation*}
 P:=\supp\Phi(\rho).
\end{equation*}
After regarding $\Phi(\rho)$ and $\Phi(X)$ as operators on the support space
$P\mathcal K$, one has
\begin{equation}\label{eq:metric-DPI-support-corner}
 g_{\Phi(\rho),t}^{(f)}\!\left(\Phi(X),\Phi(X)\right)
 \leq g_{\rho,t}^{(f)}(X,X).
\end{equation}
Here the metric on the left is computed in the corner algebra
$P\cB(\mathcal K)P$, in which $\Phi(\rho)$ is invertible. In particular,
$m_{f,t}$ is a normalized symmetric operator-monotone function.
\end{proposition}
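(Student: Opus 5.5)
The plan is to derive monotonicity of the Hessian metric directly from the data-processing inequality for $D_f^t$ proved in \cite{capel2026geodesic}, using the standard principle that the Hessian of a monotone divergence at the diagonal is a monotone metric. I will first fix a reduction. Since $\Phi(\rho)$ may fail to be invertible, I replace $\Phi$ by the compressed channel $\Phi_P(Z):=P\Phi(Z)P$, regarded as a map into $P\cB(\mathcal K)P$. The difficulty is that $\Phi_P$ is only trace non-increasing on general inputs.

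I handle this as follows. For states $\omega$ with $\omega\le c\rho$ we have $\Phi(\omega)\le c\Phi(\rho)$, so $\supp\Phi(\omega)\subseteq P$ and $\Phi(\omega)=P\Phi(\omega)P$. A neighbourhood of $\rho$ in the faithful states satisfies such a bound, because $\rho>0$. Hence, for $|s|,|u|$ small, both $\Phi(\rho+sX)$ and $\Phi(\rho+uX)$ live in the corner and are invertible there, since they are close to $\Phi(\rho)$, which is invertible on $P\mathcal K$. In particular $\Phi(X)=P\Phi(X)P$.

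The key steps, in order, are these.

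\textbf{(i)} Consider $F(s,u):=D_f^t(\rho+sX\Vert\rho+uX)$ and $G(s,u):=D_f^t(\Phi(\rho+sX)\Vert\Phi(\rho+uX))$, the latter computed in the corner. By data processing \cite{capel2026geodesic}, applied to the channel restricted to states supported under $P$ (equivalently, composing with a channel completing the corner), we get $G\le F$ near $0$. Both vanish on the diagonal $s=u$, since $D_f^t(\omega\Vert\omega)=f(1)$ and we may take $f=\widetilde f$.

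\textbf{(ii)} Both functions are smooth. Their first derivatives vanish on the diagonal, because $s=u$ is a minimum of $F$ (nonnegativity of $D_{\widetilde f}^t$, e.g.\ from the classical representation or joint convexity), and likewise for $G$.

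\textbf{(iii)} Use the identity valid for such functions: $\partial_s^2H=\partial_u^2H=-\partial_s\partial_uH$ at $(0,0)$. It follows by differentiating $H(s,s)\equiv 0$ twice and $\partial_sH(s,s)\equiv0$ once. Then expanding $H(s,0)=\tfrac12 s^2\,\partial_s^2H(0,0)+o(s^2)$ gives
$$g_{\rho,t}^{(f)}(X,X)=-\partial_s\partial_uF(0,0)=\lim_{s\to0}2F(s,0)/s^2,$$
and similarly for $G$. The inequality $G(s,0)\le F(s,0)$ then yields \eqref{eq:metric-DPI-support-corner}.

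\textbf{(iv)} Divide by $\kappa_f>0$ from Theorem~\ref{theo:Hessian_metric_general}, and check positivity and normalization. By the diagonal computation in that proof, $\kappa_f^{-1}g$ on commuting directions is the Fisher metric, and $m_{f,t}(1)=1$ by the theorem. So the Morozova--Chentsov--Petz classification applies: the representing function, which by \eqref{eq:Formula_metric_general} and \eqref{eq:Symmetry_m} is $m_{f,t}$, is normalized, symmetric and operator monotone.

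The main obstacle I expect is in step (i), namely that data processing is invoked for the non-unital, non-trace-preserving corner map. I would handle it by noting that $D_f^t(\Phi(\omega)\Vert\Phi(\omega'))$ computed in $\cB(\mathcal K)$ coincides with the corner computation whenever both supports lie under $P$. Since $\Gamma_t$ and $\Omega_\sigma$ decompose along $P$, the extended definition on the support agrees, so I can apply the DPI of \cite{capel2026geodesic}, stated for possibly non-faithful outputs via supports or limits $\Phi(\cdot)+\varepsilon I$. A secondary point is the smoothness of $F$ and $G$ in $(s,u)$. This follows from analytic functional calculus, because all operators involved remain strictly positive on the relevant spaces.
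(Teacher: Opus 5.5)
Your proposal is correct and follows essentially the same route as the paper: compress to $P=\supp\Phi(\rho)$, apply the data-processing inequality of \cite{capel2026geodesic} to obtain a nonnegative difference vanishing at $0$, and read off the metric inequality from its second-order term. The only difference there is that you vary the first argument and derive the second-variation identity yourself, while the paper varies the second argument and cites it. The obstacle you anticipate in step (i) does not actually arise: since $\rho>0$, every $A\geq0$ satisfies $A\le c_A\rho$, so your own support argument gives $\Phi(B)=P\Phi(B)P$ for all $B$. Hence the compressed map is already a genuine CPTP map into $\cB(P\mathcal K)$ --- exactly as in the paper --- and the detour through non-faithful outputs or $\Phi(\cdot)+\varepsilon I$ can be dropped.
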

\begin{proof}
    We have to prove the monotonicity assertion for an arbitrary quantum channel
$\Phi:\cB(\cH)\to\cB(\mathcal K)$.  First notice that  even though
$\rho$ is invertible, the output state $\Phi(\rho)$ need not be invertible on the whole
output space $\mathcal K$.  We therefore compress the output algebra to the support of
$\Phi(\rho)$.

Put $P:=\supp\Phi(\rho)$.
We claim that the whole range of $\Phi$ is contained in the corner
$P\cB(\mathcal K)P$.  Indeed, if $A\geq0$, then, since $\rho>0$,
\begin{equation*}
 A\leq c_A\rho,
 \qquad
 c_A:=\left\|\rho^{-1/2}A\rho^{-1/2}\right\|_\infty<\infty.
\end{equation*}
Positivity of $\Phi$ gives
\begin{equation*}
 0\leq\Phi(A)\leq c_A\Phi(\rho).
\end{equation*}
Consequently,
\begin{equation*}
 \supp\Phi(A)\leq\supp\Phi(\rho)=P,
\end{equation*}
because every vector in the kernel of $\Phi(\rho)$ also belongs to the kernel of
$\Phi(A)$.  By linearity, it follows that
\begin{equation}\label{eq:channel-range-support-corner}
 \Phi(B)=P\Phi(B)P
 \qquad\text{for every }B\in\cB(\cH).
\end{equation}
Thus the same map can be regarded as a channel
\begin{equation*}
 \Phi_P:\cB(\cH)\longrightarrow\cB(P\mathcal K),
 \qquad
 \Phi_P(B):=\Phi(B)|_{P\mathcal K}.
\end{equation*}
It is completely positive and trace preserving, and $\Phi_P(\rho)$ is invertible on
$P\mathcal K$ by the definition of $P$.  Moreover, for every tangent vector
$X\in T_\rho\cS(\cH)$,
\begin{equation*}
 \Phi_P(X)=\Phi(X)\in T_{\Phi_P(\rho)}\cS(P\mathcal K).
\end{equation*}
In what follows, we suppress the subscript $P$ and understand all output
quantities in this support corner.  If $\Phi(\rho)$ is already invertible on $\mathcal K$,
then $P=I_\mathcal K$ and no compression is needed.

Since affine terms do not contribute to the Hessian, we may work with the centered
generator $\widetilde f$.  For all sufficiently small real $s$, both
$\rho+sX$ and $\Phi_P(\rho)+s\Phi_P(X)$ are invertible in their respective
algebras.  Data processing \cite[Theorem~6.2]{capel2026geodesic} applied to the
compressed channel gives
\begin{equation*}
 D_{\widetilde f}^t\!\left(\Phi_P(\rho)\middle\Vert
 \Phi_P(\rho)+s\Phi_P(X)\right)
 \leq
 D_{\widetilde f}^t(\rho\Vert\rho+sX).
\end{equation*}
Define
\begin{equation*}
 h_X(s):=
 D_{\widetilde f}^t(\rho\Vert\rho+sX)
 -D_{\widetilde f}^t\!\left(\Phi_P(\rho)\middle\Vert
 \Phi_P(\rho)+s\Phi_P(X)\right).
\end{equation*}
Then $h_X(s)\geq0$ near $s=0$ and $h_X(0)=0$. Hence $s=0$ is a local
minimum of $h_X$, so $h_X'(0)=0$ and $h_X''(0)\geq0$. By the standard
second-variation identity for a smooth divergence, see e.g. \cite[pp. 632--633]{eguchi1992geometry}, \cite[Proposition 2.7]{matsuzoe2025divergence}, or \cite[Eq. (4)]{ay2015novel} ,
\begin{equation*}
 \left.\frac{\dd^2}{\dd s^2}
 D_{\widetilde f}^t(\rho\Vert\rho+sX)\right|_{s=0}
 =g_{\rho,t}^{(f)}(X,X),
\end{equation*}
and the same identity in the support algebra $\cB(P\mathcal K)$ gives
\begin{equation*}
 h_X''(0)
 =g_{\rho,t}^{(f)}(X,X)
 -g_{\Phi(\rho),t}^{(f)}\!\left(\Phi(X),\Phi(X)\right).
\end{equation*}
Here and below, the metric at $\Phi(\rho)$ is understood in the corner
$P\cB(\mathcal K)P$. Therefore
\begin{equation}\label{eq:metric-DPI-quadratic}
 g_{\Phi(\rho),t}^{(f)}\!\left(\Phi(X),\Phi(X)\right)
 \leq g_{\rho,t}^{(f)}(X,X),
\end{equation}
which is exactly \eqref{eq:metric-DPI-support-corner}.

\end{proof}

\subsection{Interpolation between the BKM and the RLD metric}\label{sec:Interpolation_BKM_RLD}

The choice of the function $f(x)=x\log x$  plays  a fundamental role in quantum information theory. At the two endpoints it recovers, respectively, Petz's standard f-divergence and the maximal quantum $f$-divergence \cite{petz1986quasi,matsumoto2018maximal,hiaiMosonyi2017,hiai2019maximal}. We write
\begin{equation*}
       D^t(\rho\Vert\sigma):=D_{x\log x}^t(\rho\Vert\sigma),\qquad
       g_{\rho,t}:=g_{\rho,t}^{(x\log x)}.
\end{equation*}
At the initial point $t=0$, the relative operator is $\Delta_{\rho\mid\sigma}=L_\rho R_{\sigma^{-1}}$. Since left and  right multiplication commute, $ 
 \log\Delta_{\rho\mid\sigma}=L_{\log\rho}-R_{\log\sigma},$ and
a direct computation gives
\begin{equation*}
        D^0(\rho\Vert\sigma)
       =\tr\bigl[\rho(\log\rho-\log\sigma)\bigr]
        =:D(\rho\Vert\sigma),
\end{equation*}
which is the Umegaki relative entropy \cite{umegaki1962}. At the  endpoint $t=1$, \eqref{eq:relative-operator-geodesic-updated} gives $R_{K_1}$ 
where $K_1=\sigma^{-1/2}\rho\sigma^{-1/2}$ and therefore
\begin{equation}\label{eq:BS_endpoint_new}
       D^1(\rho\Vert\sigma)
       =\tr\bigl[\sigma K_1\log K_1\bigr]
        =\tr\left[\rho\log\left(\rho^{1/2}\sigma^{-1}\rho^{1/2}\right)\right]
       =:\widehat D(\rho\Vert\sigma),
\end{equation}
which is the Belavkin--Staszewski relative entropy, equivalently the maximal $f$-divergence generated by $x\log x$ \cite{belavkinStaszewski1982}. For completeness, the second equality in \eqref{eq:BS_endpoint_new} follows by taking $B=\sigma^{-1/2}\rho^{1/2}$. Then $K_1=BB^*$, and the identity
\begin{equation*}
        B^*\log(BB^*)=\log(B^*B)B^*
\end{equation*}
together with cyclicity of the trace gives
\begin{equation*}
 \tr[\sigma BB^*\log(BB^*)]
 =\tr[\log(B^*B)B^*\sigma B]
 =\tr[\rho\log(B^*B)].
\end{equation*}
When $\rho$ and $\sigma$ commute, both endpoints reduce to the classical relative entropy \cite{hiaiMosonyi2017}. Their Hessians, however, give two different canonical quantum extensions of the Fisher metric: the Bogoliubov--Kubo--Mori metric at $t=0$ \cite{kubo1957,mori1965,petzToth1993,petz1994canonical} and the right logarithmic derivative metric at $t=1$ \cite{yuenLax1973,holevo2011,petz1996monotone}. 

For this generator,
\begin{equation*}
      \widetilde f(x)=x\log x-x+1,\qquad \kappa_f=f''(1)=1.
\end{equation*}
Consequently, Theorem \ref{theo:Hessian_metric_general} gives, for $0\leq t<1$,
\begin{equation}\label{eq:mt_xlogx_first}
 m_t(r)=\frac{(r^{1-t}-1)^2}
 {\widetilde f(r^{1-t})+r^{1-2t}\widetilde f(r^{t-1})},\qquad m_t(1)=1.
\end{equation}
The denominator can be  simplified explicitly. Indeed,
\begin{align*}
 &\widetilde f(r^{1-t})+r^{1-2t}\widetilde f(r^{t-1})\\
 &\quad=r^{1-t}(1-t)\log r-r^{1-t}+1
   +r^{1-2t}\left[-r^{t-1}(1-t)\log r-r^{t-1}+1\right]\\
 &\quad=r^{-t}\left[(1-t)(r-1)\log r
        -(r^t-1)(r^{1-t}-1)\right].
\end{align*}
Thus an equivalent expression is
\begin{equation}\label{eq:mt_xlogx_simplified}
 m_t(r)=\frac{r^t(r^{1-t}-1)^2}
 {(1-t)(r-1)\log r-(r^t-1)(r^{1-t}-1)}.
\end{equation}
All the expressions are understood by continuity at their removable singularities. In particular,
\begin{equation}\label{eq:BKM_representing_function_new}
        m_0(r)=\frac{r-1}{\log r},
\end{equation}
which corresponds to the mean representing function of the BKM metric. Therefore
\begin{equation*}
 g_\rho^{\mathrm{BKM}}(X,Y)
 =\sum_{i,j}\frac{\log p_i-\log p_j}{p_i-p_j}\overline{X_{ij}}Y_{ij},
\end{equation*}
where the quotient is  defined to be $p_i^{-1}$ when $p_i=p_j$.

While for $t=0$ we obtain different metrics by applying different non-affine operator convex functions,  at the other endpoint, the resulting mean depends only on the value $f''(1)$. Put $a=1-t$. Since
\begin{align*}
 r^a-1&=a\log r+O(a^2),\\
 \widetilde f(r^a)&=\frac{f''(1)}{2}a^2(\log r)^2+o(a^2),\\
 r^{2a-1}\widetilde f(r^{-a})
 &=\frac{f''(1)}{2r}a^2(\log r)^2+o(a^2),
\end{align*}
we obtain
\begin{align}
       m_1(r):=\lim_{t\uparrow1}m_t(r)
       &=\kappa_f\lim_{a\downarrow0}
       \frac{a^2(\log r)^2+o(a^2)}
       {\frac{f''(1)}{2}(1+r^{-1})a^2(\log r)^2+o(a^2)}\notag\\
       &=\frac{2r}{1+r}.\label{eq:RLD_representing_function_new}
\end{align}
This is the harmonic-mean representing function of the RLD metric, and hence
\begin{equation*}
 g_\rho^{\mathrm{RLD}}(X,Y)
 =\frac12\sum_{i,j}\left(\frac1{p_i}+\frac1{p_j}\right)
     \overline{X_{ij}}Y_{ij}.
\end{equation*}
This is summarized in the following result, which was shown in \cite{matsumoto2018maximal}, and here we include with a different approach.
\begin{corollary}\label{coro:Coro24}
 Let $f:(0,\infty)\to \mathbb{R}$ be a non-affine operator convex function and $\rho \in \cS(\cH)$ a invertible quantum state.  Then, for $X,Y\in T_\rho\cS(\cH)$,
    \begin{equation}\label{eq:Formula_metric_1}
         g_{\rho,1}^{(f)}(X,Y)=\kappa_f g_{\rho}^{RLD}(X,Y)\, .
    \end{equation} 
\end{corollary}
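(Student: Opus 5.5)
The plan is to derive \eqref{eq:Formula_metric_1} from Theorem \ref{theo:Hessian_metric_general} by passing to the limit $t\uparrow1$. This requires two separate facts: (i) the Hessian $g_{\rho,t}^{(f)}(X,Y)$ depends continuously on $t$ up to $t=1$; (ii) $m_{f,t}(r)\to 2r/(1+r)$ for every fixed $r>0$. Part (ii) is essentially the Taylor computation already displayed before the corollary, and it is uniform in $r$ on compact subsets of $(0,\infty)$. Setting $a=1-t$ and $q=r^{a}$, we use $\widetilde f(q)=\tfrac{\kappa_f}{2}(q-1)^2(1+O(q-1))$, $r^{1-2t}=q^2/r$, and $q^{-1}-1=-(q-1)/q$. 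Together these give
\[
m_{f,t}(r)=\frac{\kappa_f (q-1)^2}{\tfrac{\kappa_f}{2}(q-1)^2\bigl(1+r^{-1}\bigr)\bigl(1+O(q-1)\bigr)}\longrightarrow \frac{2r}{1+r}.
\]
At $r=1$ both sides equal $1$ by the normalization in the theorem. Since $\rho$ has only finitely many eigenvalue ratios $p_i/p_j$, the spectral formula \eqref{eq:Formula_metric_general} then converges to $\kappa_f\sum_{i,j}\frac{1+p_i/p_j}{2p_i}\overline{X_{ij}}Y_{ij}=\kappa_f g_\rho^{\mathrm{RLD}}(X,Y)$, using $p_j m_1(p_i/p_j)=2p_i/(1+p_i/p_j)$.

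The main obstacle is (i), namely that $g^{(f)}_{\rho,1}$, defined directly as the Hessian of $D_f^1=\widehat D_f$, equals $\lim_{t\uparrow 1}g^{(f)}_{\rho,t}$. I would prove this by showing that $(t,s,u)\mapsto D_f^t(\rho+sX\Vert\rho+uY)=\langle \sigma_u^{1/2},f(L_{\rho_s^{1-t}}R_{\sigma_u^{-1/2}\rho_s^t\sigma_u^{-1/2}})\sigma_u^{1/2}\rangle_{\mathrm{HS}}$ is jointly $C^2$ (in fact real analytic) near $[0,1]\times\{0\}\times\{0\}$. The argument runs as follows. The matrices $\rho_s=\rho+sX$ and $\sigma_u=\rho+uY$ stay strictly positive for small $s,u$. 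The powers $\rho_s^{1-t}$ and $\rho_s^{t}$ and the square roots of $\sigma_u$ are analytic in $(t,s,u)$, being holomorphic functional calculus of positive matrices. Hence $\Gamma_t$ is an analytic family of positive operators whose spectrum stays inside a fixed compact subset of $(0,\infty)$. Since $f$ is analytic on $(0,\infty)$ (operator convex functions are), $f(\Gamma)$ can be written as a Cauchy integral $\frac{1}{2\pi \mathrm i}\oint f(z)(z-\Gamma)^{-1}\,dz$ over a fixed contour, which makes $f(\Gamma_t)$ analytic in all parameters. Consequently the mixed partial $\partial_s\partial_u$ at $s=u=0$ is continuous in $t$ on the closed interval $[0,1]$.

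Combining (i) and (ii) then gives the claim for all self-adjoint traceless $X,Y$, and it extends sesquilinearly. As an alternative check on (i), one could compute the $t=1$ Hessian directly, following the proof of Theorem \ref{theo:Hessian_metric_general}. In the two-level reduction of Lemma \ref{lemma:Metric_In_Selfadjoint_Basis}, $L_{P^{0}}$ is the identity, so $D_f^1(P\Vert P_\theta)=\tr[P_\theta f(Z_\theta)]$ with $Z_\theta=P_\theta^{-1/2}PP_\theta^{-1/2}$. Expanding to order $\theta^2$ should yield the harmonic-mean coefficient $\kappa_f(1+r)^2/(2r)\cdot(r-1)^2\theta^2/(r-1)^2$, up to the same normalization. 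I would keep this only as a consistency check and use the continuity argument as the main proof.
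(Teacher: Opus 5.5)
Your proposal is correct and follows the paper's proof. You first show the Hessian is continuous in $t$ up to $t=1$, using smooth dependence of $\Gamma_t$ and $f(\Gamma_t)$ on the parameters; you differentiate the mixed partial directly, where the paper uses the second-variation identity and then polarization. You then take the termwise limit $m_{f,t}(r)\to 2r/(1+r)$ in the finite spectral sum.

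The only slip is in your optional consistency check. There the $\theta^2$-coefficient of $D_f^1(P\Vert P_\theta)$ should be $\kappa_f(r-1)^2(1+r)/(2r)$, but this check plays no role in the proof.
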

\begin{proof}
We first identify the Hessian at $t=1$ with the limit of the Hessians for $t<1$. Fix a self-adjoint tangent vector $X$ and, for sufficiently small $\theta$, set
\begin{equation*}
 \rho_s:=\rho+s X.
\end{equation*}
Invertibility of $\rho$ ensures that $\rho_{s}$ remains invertible near $s=0$. In finite dimension, matrix powers and inverse square roots depend smoothly on positive-definite matrices. Hence
\begin{equation*}
 (t,s)\longmapsto
 \Gamma_t(\rho,\rho_s)
 =L_{\rho^{1-t}}
 R_{\rho_s^{-1/2}\rho^t\rho_s^{-1/2}}
\end{equation*}
is smooth in a neighborhood of $(1,0)$. Since an operator-convex function is smooth on $(0,\infty)$ and the spectra involved stay in a compact subset of that interval, finite-dimensional functional calculus shows that
\begin{equation*}
 F_X(t,s):=D_f^t(\rho\Vert\rho_s)
\end{equation*}
is $C^2$ in $s$, with $\partial_s^2F_X(t,0)$ continuous in $t$ at $t=1$.

By the second-variation identity \eqref{eq:Second_derivative_divergence},
\begin{equation*}
 g_{\rho,t}^{(f)}(X,X)
 =\left.\frac{\partial^2}{\partial^2 s}
 F_X(t,s)\right|_{s=0}.
\end{equation*}
Consequently,
\begin{equation}\label{eq:Hessian-continuity-at-one}
 g_{\rho,1}^{(f)}(X,X)
 =\lim_{t\uparrow1}g_{\rho,t}^{(f)}(X,X).
\end{equation}
For $t<1$, Theorem~\ref{theo:Hessian_metric_general} gives
\begin{equation*}
 g_{\rho,t}^{(f)}(X,X)
 =\sum_{i,j}
 \frac{\kappa_f}{p_jm_{f,t}(p_i/p_j)}|X_{ij}|^2.
\end{equation*}
The sum is finite, so \eqref{eq:RLD_representing_function_new} and \eqref{eq:Hessian-continuity-at-one} yield
\begin{align*}
 g_{\rho,1}^{(f)}(X,X)
 &=\sum_{i,j}
 \frac{\kappa_f}{p_jm_{f,1}(p_i/p_j)}|X_{ij}|^2\\
 &=\frac{\kappa_f}{2}\sum_{i,j}
 \left(\frac1{p_i}+\frac1{p_j}\right)|X_{ij}|^2\\
 &=\kappa_f\,g_\rho^{\mathrm{RLD}}(X,X),
\end{align*}
where we used
\begin{equation*}
 p_jm_{f,1}(p_i/p_j)
 =p_j\frac{2(p_i/p_j)}{1+p_i/p_j}
 =\frac{2p_ip_j}{p_i+p_j}.
\end{equation*}
Finally, both sides of \eqref{eq:Formula_metric_1} are symmetric real bilinear forms on the self-adjoint tangent space. Applying the polarization identity to the equality of their quadratic forms,
\begin{equation*}
 g_{\rho,1}^{(f)}(X,Y)=\frac14\bigl(g_{\rho,1}^{(f)}(X+Y,X+Y)-g_{\rho,1}^{(f)}(X-Y,X-Y)\bigr),
\end{equation*}
proves \eqref{eq:Formula_metric_1} for arbitrary $X,Y$.
\end{proof}

This result demonstrates the usefulness of this interpolation: by constructing a continuous path of interpolated relative modular operators that converges to the commutant Radon-Nikodym derivative, we can take limits and study the convergence ratio at $t=1$.

The next theorem shows that the whole family is ordered between these two endpoints and, more strongly, that every quadratic form varies convexly with the interpolation parameter. The proof follows some of the strategies derived in \cite{besenyei2012hasegawa}.

\begin{theorem}[Monotone and convex interpolation from BKM to RLD]
\label{theo:metric-ordering-updated}
Let $0\leq s\leq t\leq1$. Then, for  every invertible state $\rho$ and every self-adjoint traceless $X$,
\begin{equation}\label{eq:metric-ordering-form-updated}
    g_{\rho,s}(X,X)
    \leq g_{\rho,t}(X,X).
\end{equation}
Moreover, for fixed $\rho$ and $X$, the function $t\mapsto g_{\rho,t}(X,X)$ is convex. Consequently,
\begin{equation}\label{eq:BKM-RLD-order-updated}
    g_\rho^{\mathrm{BKM}}
    \leq g_{\rho,t}
    \leq(1-t)g_\rho^{\mathrm{BKM}}
       +t g_\rho^{\mathrm{RLD}}
    \leq g_\rho^{\mathrm{RLD}}.
\end{equation}
The first inequality is strict on every contribution corresponding to a matrix element $X_{ij}\neq0$ with $p_i\neq p_j$ and $t>0$. In particular, the inequality between the quadratic forms is strict whenever at least one such matrix element is  nonzero. On commuting directions, all metrics coincide.
\end{theorem}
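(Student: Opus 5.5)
The plan is to reduce everything to a scalar statement about the representing function $m_t$ from \eqref{eq:mt_xlogx_simplified}. By Theorem \ref{theo:Hessian_metric_general} with $\kappa_f=1$,
\begin{equation*}
 g_{\rho,t}(X,X)=\sum_{i,j}\frac{|X_{ij}|^2}{p_j\,m_t(p_i/p_j)} .
\end{equation*}
The weights $|X_{ij}|^2/p_j\geq0$ do not depend on $t$. Nonnegative combinations of nondecreasing (resp.\ convex) functions are again nondecreasing (resp.\ convex). So both \eqref{eq:metric-ordering-form-updated} and the convexity claim follow once we know the following: for each fixed $r>0$, the map $t\mapsto h_r(t):=1/m_t(r)$ is nondecreasing and convex on $[0,1]$, and strictly increasing on $[0,1]$ when $r\neq1$. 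The value at $t=1$ is the limit computed in \eqref{eq:RLD_representing_function_new}, and Corollary \ref{coro:Coro24} guarantees continuity there. For $r=1$ we have $m_t(1)=1$ for all $t$. This already gives the last sentence of the theorem: on commuting (diagonal) directions only the terms $i=j$, or more generally $p_i=p_j$, appear, and there every metric equals $\sum_i |X_{ii}|^2/p_i$.

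To study $h_r$, I would put $L=\log r$ and $a=1-t$. I would use the symmetry \eqref{eq:Symmetry_m} to write $m_t(r)=r^{1/2}M_t(L)$ with $M_t$ even in $L$, so that $h_r(t)=r^{-1/2}/M_t(L)$ and only $L>0$ needs treatment. There are two concrete routes, and I would try the first one first.

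\begin{itemize}
 \item \emph{Direct formula.} From \eqref{eq:mt_xlogx_simplified},
 \begin{equation*}
 h_r(t)=\frac{(1-t)(e^L-1)L-(e^{tL}-1)(e^{(1-t)L}-1)}{e^{tL}\,(e^{(1-t)L}-1)^2}.
 \end{equation*}
 After multiplying numerator and denominator by suitable exponentials, this becomes a ratio of hyperbolic functions of $tL/2$, $(1-t)L/2$ and $L/2$. I would then show $\partial_t h_r\geq0$ and $\partial_t^2h_r\geq0$, either by expanding in powers of $L$ and checking that the coefficients, which are polynomials in $t$, have the right sign, or by reducing each inequality to an elementary one such as $\sinh x\geq x$ or $x\coth x\geq1$.
 \item \emph{Integral representation (backup).} Use the Taylor remainder $\widetilde f(x)=(x-1)^2\int_0^1\frac{1-u}{1+u(x-1)}\,\dd u$, valid since $f''(x)=1/x$. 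This gives
 \begin{equation*}
 h_r(t)=\int_0^1(1-u)\left[\frac{1}{(1-u)+u\,r^{a}}+\frac{r^{-1}}{(1-u)+u\,r^{-a}}\right]\dd u ,
 \end{equation*}
 with $a=1-t$. Monotonicity and convexity in $a$ can then be checked under the integral after pairing the two terms, for instance via $u\leftrightarrow 1-u$.
\end{itemize}

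In the spirit of \cite{besenyei2012hasegawa}, I expect either route to give strict monotonicity for $L\neq0$. A sanity check supports this: the endpoints $\log r/(r-1)$ and $(1+r)/(2r)$ satisfy the logarithmic–harmonic mean inequality, which is strict for $r\neq1$.

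Once the scalar claims are established, \eqref{eq:BKM-RLD-order-updated} follows at once. The first inequality is monotonicity with $s=0$. The middle inequality is convexity on $[0,1]$ compared with the chord through the endpoint values: $m_0$ gives BKM by \eqref{eq:BKM_representing_function_new}, and $m_1$ gives RLD by Corollary \ref{coro:Coro24}. The last inequality is again monotonicity, namely $g^{\mathrm{BKM}}\leq g^{\mathrm{RLD}}$. Strictness on contributions with $X_{ij}\neq0$, $p_i\neq p_j$ and $t>0$ comes from $h_r(t)>h_r(0)$ for $r\neq1$.

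I expect the main obstacle to be the convexity of $t\mapsto h_r(t)$. Monotonicity might be extracted from a softer argument, but the second derivative in $t$ of the hyperbolic ratio is genuinely messy. If neither the sign-of-coefficients expansion nor the integral representation closes cleanly, I would fall back on showing that $h_r$ is a positive mixture, in a suitable parameter, of functions that are manifestly convex in $t$, such as $e^{c t}$ or $\cosh(c(t-\tfrac12))$.
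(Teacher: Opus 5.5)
Your reduction is the paper's: the weights $|X_{ij}|^2/p_j$ are nonnegative and independent of $t$, so everything follows from the scalar claim that $t\mapsto 1/m_t(r)$ is nondecreasing and convex on $[0,1]$, strictly increasing for $r\neq1$. Your derivation of the ordering, the chord bound and the strictness from that claim is correct.

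The gap is that this scalar claim is the whole substance of the theorem, and you do not prove it. Moreover, neither concrete route you propose works as stated.

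\emph{Power-series route.} The coefficients do not have a fixed sign. For your even function $1/M_t(L)=\Phi_t(L/2)$, where $\Phi_t(u):=e^{u}/m_t(e^{2u})$, one finds
\[
\partial_t\Phi_t(u)=\tfrac23u^2+\Bigl(\tfrac19-\tfrac4{15}(1-t)^2\Bigr)u^4+O(u^6),
\]
whose $u^4$-coefficient is $-\tfrac{7}{45}$ at $t=0$. Likewise $\partial_t^2\Phi_t(u)$ has a negative $u^6$-coefficient at $t=0$. Besides, the expansion in $L$ only converges for $|L|<2\pi/(1-t)$: the zeros of $e^{(1-t)L}-1$ at $(1-t)L=\pm2\pi\mathrm{i}$ produce genuine poles of $1/m_t$. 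So no coefficientwise argument can reach large $|\log r|$.

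\emph{Integral route.} The integrand is not convex in $a$ pointwise, even after pairing $u\leftrightarrow1-u$. To first order in $\epsilon$, the sum of the integrands at $u=\epsilon$ and $u=1-\epsilon$ has $a$-dependent part $\epsilon\bigl(r^{-a}+r^{a-1}-r^{a}-r^{-1-a}\bigr)$. Its second $a$-derivative is $\epsilon(\log r)^2(r^{-a}-r^{a})(1-r^{-1})<0$ for every $r\neq1$ and $a>0$. Convexity only appears after integration, so it cannot be checked under the integral sign.

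The missing idea is a rewriting in which $t$ enters only through the argument of one fixed function of one variable. Take $r=e^{2u}$ with $u>0$, and $z=(1-t)u$. Substituting $\sinh(tu)=\sinh(u-z)=\sinh u\cosh z-\cosh u\sinh z$ into your hyperbolic form gives
\[
\Phi_t(u)=\frac{z\sinh u-\sinh z\,\sinh(tu)}{\sinh^2 z}=\cosh u-\sinh u\,\psi\bigl((1-t)u\bigr),\qquad \psi(z):=\coth z-\frac{z}{\sinh^2 z}.
\]
Then $\partial_t\Phi_t(u)=u\sinh u\,\psi'((1-t)u)$ and $\partial_t^2\Phi_t(u)=-u^2\sinh u\,\psi''((1-t)u)$. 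So strict monotonicity and convexity on $[0,1)$ reduce to $\psi'>0$ and $\psi''<0$ on $(0,\infty)$.

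Your instinct about $x\coth x\geq1$ is right, but only at this stage, since $\psi'(z)=2(z\coth z-1)/\sinh^2 z$. Convexity needs a different inequality, $2z(\cosh 2z+2)>3\sinh 2z$, because $\psi''(z)=-\bigl(2z(\cosh 2z+2)-3\sinh 2z\bigr)/\sinh^4 z$. Here a power series does work, because the difference is entire with expansion $\sum_{n\geq2}\frac{2^{2n+2}(n-1)}{(2n+1)!}z^{2n+1}$. Continuity at $t=1$, where $\Phi_1(u)=\cosh u$, extends both properties to $[0,1]$; this is Lemma~\ref{lemma:Phi_monotonicity_convexity}.

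One small slip: if $\rho$ is degenerate, a direction commuting with $\rho$ may have $X_{ij}\neq0$ for some $i\neq j$ with $p_i=p_j$. The common value of the metrics is then $\sum_{p_i=p_j}|X_{ij}|^2/p_j$, not $\sum_i|X_{ii}|^2/p_i$. It is still independent of $t$ because $m_t(1)=1$.
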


\begin{proof}
Let
\begin{equation*}
       \rho=\sum_i p_i|i\rangle\langle i|
\end{equation*}
be a spectral decomposition of $\rho$. By Theorem \ref{theo:Hessian_metric_general},
\begin{equation}\label{eq:metric_xlogx_spectral}
 g_{\rho,t}(X,X)
 =\sum_{i,j}\frac{|X_{ij}|^2}{p_jm_t(p_i/p_j)}.
\end{equation}
It is therefore enough  to study, for fixed $r>0$, the dependence on $t$ of $m_t(r)^{-1}$.

The symmetry $m_t(r)=r m_t(r^{-1})$  implies that the quantity
\begin{equation*}
       \frac{\sqrt r}{m_t(r)}
\end{equation*}
is invariant under $r\mapsto r^{-1}$. We may consequently write $r=e^{2u}$ with $u\geq0$.  We next show that
\begin{equation}\label{eq:Phi_metric_identification}
 \frac{e^u}{m_t(e^{2u})}= \frac{(1-t)u\sinh u-\sinh((1-t)u)\sinh(tu)}
       {\sinh^2((1-t)u)}=:\Phi_t(u),
\end{equation}
where $\Phi_t$ is the function introduced in Lemma \ref{lemma:Phi_monotonicity_convexity}.

Put $a=1-t$. Using $\widetilde f(x)=x\log x-x+1$ in \eqref{eq:mt_xlogx_first}, we have
\begin{align*}
 &\widetilde f(e^{2au})
       +e^{2(2a-1)u}\widetilde f(e^{-2au})\\
 &\quad=2au e^{2au}-e^{2au}+1
       +e^{2(2a-1)u}\left(-2au e^{-2au}-e^{-2au}+1\right)\\
 &\quad=2au\left(e^{2au}-e^{2(a-1)u}\right)
       -\left(e^{2au}+e^{2(a-1)u}\right)
       +1+e^{2(2a-1)u}\\
 &\quad=2e^{(2a-1)u}
       \left[
       2au\sinh u-\cosh u+\cosh((2a-1)u)
       \right]\\
 &\quad=4e^{(2a-1)u}
       \left[
       au\sinh u-\sinh(au)\sinh((1-a)u)
       \right],
\end{align*}
A direct expansion  of the expression in braces gives
\begin{align*}
 &\widetilde f(e^{2au})
       +e^{2(2a-1)u}\widetilde f(e^{-2au})\\
 &\quad=2au e^{2au}-e^{2au}+1
       +e^{2(2a-1)u}\left(-2au e^{-2au}-e^{-2au}+1\right)\\
 &\quad=4e^{(2a-1)u}
        \left[a u\sinh  u-\sinh(au)\sinh((1-a)u)\right].
\end{align*}
Since
\begin{equation*}
        (e^{2au}-1)^2=4e^{2au}\sinh^2(au),
\end{equation*}
the exponential factors cancel and we obtain
\begin{equation}\label{eq:Phi_first_formula}
        \frac{e^u}{m_t(e^{2u})}
        =
       \frac{(1-t)u\sinh u-\sinh((1-t)u)\sinh(tu)}
       {\sinh^2((1-t)u)}
       =\Phi_t(u).
\end{equation}
This proves \eqref{eq:Phi_metric_identification}.  For $r=p_i/p_j$ and
\begin{equation*}
 u_{ij}:=\frac12|\log(p_i/p_j)|,
\end{equation*}
we  have
\begin{equation}\label{eq:coefficient_Phi}
       \frac1{p_jm_t(p_i/p_j)}
        =\frac{\Phi_t(u_{ij})}{\sqrt{p_ip_j}}.
\end{equation}
By Lemma \ref{lemma:Phi_monotonicity_convexity}, for every $u\geq0$ the map $t\mapsto\Phi_t(u)$ is nondecreasing and convex on $[0,1]$, and it is strictly increasing whenever $u>0$.

Substituting \eqref{eq:coefficient_Phi} into \eqref{eq:metric_xlogx_spectral} gives
\begin{equation}\label{eq:metric_as_Phi_sum}
       g_{\rho,t}(X,X)
       =\sum_{i,j}\frac{\Phi_t(u_{ij})}{\sqrt{p_ip_j}}|X_{ij}|^2.
\end{equation}
Every weight in this finite sum  is  nonnegative. The monotonicity and convexity of $t\mapsto g_{\rho,t}(X,X)$ now follow term by term from the corresponding properties of $\Phi_t$. This proves \eqref{eq:metric-ordering-form-updated} and the convexity assertion.

By \eqref{eq:BKM_representing_function_new} and \eqref{eq:RLD_representing_function_new}, the endpoint quadratic forms are
\begin{equation*}
       g_{\rho,0}=g_\rho^{\mathrm{BKM}},\qquad
       g_{\rho,1}=g_\rho^{\mathrm{RLD}}.
\end{equation*}
Monotonicity  gives
\begin{equation*}
       g_\rho^{\mathrm{BKM}}\leq g_{\rho,t}
         \leq g_\rho^{\mathrm{RLD}},
\end{equation*}
while convexity on $[0,1]$ gives the chord bound
\begin{equation*}
       g_{\rho,t}
       \leq(1-t)g_\rho^{\mathrm{BKM}}
       +t g_\rho^{\mathrm{RLD}}.
\end{equation*}
Finally, since $g_\rho^{\mathrm{BKM}}\leq  g_\rho^{\mathrm{RLD}}$,
\begin{equation*}
        (1-t)g_\rho^{\mathrm{BKM}}
       +t g_\rho^{\mathrm{RLD}}
       \leq g_\rho^{\mathrm{RLD}}.
\end{equation*}
This proves \eqref{eq:BKM-RLD-order-updated}.

If $p_i\neq p_j$, then $u_{ij}>0$, and \eqref{eq:Phi_derivative_t} shows that $\Phi_t(u_{ij})>\Phi_0(u_{ij})$ for every $t>0$. Hence the contribution  of  every nonzero matrix element $X_{ij}$ joining two different eigenspaces is strictly larger than its BKM contribution. Conversely, if $[X,\rho]=0$, then
\begin{equation*}
       (p_i-p_j)X_{ij}=0
\end{equation*}
for all $i,j$. Thus $X_{ij}$ can be nonzero only when $p_i=p_j$, in which case $u_{ij}=0$ and $\Phi_t(u_{ij})=1$ independently of $t$. Formula \eqref{eq:metric_as_Phi_sum} then shows that all the metrics coincide on commuting directions.
\end{proof}

\subsection{Connection between the geodesic and Hessian geometries}

The parameter $t$ first enters the construction at the level of the relative operators: it specifies the position along the geodesic joining $\Delta_{\rho\mid\sigma}$ and $\widehat\Delta_{\rho\mid\sigma}$. After taking the Hessian, the resulting metric is instead encoded by the representing function $m_t$. A priori, it is not clear whether the parameter inherited from the geodesic has an intrinsic meaning in the metric geometry, or whether it is only a label for the family. The behavior of $m_t$ at the boundary provides such an intrinsic interpretation. Indeed, by the spectral formula for the metric, the coefficient associated with two eigenvalues $p_i,p_j$ is
\begin{equation*}
       \frac{1}{p_jm_t(p_i/p_j)}.
\end{equation*}
Consequently, the behavior of $m_t(r)$ as $r\downarrow0$ measures the singularity of the metric when two eigenvalues become strongly separated. This suggests considering, whenever the limit exists, the boundary index
\begin{equation*}
       \beta(m):=\lim_{r\downarrow0}\frac{\log m(r)}{\log r}.
\end{equation*}
The next result shows that the geodesic parameter can be recovered exactly from this boundary invariant of the Hessian metric.

\begin{theorem}[The geodesic parameter as boundary index]
\label{theo:boundary-index-updated}
For every $t\in[0,1]$,
\begin{equation}\label{eq:boundary-index-updated}
    \lim_{r\downarrow0}
    \frac{\log m_t(r)}{\log r}=t.
\end{equation}
More precisely, as $r\downarrow0$,
\begin{equation}\label{eq:boundary-asymptotic}
 m_t(r)=\frac{r^t}{(1-t)|\log r|}\bigl(1+o(1)\bigr),\qquad 0\leq t<1,
 \qquad
 m_1(r)=2r\bigl(1+o(1)\bigr).
\end{equation}
\end{theorem}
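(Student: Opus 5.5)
The plan is to work directly from the closed form \eqref{eq:mt_xlogx_simplified},
\begin{equation*}
 m_t(r)=\frac{r^t(r^{1-t}-1)^2}{(1-t)(r-1)\log r-(r^t-1)(r^{1-t}-1)},\qquad 0\le t<1,
\end{equation*}
and to read off the leading behaviour of the numerator and the denominator separately as $r\downarrow0$. The case $t=1$ is handled by the explicit formula $m_1(r)=2r/(1+r)$ from \eqref{eq:RLD_representing_function_new}. That formula gives $m_1(r)=2r(1+o(1))$ at once, and therefore $\log m_1(r)/\log r=1+\log(2+o(1))/\log r\to1$.

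For $0\le t<1$, the numerator satisfies $r^t(r^{1-t}-1)^2=r^t(1+o(1))$, because $r^{1-t}\to0$.

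For the denominator, the first term is $(1-t)(r-1)\log r=(1-t)|\log r|(1+o(1))$, since $r-1\to-1$ and $\log r=-|\log r|$. The second term is $(r^t-1)(r^{1-t}-1)$. When $t>0$, both factors tend to $-1$, so this product tends to $1$. When $t=0$, the first factor is identically zero. In either case the second term is bounded, and hence it is $o(|\log r|)$. The denominator is therefore $(1-t)|\log r|(1+o(1))$, and since $1-t>0$ its leading coefficient is nonzero.

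Dividing the two expansions yields \eqref{eq:boundary-asymptotic}:
\begin{equation*}
m_t(r)=\frac{r^t}{(1-t)|\log r|}\bigl(1+o(1)\bigr).
\end{equation*}
Taking logarithms gives
\begin{equation*}
\log m_t(r)=t\log r-\log\bigl((1-t)|\log r|\bigr)+o(1).
\end{equation*}
We then divide by $\log r\to-\infty$. The middle term is $O(\log|\log r|)$, which is $o(|\log r|)$, and the $o(1)$ term vanishes after division as well. The quotient therefore tends to $t$, which is \eqref{eq:boundary-index-updated}.

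A small check remains: for small $r$ the simplified expression must agree with the original definition \eqref{eq:mt_xlogx_first}. There is no removable singularity here, because $r\ne1$, and the denominator is positive since $\widetilde f>0$ away from $1$. As a consistency check, at $t=0$ the asymptotic formula gives $m_0(r)\sim 1/|\log r|$, which matches $m_0(r)=(r-1)/\log r$.

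The main, though mild, obstacle is the non-uniformity as $t\uparrow1$. The constant $(1-t)^{-1}$ blows up, and the two terms of the denominator, $(1-t)|\log r|$ and the $O(1)$ term, compete when $(1-t)|\log r|$ is not large. For each fixed $t<1$ this does not matter, since $|\log r|\to\infty$. This is why the statement treats $t=1$ separately, and I would not attempt any uniformity in $t$.
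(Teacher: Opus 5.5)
Your proposal is correct and follows essentially the same route as the paper: both use the simplified formula \eqref{eq:mt_xlogx_simplified}, read off the numerator as $r^t(1+o(1))$ and the denominator as $(1-t)|\log r|(1+o(1))$ because $(r^t-1)(r^{1-t}-1)$ stays bounded, then take logarithms and divide by $\log r$, with $t=1$ handled via $m_1(r)=2r/(1+r)$. Your additional remarks on the $t=0$ case and on the non-uniformity as $t\uparrow1$ are accurate but not needed for the argument.
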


\begin{proof}
Fix first $0\leq t<1$ and put $L:=-\log r$. Then $L\to\infty$ as $r\downarrow0$. From \eqref{eq:mt_xlogx_simplified},
\begin{equation*}
 m_t(r)=\frac{r^t(r^{1-t}-1)^2}
 {(1-t)(r-1)\log r-(r^t-1)(r^{1-t}-1)}.
\end{equation*}
Since $t<1$, one has $r^{1-t}\to0$, and therefore
\begin{equation*}
       (r^{1-t}-1)^2=1+o(1).
\end{equation*}
For the denominator, observe that
\begin{equation*}
 \frac{(1-t)(r-1)\log r-(r^t-1)(r^{1-t}-1)}{(1-t)L}=
   (1-r)-\frac{(r^t-1)(r^{1-t}-1)}{(1-t)L}.
\end{equation*}
On the one hand, the first term converges to one. On the other hand, the numerator in the second term remains bounded as $r\downarrow0$, whereas $L\to\infty$, and hence the second term converges to zero. It follows that
\begin{equation*}
 (1-t)(r-1)\log r-(r^t-1)(r^{1-t}-1)
       =(1-t)L\bigl(1+o(1)\bigr).
\end{equation*}
Combining the last two estimates proves the first asymptotic relation in \eqref{eq:boundary-asymptotic}.

At $t=1$, equation \eqref{eq:RLD_representing_function_new} gives
\begin{equation*}
       m_1(r)=\frac{2r}{1+r}=2r\bigl(1+o(1)\bigr),
\end{equation*}
which proves the second relation in \eqref{eq:boundary-asymptotic}.

For $0\leq t<1$, taking logarithms in the first relation of \eqref{eq:boundary-asymptotic} gives
\begin{equation*}
       \log m_t(r)=t\log r-\log\bigl((1-t)L\bigr)+o(1).
\end{equation*}
Finally, since $\log r=-L$ and $\log L/L\to0$, we obtain
\begin{equation*}
\begin{aligned}
 \frac{\log m_t(r)}{\log r}
 &=t+\frac{\log((1-t)L)}{L}+o(L^{-1})
 \longrightarrow t.
\end{aligned}
\end{equation*}
For $t=1$,
\begin{equation*}
       \frac{\log m_1(r)}{\log r}
       =1+\frac{\log 2-\log(1+r)}{\log r}
       \longrightarrow1.
\end{equation*}
This proves \eqref{eq:boundary-index-updated}.
\end{proof}

Theorem \ref{theo:boundary-index-updated} identifies the same parameter in the two geometries. On the divergence side, $t$ is the geodesic interpolating $\Delta_{\rho\mid\sigma}$ and $\widehat\Delta_{\rho\mid\sigma}$. On the Hessian side, it is the boundary exponent of the representing function. In particular, the label $t$ can be recovered from the metric alone, without referring back to the divergence from which the metric was obtained.

The more precise asymptotics in \eqref{eq:boundary-asymptotic} also show how the metric becomes singular near the boundary. If $r=p_i/p_j\downarrow0$, then, for $0\leq t<1$,
\begin{equation*}
 \frac{1}{p_jm_t(p_i/p_j)}
 =(1-t)\frac{|\log(p_i/p_j)|}{p_i^t p_j^{1-t}}\bigl(1+o(1)\bigr),
\end{equation*}
whereas at the RLD endpoint
\begin{equation*}
 \frac{1}{p_jm_1(p_i/p_j)}
 =\frac{1+p_i/p_j}{2p_i}
 =\frac{1}{2p_i}\bigl(1+o(1)\bigr).
\end{equation*}
Thus the BKM metric, corresponding to $t=0$, has only a logarithmic boundary divergence, the RLD metric has the reciprocal divergence of order $p_i^{-1}$, and every intermediate metric has the fractional power $p_i^{-t}$, up to the logarithmic correction. This provides a direct connection between the affine-invariant geodesic interpolation and the boundary geometry of its Hessian metrics.

{
\subsection{Comparison with the Hasegawa--Petz metric }
\label{subsec:comparison-known-metric-families}

The operator monotonicity obtained from data processing places the functions $m_t$ inside the Morozova--Chentsov--Petz and Kubo--Ando theories \cite{kuboAndo1980,morozovaChentsov1991}.  For positive definite $A$ and $B$, let
\begin{equation}\label{eq:Kubo-Ando-mean-mt}
 A\mathbin{\mathfrak M_t}B
 :=A^{1/2}m_t(A^{-1/2}BA^{-1/2})A^{1/2}
\end{equation}
be the Kubo--Ando mean represented by $m_t$ \cite{kuboAndo1980}.  At the endpoints,
\begin{equation*}
 A\mathbin{\mathfrak M_0}B=L(A,B),
 \qquad
 A\mathbin{\mathfrak M_1}B=A!B,
\end{equation*}
where $L$ is the logarithmic operator mean and
\begin{equation*}
 A!B:=2(A^{-1}+B^{-1})^{-1}
\end{equation*}
is the harmonic operator mean.

Apart from the interpolation studied in Section \ref{sec:Interpolation_BKM_RLD}, there is, however, another well-known path between the same two endpoints, namely the negative-parameter branch of the Hasegawa--Petz extension of the Wigner--Yanase--Dyson metrics \cite{hasegawa1993,petzHasegawa1996,gibiliscoIsola2004paired}.  The definition is given explicitly in \cite{hasegawa1997non}:
\begin{equation}
    m_p^{\mathrm{HP}}(r)=p(1-p)\frac{(r-1)^2}{(r^p-1)(r^{1-p}-1)}\, , \quad p\in  [-1,2]\, ,
\end{equation}
  where in this interval  the parameter $p$  satisfies monotonicity  \cite[Proposition 5]{besenyei2012hasegawa}. In the parameterization used here,  we let $t=-p$, i.e. the negative branch, so we can then write  it as 
\begin{equation}\label{eq:HP-representing-function}
 m_t^{\mathrm{HP}}(r)
 :=t(1+t)\frac{r^t(r-1)^2}
 {(r^t-1)(r^{1+t}-1)},
\end{equation}
and we  will restrict to the values $t\in [0,1]$. The  continuous endpoint values are
\begin{equation*}
 m_0^{\mathrm{HP}}(r)=\frac{r-1}{\log r},
 \qquad
 m_1^{\mathrm{HP}}(r)=\frac{2r}{1+r}\, ,
\end{equation*}
which coincide with \eqref{eq:BKM_representing_function_new} and \eqref{eq:RLD_representing_function_new}. 
The parameter dependence and mean-theoretic properties of these functions were studied
further by Besenyei \cite{besenyei2012hasegawa}.

This path can be recovered directly from the general Hessian formula of Theorem \ref{theo:Hessian_metric_general}.  For $0<t\leq1$, define the function
\begin{equation}\label{eq:HP-generator-Ft}
 F_t(x)
 :=\frac{x^{1+t}-(1+t)x+t}{t(1+t)},
\end{equation}
and define its continuous value at $t=0$ by
\begin{equation*}
 F_0(x):=x\log x-x+1.
\end{equation*}
Since $x\mapsto x^{1+t}$ is operator convex for $1+t\in[1,2]$, every $F_t$ is operator convex, and
\begin{equation*}
 F_t(1)=F_t'(1)=0,
 \qquad
 F_t''(1)=1.
\end{equation*}
Notice also that
\begin{equation*}
 F_1(x)=\frac12(x-1)^2.
\end{equation*}
To distinguish the two independent deformations, for $0\leq\tau,t\leq1$ we use the notation
\begin{equation}\label{eq:two-parameter-Hessian-family}
 g_{\rho;\tau,t}(X,Y)
 :=g_{\rho,\tau}^{(F_t)}(X,Y).
\end{equation}
The first parameter $\tau$ chooses the relative operator $\Gamma_\tau=\Gamma_0\#_\tau \Gamma_1$, whereas the second parameter $t$ chooses the scalar generator $F_t$.

\begin{proposition}[Two distinct BKM--RLD interpolations]
\label{prop:comparison-Hasegawa-Petz}
Let $0\leq\tau,t\leq1$, and let $g_{\rho;\tau,t}$ be defined by \eqref{eq:two-parameter-Hessian-family}, with the endpoint values understood by continuity.  Then the following statements hold.

\begin{enumerate}[label=\textup{(\roman*)}]
\item Keeping the standard relative modular operator fixed and varying the generator recovers the Hasegawa--Petz path:
\begin{equation}\label{eq:HP-as-standard-edge}
 g_{\rho;0,t}=g_{\rho,t}^{\mathrm{HP}},
 \qquad
 m_{F_t,0}(r)=m_t^{\mathrm{HP}}(r).
\end{equation}

\item Keeping the $F_0$ fixed and varying the relative operator recovers the geodesic path studied in this paper:
\begin{equation}\label{eq:geodesic-as-log-edge}
 g_{\rho;\tau,0}=g_{\rho,\tau}.
\end{equation}

\item The quadratic-generator edge and the maximal-relative-operator edge are both constant and equal to the RLD metric:
\begin{equation}\label{eq:constant-RLD-edges}
 g_{\rho;\tau,1}=g_\rho^{\mathrm{RLD}}
 =g_{\rho;1,t}.
\end{equation}

\item 
At the same parameter value $0<t<1$, the geodesic and Hasegawa--Petz paths are different.  More precisely,
\begin{equation}\label{eq:comparison-boundary-HP}
 \beta(m_t)=\beta(m_t^{\mathrm{HP}})=t,
 \qquad
 \lim_{r\downarrow0}\frac{m_t(r)}{m_t^{\mathrm{HP}}(r)}=0.
\end{equation}
\end{enumerate}
\end{proposition}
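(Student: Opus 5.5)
Everything reduces to the representing-function formula of Theorem \ref{theo:Hessian_metric_general}: two metrics in this family agree once their representing functions agree, since each $F_t$ is centered with $\kappa_{F_t}=F_t''(1)=1$. The four items are then scalar identities or limits in $r$, with the endpoints $\tau=1$ or $t=0$ handled by the continuity argument of Corollary \ref{coro:Coro24}.

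For (i), set $\tau=0$ in \eqref{eq:Definition_mft}, so $m_{F_t,0}(r)=(r-1)^2/\bigl(F_t(r)+rF_t(r^{-1})\bigr)$. For $t>0$ I would compute
\begin{equation*}
t(1+t)\bigl[F_t(r)+rF_t(r^{-1})\bigr]
=r^{1+t}-(1+t)r+t+r^{-t}-(1+t)+tr
=r^{-t}(r^t-1)(r^{1+t}-1).
\end{equation*}
To verify the factorization, expand the right-hand side as $r^{-t}(r^{1+2t}-r^t-r^{1+t}+1)$. Substituting then gives \eqref{eq:HP-representing-function}. At $t=0$ the generator is $F_0=x\log x-x+1$, so the claim is the case $t=0$ of \eqref{eq:BKM_representing_function_new}, consistent with continuity in $t$.

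For (ii), note that $F_0$ differs from $x\log x$ only by an affine term, which does not affect the Hessian. Hence $g_{\rho;\tau,0}=g_{\rho,\tau}$ immediately. For (iii), the identity $g_{\rho;1,t}=g^{\mathrm{RLD}}_\rho$ is Corollary \ref{coro:Coro24}, because $\kappa_{F_t}=1$. For the other edge I take $F_1=\tfrac12(x-1)^2$ and $q=r^{1-\tau}$ in \eqref{eq:Definition_mft}. The denominator becomes $\tfrac12(q-1)^2+\tfrac12 r^{1-2\tau}(q^{-1}-1)^2=\tfrac12(q-1)^2(1+r^{-1})$, using $r^{1-2\tau}q^{-2}=r^{-1}$. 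This gives $m=2r/(1+r)$ for every $\tau<1$, and continuity covers $\tau=1$.

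For (iv), Theorem \ref{theo:boundary-index-updated} already gives $\beta(m_t)=t$. For Hasegawa--Petz with $0<t<1$, I would let $r\downarrow0$: then $(r^t-1)(r^{1+t}-1)\to1$ and $(r-1)^2\to1$, so $m_t^{\mathrm{HP}}(r)=t(1+t)r^t(1+o(1))$ and $\beta(m_t^{\mathrm{HP}})=t$. The ratio then satisfies
\begin{equation*}
m_t(r)/m_t^{\mathrm{HP}}(r)\sim\frac{1}{t(1+t)(1-t)|\log r|}\to0,
\end{equation*}
by the asymptotics \eqref{eq:boundary-asymptotic}. This vanishing ratio shows that the two paths are different. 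The main obstacle is the exact algebraic factorization in (i); the remaining steps are substitutions and leading-order asymptotics.
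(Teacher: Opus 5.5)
Your proposal is correct and follows the paper's proof essentially step by step. It uses the same factorization $F_t(r)+rF_t(r^{-1})=r^{-t}(r^t-1)(r^{1+t}-1)/(t(1+t))$ for (i), affine-invariance of the Hessian for (ii), the $F_1=\tfrac12(x-1)^2$ computation together with Corollary~\ref{coro:Coro24} for (iii), and the asymptotics \eqref{eq:boundary-asymptotic} against $m_t^{\mathrm{HP}}(r)=t(1+t)r^t(1+o(1))$ for (iv). The only cosmetic difference is at $t=0$ in (i): you identify $m_{F_0,0}$ directly with \eqref{eq:BKM_representing_function_new} instead of passing to the limit $F_t\to F_0$, and this is equally valid.
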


\begin{proof}
\textup{(i)}. For $t>0$, the centered function in \eqref{eq:HP-generator-Ft} has $\kappa_{F_t}=F_t''(1)=1$.  At the standard endpoint $\tau=0$, Theorem \ref{theo:Hessian_metric_general} or \cite[eq. 14]{hiai-2012} gives
\begin{equation*}
 m_{F_t,0}(r)
 =\frac{(r-1)^2}{F_t(r)+rF_t(r^{-1})}.
\end{equation*}
A direct computation yields
\begin{align*}
 F_t(r)+rF_t(r^{-1})
 &=\frac{r^{1+t}+r^{-t}-r-1}{t(1+t)}\\
 &=\frac{(r^t-1)(r^{1+t}-1)}{t(1+t)r^t}.
\end{align*}
Consequently,
\begin{equation*}
 m_{F_t,0}(r)
 =t(1+t)\frac{r^t(r-1)^2}
 {(r^t-1)(r^{1+t}-1)}
 =m_t^{\mathrm{HP}}(r),
\end{equation*}
which proves \eqref{eq:HP-as-standard-edge} for $t>0$.  The value at $t=0$ follows from
\begin{equation*}
 \lim_{t\downarrow0}F_t(x)=x\log x-x+1
\end{equation*}
and from the continuous extension of \eqref{eq:HP-representing-function}.  At the level of divergences, this calculation says that the Hasegawa--Petz path is obtained from the standard relative modular operator through
\begin{equation*}
 D_{F_t}^{0}(\rho\Vert\sigma)
 =\frac{\tr\rho^{1+t}\sigma^{-t}-1}{t(1+t)},
 \qquad t>0,
\end{equation*}
where the affine terms in $F_t$ reduce to $-1$ because $\tr\rho=\tr\sigma=1$.

\textup{(ii)}. For $t=0$, the function $F_0(x)=x\log x-x+1$ differs from $x\log x$ only by an affine function.  Affine terms do not change the Hessian, and therefore
\begin{equation*}
 g_{\rho;\tau,0}
 =g_{\rho,\tau}^{(F_0)}
 =g_{\rho,\tau}^{(x\log x)}
 =g_{\rho,\tau},
\end{equation*}
which proves \eqref{eq:geodesic-as-log-edge}.

\textup{(iii)}. We next prove \eqref{eq:constant-RLD-edges}.  Since
\begin{equation*}
 F_1(x)=\frac12(x-1)^2,
\end{equation*}
putting $q=r^{1-\tau}$ in \eqref{eq:Definition_mft} gives
\begin{align*}
 F_1(q)+r^{1-2\tau}F_1(q^{-1})
 &=\frac12(q-1)^2
   +\frac12r^{1-2\tau}\frac{(q-1)^2}{q^2}\\
 &=\frac12(q-1)^2\left(1+\frac1r\right).
\end{align*}
Hence
\begin{equation*}
 m_{F_1,\tau}(r)=\frac{2r}{1+r}
\end{equation*}
for every $0\leq\tau<1$. For $\tau=1$, the result follows by Corollary \ref{coro:Coro24}


\textup{(iv)}.  For fixed $0<t<1$, equation \eqref{eq:HP-representing-function} gives
\begin{equation*}
 m_t^{\mathrm{HP}}(r)
 =t(1+t)r^t\bigl(1+o(1)\bigr),
 \qquad r\downarrow0.
\end{equation*}
Combining this with \eqref{eq:boundary-asymptotic} yields
\begin{equation*}
 \frac{m_t(r)}{m_t^{\mathrm{HP}}(r)}
 =\frac{1+o(1)}
 {t(1+t)(1-t)|\log r|}
 \longrightarrow0.
\end{equation*}
Both representing functions have boundary exponent $t$, but their leading asymptotics differ by a logarithmic factor.  This proves \eqref{eq:comparison-boundary-HP}.
\end{proof}

Proposition \ref{prop:comparison-Hasegawa-Petz} can be summarized by
the two-parameter deformation square in
Figure \ref{fig:HP-geodesic-square}. The horizontal coordinate
$\tau$ changes the relative operator, while the vertical coordinate
$t$ changes the scalar generator.

\begin{figure}[t]
\centering
\begin{tikzpicture}[
  >=Latex,
  metric/.style={
    draw=black!70,
    rounded corners=2pt,
    fill=white,
    align=center,
    inner sep=6pt,
    text width=3.25cm,
    minimum height=1.0cm,
    font=\small
  },
  present/.style={
    -{Latex[length=2.8mm]},
    very thick,
    blue!70!black
  },
  hp/.style={
    -{Latex[length=2.8mm]},
    very thick,
    orange!85!black
  },
  constant/.style={
    -{Latex[length=2.5mm]},
    thick,
    dashed,
    black!55
  },
  lab/.style={
    fill=white,
    inner sep=2pt,
    align=center,
    font=\footnotesize
  }
]

\node[metric] (bkm) at (0,0)
 {$g_{\rho;0,0}=g_\rho^{\mathrm{BKM}}$\\[-1mm]
  };

\node[metric] (rldg) at (7.7,0)
 {$g_{\rho;1,0}=g_\rho^{\mathrm{RLD}}$\\[-1mm]
  };

\node[metric] (rldhp) at (0,4.6)
 {$g_{\rho;0,1}=g_\rho^{\mathrm{RLD}}$\\[-1mm]
  };

\node[metric] (rldboth) at (7.7,4.6)
 {$g_{\rho;1,1}=g_\rho^{\mathrm{RLD}}$\\[-1mm]
  };

\draw[present]
 (bkm.east) --
 node[lab,below=4pt]
 {\textbf{geodesic path}\\[-1mm]
  $\tau:0\to1$; vary $\Gamma_\tau$, keep $F_0$ fixed}
 (rldg.west);

\draw[hp]
 (bkm.north) --
 node[lab,left=4pt]
 {\textbf{Hasegawa--Petz path}\\[-1mm]
  $t:0\to1$; vary $F_t$, keep $\Gamma_0$ fixed}
 (rldhp.south);

\draw[constant]
 (rldhp.east) --
 node[lab,above=4pt]
 {$F_1(x)=\tfrac12(x-1)^2$\\[-1mm]
  RLD for every $\tau$}
 (rldboth.west);

\draw[constant]
 (rldg.north) --
 node[lab,right=4pt]
 {$\Gamma_1=\widehat\Delta_{\rho\mid\sigma}$\\[-1mm]
  RLD for every $t$}
 (rldboth.south);

\node[
  draw=black!35,
  rounded corners=2pt,
  fill=black!2,
  align=center,
  inner sep=5pt,
  font=\small
]
 at (3.85,2.3)
 {$g_{\rho;\tau,t}:=g_{\rho,\tau}^{(F_t)}$\\[-1mm]
  {\scriptsize $\tau$: relative-operator deformation;
   $t$: generator deformation}};

\end{tikzpicture}

\caption{
The two distinct BKM--RLD interpolations inside the family
$g_{\rho;\tau,t}=g_{\rho,\tau}^{(F_t)}$.
The lower blue edge is the relative-operator geodesic of the present
paper, with the logarithmic generator $F_0$ fixed. The left orange
edge is the Hasegawa--Petz path, with the standard relative operator
$\Gamma_0$ fixed. The upper and right dashed edges are constant and
equal to the RLD metric.
}
\label{fig:HP-geodesic-square}
\end{figure}

The lower horizontal edge is the geodesic interpolation of this paper: the function  $F_0$ is fixed, while $\Gamma_\tau=\Gamma_0\#_\tau \Gamma_1$ moves from $\Gamma
_0=\Delta_{\rho\mid\sigma}$ to $\Gamma
_1=\widehat\Delta_{\rho\mid\sigma}$.  The left vertical edge is the Hasegawa--Petz interpolation: the standard relative modular operator $\Gamma_0$ is fixed, while the score changes from the logarithmic generator $F_0$ to the quadratic generator $F_1$.  The upper edge is constant because the quadratic generator produces RLD for every relative-operator parameter, and the right edge is constant because the maximal relative operator produces RLD for every generator in the family.  Thus the two paths reach the same endpoint by deforming different ingredients of the divergence.

The common boundary exponent does not erase this structural difference.  For $0<t<1$,
\begin{equation*}
 m_t(r)\sim\frac{r^t}{(1-t)|\log r|},
 \qquad
 m_t^{\mathrm{HP}}(r)\sim t(1+t)r^t,
 \qquad r\downarrow0.
\end{equation*}
Hence both metrics have the same coarse power-distribution sensitivity $r^{-t}$ near the boundary, but the geodesic metric contains an additional logarithmic amplification in its kernel $1/m_t(r)$.

\section{The likelihood surface and the  classicalization of the experiments}\label{sec:Section_2}

Let $(\cX,\cF)$ be a measurable space, where $\cX$ is the set of all possible outcomes (the sample space) and $\cF$ is a $\sigma$-algebra of subsets of $\cX$. Let $P_0$ and $P_1$ be two probability measures corresponding to the null hypothesis $H_0$ and the alternative hypothesis $H_1$, respectively. Suppose that $P_0$ and $P_1$ are dominated by a common $\sigma$-finite and $\sigma$-additive measure $\mu$, i.e. if $A \in \mathcal{F}$ and $\mu(A)=0$, then $P_0(A)=P_1(A)=0$. By the Radon--Nikodym Theorem \cite[Theorem 3.8]{folland1999}, they admit densities
\begin{equation*}
f_0(z)=\frac{\dd P_0}{\dd \mu}(z)\, , \quad  f_1(z)=\frac{\dd P_1}{\dd \mu}(z)\, .
\end{equation*}
We will additionally assume that $P_0$ and $P_1$ are mutually absolutely
continuous, written $P_0\sim P_1$, so that the likelihood ratio is strictly
positive and the log-likelihood is finite almost everywhere; exponential
moments will be considered on their effective domain.

  The basic statistical object for distinguishing the two hypotheses is then the likelihood ratio \cite[Section 3.2]{lehmannRomano2005}.
\begin{equation*}
\Lambda(z)=\frac{f_1(z)}{f_0(z)}=\frac{\dd P_1}{\dd P_0}(z)\, ,
\end{equation*}
defined $P_0$-almost everywhere.  We use the convention $\Lambda=0$ on a $P_0$-null set if a pointwise representative is needed.
 Since $P_1$ is a probability measure, the likelihood ratio satisfies the normalization identity
\begin{equation*}
\bE_{P_0}[\Lambda]=\int_{\cX}\Lambda(z)\dd P_0(z)=1\, .
\end{equation*}
The Kullback--Leibler divergence is \cite{kullbackLeibler1951} 
\begin{equation*}
 D_{\mathrm{KL}}(P_1\Vert P_0)
 :=\int_{\cX}\log\Lambda(z)\,\dd P_1(z)
 =\int_{\cX}\Lambda(z)\log\Lambda(z)\,\dd P_0(z)\, .
\end{equation*}
 Rather than treating information quantities such as the Kullback--Leibler (KL) divergence, the Fisher information or the Chernoff information as unrelated quantities, it is possible to derive all these information quantities in terms of the log-likelihood cumulant function 
\begin{equation*}
\psi_{P_1,P_0}(s)=\log \bE_{P_0}[\Lambda^s]=  \log\int_{\cX}f_1(z)^sf_0(z)^{1-s}\dd \mu (z)\, ,
\end{equation*}
which can be also expressed in terms of the classical R\'enyi $s$-divergence $D_s(P_1\Vert P_0)$, $s\in (0,1)\cup (1,\infty)$, $\psi_{P_1,P_0}(s)=(s-1)D_s(P_1\Vert P_0)$ \cite{renyi1961}. In terms of the corresponding tilted distribution $P_s$   defined as 
\begin{equation*}
\frac{\dd P_s}{\dd P_0}(z)=\exp(s\log \Lambda(z)-\psi_{P_1,P_0}(s))=\frac{\Lambda (z)^s}{\bE_{P_0}[\Lambda^s]}\, ,
\end{equation*}
 we can rewrite the log-likelihood cumulant function as
\begin{equation}
    \psi_{P_1,P_0}(s)=\log \bE_{P_0}[e^{s\log \Lambda}]\, .
\end{equation}
In particular, $(P_s)_{s \in [0,1]}$ is the classical exponential family interpolating between $P_{s=0}=P_0$ and $P_{s=1}=P_1$ \cite{brown1986,amariNagaoka2000,demboZeitouni1998}. 

Whenever the corresponding derivatives of $\psi_{P_1,P_0}(s)$ are finite, this function satisfies
\begin{equation*}
\psi_{P_1,P_0}(0)=\psi_{P_1,P_0}(1)=0\, , \quad \psi_{P_1,P_0}'(1)=D_{KL}(P_1\Vert P_0)\, ,
\end{equation*}
and
\begin{equation*}
\cI(s)=\Var_{P_s}\left( \log \frac{\dd P_1}{\dd P_0} \right)= \psi_{P_1,P_0}''(s)\, ,
\end{equation*}
Moreover,
\begin{equation*}
C(P_0,P_1)=-\min_{0\leq s \leq 1} \psi_{P_1,P_0}(s)
\end{equation*}
is the Chernoff information associated with the binary hypothesis-testing problem $(P_0,P_1)$. This quantity determines the optimal asymptotic exponential decay of the Bayesian error probability in binary i.i.d. hypothesis testing, see e.g. \cite{chernoff1952} or \cite[Chapter 11]{coverThomas2006}.

\subsection{The canonical classical experiment}

The classical discussion above is organized around a single likelihood ratio.  For two noncommuting quantum states there is no distinguished operator quotient $\rho/\sigma$.  In the present construction, however, each point $\Gamma_t$ of the relative-operator geodesic in \eqref{eq:relative-operator-geodesic-updated} provides a canonical positive operator which can be interpreted as a likelihood ratio after taking its spectral distribution.  In addition, the left and right multiplication operators defining $\Gamma_t$ commute and are strictly positive and self-adjoint on the Hilbert--Schmidt space, so $\Gamma_t$ is strictly positive and admits the usual  functional calculus.  Under the standard identification $\mathcal B_2(\cH)\simeq\cH\otimes\overline{\cH}$, the unit vector $\Omega_\sigma$ will denote the canonical Hilbert--Schmidt purification of $\sigma$.

As a quantum analogue of the log-likelihood cumulant,   we introduce the two-parameter surface
\begin{equation}\label{eq:intro-master-surface}
    \Psi_{\rho,\sigma}(t,s)
    :=\log\left\langle
       \Omega_\sigma,\Gamma_t^s\Omega_\sigma
      \right\rangle_{\mathrm{HS}},
\end{equation}
for the study of the  geodesic $f$-divergences, which we will call the log-likelihood surface. The role of the two parameters is different.  The parameter $t$ selects a point of the quantum relative-operator geodesic, and hence a classical experiment associated with the spectral measurement of $\Gamma_t$.  Once $t$ is fixed, the parameter $s$ is the usual exponential-tilting parameter of that experiment.  The following theorem makes this statement precise.  It may be viewed as a spectral-measure extension, along the whole geodesic, of the classicalization of Petz quasi-entropies by the relative modular operator \cite{nussbaumSzkola2009}, which was shown for $f$-divergences in \cite{androulakisJohn2024}, and for maximal $f$-divergences and the commutant Radon-Nykodim derivative in \cite{matsumoto2018maximal}.

Let
\begin{equation}\label{eq:Tt-spectral-decomposition}
\Gamma_t=\sum_{z\in\mathcal Z_t}
\lambda_{t,z}E_{t,z}
\end{equation}
be the spectral decomposition of $\Gamma_t$, where the numbers $\lambda_{t,z}>0$ are distinct and the $E_{t,z}$ are the corresponding spectral projections.
Here $\mathcal Z_t$ is a finite index set containing one label for each distinct eigenvalue of $\Gamma_t$. The projections satisfy
\begin{equation*}
 E_{t,z}E_{t,z'}=\delta_{z,z'}E_{t,z},\qquad
 \sum_{z\in\mathcal Z_t}E_{t,z}=I_{\mathcal B_2(\cH)},
\end{equation*}
and $\{E_{t,z}\}_{z\in\mathcal Z_t}$ is the spectral projection-valued measure (PVM) of $\Gamma_t$.

Define
\begin{equation}\label{eq:pt-qt-definitions}
q_t(z)
:=\langle\Omega_\sigma,
E_{t,z}\Omega_\sigma\rangle_{\mathrm{HS}},
\qquad
p_t(z):=\lambda_{t,z}q_t(z).
\end{equation}
Outcomes with $q_t(z)=0$ may be removed from $\mathcal Z_t$, since by definition, $p_t(z)=0$ for such outcomes as well. We will denote
\begin{equation*}
 \mathcal Z_t^+:=\{z\in\mathcal Z_t:q_t(z)>0\}\, ,
\end{equation*}
and  then the likelihood ratio will only be used on $\mathcal Z_t^+$, where both $p_t$ and $q_t$ are strictly positive.  Whenever a sum is written over the full spectral set $\mathcal Z_t$, an element $z \in \mathcal{Z}_t$ such that $q_t(z)=0$  is understood through the spectral expression as: $q_t(z)f(p_t(z)/q_t(z))$ means $q_t(z)f(\lambda_{t,z})=0$, and $p_t(z)^sq_t(z)^{1-s}$ means $\lambda_{t,z}^sq_t(z)=0$.  Equivalently, all such sums may be restricted to $\mathcal Z_t^+$.  This convention makes \eqref{eq:exact-classicalization-updated} and \eqref{eq:Psi-classical-cumulant-updated} unambiguous for all real $s$ while retaining the full spectral decomposition of the identity.

\begin{theorem}[Geodesic likelihood-surface theorem]
\label{theo:geodesic-likelihood-surface}
Let $\rho,\sigma$ be invertible states and let $t\in[0,1]$. Then
$\Omega_{\rho,t}:=\Gamma_t^{1/2}\Omega_\sigma$ is a purification of $\rho$, and the spectral
measurement of $\Gamma_t$ on $\Omega_{\rho,t}$ and $\Omega_\sigma$
produces the probability distributions $p_t$ and $q_t$, respectively.
Moreover,
\begin{equation}\label{eq:likelihood-ratio-spectrum}
    \frac{p_t(z)}{q_t(z)}=\lambda_{t,z}.
\end{equation}
is the likelihood ratio of the experiment. Consequently, for every operator convex function $f:(0,\infty)\to \mathbb{R}$,
\begin{equation}\label{eq:exact-classicalization-updated}
    D_f^t(\rho\Vert\sigma)
    =
    D_f(p_t\Vert q_t),
\end{equation}
where $D_f$ denotes the classical $f$-divergence, and
\begin{equation}\label{eq:Psi-classical-cumulant-updated}
    \Psi_{\rho,\sigma}(t,s)
    =
    \log\sum_{z} p_t(z)^s q_t(z)^{1-s},
    \qquad s\in\mathbb{R}.
\end{equation}
In particular,
\begin{equation}\label{eq:Psi-normalization-boundary}
    \Psi_{\rho,\sigma}(t,0)
    =\Psi_{\rho,\sigma}(t,1)=0.
\end{equation}
Thus, for every $t\in [0,1]$,  $\Gamma_t$ defines a classical binary experiment whose likelihood ratio
is its spectrum and which exactly represents both the geodesic
$f$-divergences and the likelihood-cumulant profile.
\end{theorem}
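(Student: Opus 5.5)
The plan is to reduce every assertion to the functional calculus of $\Gamma_t$, with one operator computation doing the real work: showing that $\Omega_{\rho,t}$ purifies $\rho$.

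\textbf{Purification.} The key identity is
\begin{equation*}
\Gamma_t\Omega_\sigma=\rho^{1-t}\sigma^{1/2}K_t=\rho^{1-t}\sigma^{1/2}\sigma^{-1/2}\rho^t\sigma^{-1/2}=\rho\,\sigma^{-1/2}.
\end{equation*}
Since $\Gamma_t\ge0$ is self-adjoint on $\mathcal B_2(\cH)$, the vector $\Omega_{\rho,t}=\Gamma_t^{1/2}\Omega_\sigma$ satisfies
\begin{equation*}
\langle\Omega_{\rho,t},L_A\Omega_{\rho,t}\rangle=\langle\Omega_\sigma,\Gamma_t^{1/2}L_A\Gamma_t^{1/2}\Omega_\sigma\rangle .
\end{equation*}
Here $L_A$ commutes with $R_{K_t}$ but not with $L_{\rho^{1-t}}$, so I will instead factor $\Gamma_t^{1/2}=L_{\rho^{(1-t)/2}}R_{K_t^{1/2}}$. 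This is legitimate because the left and right factors commute and are positive. Then $\Omega_{\rho,t}=\rho^{(1-t)/2}\sigma^{1/2}K_t^{1/2}$, and the reduced state $\Omega_{\rho,t}\Omega_{\rho,t}^*$ on $\cH$ is
\begin{equation*}
\rho^{(1-t)/2}\sigma^{1/2}K_t\sigma^{1/2}\rho^{(1-t)/2}=\rho^{(1-t)/2}\rho^t\rho^{(1-t)/2}=\rho .
\end{equation*}
So $\langle\Omega_{\rho,t},L_A\Omega_{\rho,t}\rangle=\tr(A\rho)$ for all $A$, which is exactly the purification claim. I expect this identification to be the only genuinely non-routine step. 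The main things to check are the factorization of the square root and using the correct convention for $\Omega\Omega^*$ versus $\Omega^*\Omega$, which the definition $\langle X,Y\rangle=\tr X^*Y$ fixes.

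\textbf{Distributions.} By the definitions, $q_t(z)=\langle\Omega_\sigma,E_{t,z}\Omega_\sigma\rangle\ge0$. These sum to $\|\Omega_\sigma\|^2=\tr\sigma=1$ because the spectral projections resolve the identity. The outcome probabilities on $\Omega_{\rho,t}$ are
\begin{equation*}
\langle\Gamma_t^{1/2}\Omega_\sigma,E_{t,z}\Gamma_t^{1/2}\Omega_\sigma\rangle=\lambda_{t,z}q_t(z)=p_t(z),
\end{equation*}
since $E_{t,z}$ commutes with $\Gamma_t^{1/2}$ and $E_{t,z}\Gamma_t=\lambda_{t,z}E_{t,z}$. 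They sum to $\|\Omega_{\rho,t}\|^2=\tr\rho=1$, which follows from the purification step. The ratio identity \eqref{eq:likelihood-ratio-spectrum} on $\mathcal Z_t^+$ is then immediate.

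\textbf{Classicalization.} For any $f$, the spectral theorem gives $f(\Gamma_t)=\sum_z f(\lambda_{t,z})E_{t,z}$. Hence
\begin{equation*}
D_f^t(\rho\Vert\sigma)=\sum_z q_t(z)f(\lambda_{t,z})=\sum_z q_t(z)f\!\left(p_t(z)/q_t(z)\right),
\end{equation*}
using the stated convention for outcomes with $q_t(z)=0$. This proves \eqref{eq:exact-classicalization-updated}. Taking $f(x)=x^s$ in the same way gives
\begin{equation*}
\langle\Omega_\sigma,\Gamma_t^s\Omega_\sigma\rangle=\sum_z\lambda_{t,z}^sq_t(z)=\sum_z p_t(z)^sq_t(z)^{1-s}.
\end{equation*}
This sum is positive, so its logarithm is defined, and this yields \eqref{eq:Psi-classical-cumulant-updated}. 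Finally, the boundary values \eqref{eq:Psi-normalization-boundary} follow from $\sum_z q_t(z)=\sum_z p_t(z)=1$.
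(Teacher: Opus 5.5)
Your proof is correct and follows essentially the same route as the paper. The shared steps are the factorization $\Gamma_t^{1/2}=L_{\rho^{(1-t)/2}}R_{K_t^{1/2}}$ giving $\Omega_{\rho,t}\Omega_{\rho,t}^*=\rho$, the commutation of $E_{t,z}$ with $\Gamma_t^{1/2}$ to identify $p_t$, and spectral functional calculus for $D_f^t$ and $\Psi$. The only cosmetic difference is that you normalize $p_t$ via $\|\Omega_{\rho,t}\|^2=\tr\rho$. The paper instead uses your (otherwise unused) identity $\Gamma_t\Omega_\sigma=\rho\sigma^{-1/2}$ to get $\langle\Omega_\sigma,\Gamma_t\Omega_\sigma\rangle_{\mathrm{HS}}=1$, which is the same quantity.
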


\begin{proof}
Fix $t\in[0,1]$.  Since the $E_{t,z}$ are orthogonal projections, the quantities $q_t(z)$ in \eqref{eq:pt-qt-definitions} are nonnegative.  Moreover, the spectral projections resolve the identity on $\mathcal B_2(\cH)$, and hence
\begin{equation*}
 \sum_{z\in\mathcal Z_t}q_t(z)
 =\langle\Omega_\sigma,\Omega_\sigma\rangle_{\mathrm{HS}}
 =\tr\sigma=1.
\end{equation*}
Moreover, the quantities $p_t(z)$ are also nonnegative because every $\lambda_{t,z}$ is strictly positive.  To prove their normalization, first observe from \eqref{eq:relative-operator-geodesic-updated} that
\begin{align*}
 \Gamma_t\Omega_\sigma
 &=\rho^{1-t}\sigma^{1/2}
   \bigl(\sigma^{-1/2}\rho^t\sigma^{-1/2}\bigr)\\
 &=\rho^{1-t}\rho^t\sigma^{-1/2}
 =\rho\sigma^{-1/2}.
\end{align*}
Using \eqref{eq:Tt-spectral-decomposition} and the cyclicity of the trace, we obtain
\begin{align*}
 \sum_{z\in\mathcal Z_t}p_t(z)
 &=\sum_{z\in\mathcal Z_t}\lambda_{t,z}
   \langle\Omega_\sigma,E_{t,z}\Omega_\sigma\rangle_{\mathrm{HS}}\\
 &=\langle\Omega_\sigma,\Gamma_t\Omega_\sigma\rangle_{\mathrm{HS}}\\
 &=\tr\bigl(\sigma^{1/2}\rho\sigma^{-1/2}\bigr)\\
 & =1 \, .
\end{align*}
Thus $p_t$ and $q_t$ are probability distributions.  The identity \eqref{eq:likelihood-ratio-spectrum} follows immediately from the definition $p_t(z)=\lambda_{t,z}q_t(z)$ whenever $q_t(z)>0$.

  By \eqref{eq:relative-operator-geodesic-updated}, $L_{\rho^{1-t}}$ and $R_{K_t}$ commute and are positive, so their positive square roots also commute and
\begin{equation*}
 \Gamma_t^{1/2}
 =L_{\rho^{(1-t)/2}}R_{K_t^{1/2}}.
\end{equation*}
Consequently, as a Hilbert--Schmidt vector,
\begin{equation*}
 \Omega_{\rho,t}
 =\rho^{(1-t)/2}\sigma^{1/2}K_t^{1/2}.
\end{equation*}
Under the vectorization identification $\mathcal B_2(\cH)\simeq\cH\otimes\overline{\cH}$, a Hilbert--Schmidt vector $X$ has left reduced density matrix $XX^*$.  Hence the left reduced density matrix of $\Omega_{\rho,t}$ is
\begin{align*}
 &\rho^{(1-t)/2}\sigma^{1/2}K_t^{1/2}
 K_t^{1/2}\sigma^{1/2}\rho^{(1-t)/2}\\
 &\qquad=\rho^{(1-t)/2}\sigma^{1/2}
 \bigl(\sigma^{-1/2}\rho^t\sigma^{-1/2}\bigr)
 \sigma^{1/2}\rho^{(1-t)/2}\\
 &\qquad=\rho\, .
\end{align*}
This proves that $\Omega_{\rho,t}$ is a purification of $\rho$.  The distribution obtained by measuring $\{E_{t,z}\}_z$ on $\Omega_\sigma$ is $q_t$ by definition.  Since every $E_{t,z}$ is a spectral projection of $\Gamma_t$, it commutes with $\Gamma_t^{1/2}$ and $
 \Gamma_t^{1/2}E_{t,z}\Gamma_t^{1/2}
 =\lambda_{t,z}E_{t,z}$. 
It follows that
\begin{align*}
 \langle\Omega_{\rho,t},E_{t,z}\Omega_{\rho,t}\rangle_{\mathrm{HS}}
 &=\langle\Omega_\sigma,
 \Gamma_t^{1/2}E_{t,z}\Gamma_t^{1/2}\Omega_\sigma\rangle_{\mathrm{HS}}\\
 &=\lambda_{t,z}
 \langle\Omega_\sigma,E_{t,z}\Omega_\sigma\rangle_{\mathrm{HS}}\\
 &=p_t(z),
\end{align*}

Now, we compute the divergence. The  functional calculus gives
\begin{equation*}
 f(\Gamma_t)=\sum_{z\in\mathcal Z_t}f(\lambda_{t,z})E_{t,z}.
\end{equation*}
Therefore, using the definition of the geodesic $f$-divergence and \eqref{eq:likelihood-ratio-spectrum},
\begin{align*}
 D_f^t(\rho\Vert\sigma)
 &=\langle\Omega_\sigma,f(\Gamma_t)\Omega_\sigma\rangle_{\mathrm{HS}}\\
 &=\sum_{z\in\mathcal Z_t}
 f(\lambda_{t,z})
 \langle\Omega_\sigma,E_{t,z}\Omega_\sigma\rangle_{\mathrm{HS}}\\
 &=\sum_{z\in\mathcal Z_t}q_t(z)
 f\!\left(\frac{p_t(z)}{q_t(z)}\right),
\end{align*}
which is exactly \eqref{eq:exact-classicalization-updated}.

Finally, for every real $s$, the strict positivity of $\Gamma_t$ and its spectral decomposition give
\begin{align*}
 \langle\Omega_\sigma,\Gamma_t^s\Omega_\sigma\rangle_{\mathrm{HS}}
 &=\sum_{z\in\mathcal Z_t}\lambda_{t,z}^s q_t(z)\\
 &=\sum_{z\in\mathcal Z_t}
 \left(\frac{p_t(z)}{q_t(z)}\right)^s q_t(z)\\
 &=\sum_{z\in\mathcal Z_t}p_t(z)^s q_t(z)^{1-s}.
\end{align*}
Taking logarithms proves \eqref{eq:Psi-classical-cumulant-updated}.  The boundary identities in \eqref{eq:Psi-normalization-boundary} follow from the normalization of $q_t$ and $p_t$, respectively. 
\end{proof}

For every fixed $t$, define
\begin{equation}\label{eq:canonical-likelihood-notation}
 \Lambda_t(z):=\frac{p_t(z)}{q_t(z)}=\lambda_{t,z},
 \qquad \ell_t(z):=\log\Lambda_t(z).
\end{equation}
Then
\begin{equation}\label{eq:Psi_log_expectation_exponential}
 \Psi_{\rho,\sigma}(t,s)
 =\log\bE_{q_t}[\Lambda_t^s]
 =\log\bE_{q_t}[e^{s\ell_t}],
\end{equation}
so the $s$-section of the surface is the ordinary cumulant-generating function of the log-likelihood ratio of the classical experiment $(p_t,q_t)$.  Its exponentially tilted distribution is
\begin{equation}\label{eq:canonical-tilted-distribution}
 \pi_{t,s}(z)
 :=\exp(s\ell_t(z)-\Psi_{\rho,\sigma}(t,s))q_t(z)
 =\frac{p_t(z)^sq_t(z)^{1-s}}
 {\sum_y p_t(y)^sq_t(y)^{1-s}}.
\end{equation}
with $\pi_{t,0}=q_t$ and $\pi_{t,1}=p_t$.  Thus $t$ moves through the family of canonical experiments selected by the quantum geodesic, while $s$ moves along the classical exponential family inside each such experiment.

At the standard endpoint $t=0$, this construction recovers the Nussbaum--Szko\l a distributions \cite{nussbaumSzkola2009}, up to grouping outcomes which have the same likelihood-ratio value.  Indeed, if
\begin{equation*}
 \rho=\sum_i\alpha_i|u_i\rangle\langle u_i|,
 \qquad
 \sigma=\sum_j\beta_j|v_j\rangle\langle v_j|,
\end{equation*}
then the Hilbert--Schmidt vectors $|u_i\rangle\langle v_j|$ are eigenvectors of $\Gamma_0=L_\rho R_{\sigma^{-1}}$ with eigenvalues $\alpha_i/\beta_j$, and the corresponding ungrouped distributions are
\begin{equation*}
 q_0(i,j)=\beta_j|\langle u_i,v_j\rangle|^2,
 \qquad
 p_0(i,j)=\alpha_i|\langle u_i,v_j\rangle|^2.
\end{equation*}
The theorem therefore extends this familiar exact classicalization from the standard relative modular operator to every point of the relative-operator geodesic.

At $t=1$, write
\begin{equation}
 K_1=\sigma^{-1/2}\rho\sigma^{-1/2}
 =\sum_z\lambda_{1,z}F_z.
\end{equation}
Then $\Gamma_1=R_{K_1}$ and its spectral projections on $\mathcal B_2(\mathcal H)$ are $E_{1,z}=R_{F_z}$.  Consequently,
\begin{equation}
 q_1(z)=\operatorname{tr}(\sigma F_z),
 \qquad
 p_1(z)=\lambda_{1,z}q_1(z).
\end{equation}
These are exactly the distributions obtained by Matsumoto, with his $M_x$ identified with $F_z$, see \cite[Section 4.3]{matsumoto2018maximal}.  A recent explicit treatment of the same maximal distributions is given in \cite{lanierBeguinotRioul2025}.  

\subsection{Classical statistical consequences of the likelihood surface}
\label{subsec:classical-statistical-consequences}

{
Theorem \ref{theo:geodesic-likelihood-surface} is the point at which the noncommutative pair $(\rho,\sigma)$ is converted into the finite classical experiment $(p_t,q_t)$.  Once $t$ is fixed, the remaining likelihood calculus belongs to the classical theory of binary experiments.  The likelihood-ratio characterization of sufficient statistics and the comparison of experiments go back to the foundational work in \cite{halmosSavage1949,kullbackLeibler1951,blackwell1953,lehmannRomano2005}, while the divergence viewpoint was developed systematically in \cite{csiszar1967}.  The relevant asymptotic testing and large-deviation results are the classical theorems of Chernoff, Hoeffding, Strassen and their information-theoretic developments \cite{chernoff1952,hoeffding1965,strassen1962,csiszarLongo1971,blahut1974,coverThomas2006,demboZeitouni1998}.  We collect only the consequences needed later in the paper, emphasizing how they are derived by the likelihood surface $\Psi_{\rho,\sigma}(t,s)$.}

As before, we will restrict to the set $\mathcal Z_t^+$.  Let
\begin{equation*}
 \Pi_t:=\sum_{z\in\mathcal Z_t^+}E_{t,z}
\end{equation*}
and regard $\Gamma_t$ as acting on the  Hilbert--Schmidt subspace $\Pi_t\mathcal B_2(\cH)$.  The commutative algebra generated by the likelihood observable is
\begin{equation}\label{eq:likelihood-algebra}
 \cA_t:=W^*(\Pi_t \Gamma_t \Pi_t)
 =\left\{\sum_{z\in\mathcal Z_t^+}a(z)E_{t,z}:a(z)\in\mathbb C\right\}.
\end{equation}
Here $W^*(\Gamma_t)$ denotes the finite-dimensional algebra generated by $\Gamma_t$, i.e. the span of its distinct spectral projections.  Define also
\begin{equation*}
 P_{\rho,t}:=|\Omega_{\rho,t}\rangle\!\langle\Omega_{\rho,t}|,
 \qquad
 P_{\sigma}:=|\Omega_\sigma\rangle\!\langle\Omega_\sigma|.
\end{equation*}

    The following operational statements concern the auxiliary, pair-dependent purification experiment
and tests restricted to the commutative likelihood algebra $\mathcal{A}_t^{\otimes n}$, in other words,  they concern the classical experiment  $(p_t^{\otimes n},q_t^{\otimes n})$. However, they do not describe unrestricted hypothesis testing of the original pair $(\rho^{\otimes n}, \sigma^{\otimes n})$.

\begin{proposition}[Minimal sufficiency and exact finite-blocklength reduction]
\label{prop:exact-classical-test-reduction}
For the binary experiment $(p_t,q_t)$, the likelihood ratio $\Lambda_t=p_t/q_t$ is a minimal sufficient statistic.  Moreover, for every $n\geq1$, test operators in $\cA_t^{\otimes n}$ are in one-to-one correspondence with randomized classical tests for $p_t^{\otimes n}$ against $q_t^{\otimes n}$.  More precisely, every test $0\leq Q_n\leq I$ in $\cA_t^{\otimes n}$ has a unique representation
\begin{equation}\label{eq:classical-test-representation}
 Q_n=\sum_{\mathbf z\in(\mathcal Z_t^+)^n}
 \tau_n(\mathbf z)E_{t,\mathbf z},
 \qquad 0\leq\tau_n(\mathbf z)\leq1,
\end{equation}
where
\begin{equation*}
 E_{t,\mathbf z}:=E_{t,z_1}\otimes\cdots\otimes E_{t,z_n}.
\end{equation*}
For this correspondence,
\begin{align}
 \tr[Q_nP_{\rho,t}^{\otimes n}]
 &=\sum_{\mathbf z}\tau_n(\mathbf z)p_t^{\otimes n}(\mathbf z),
 \label{eq:finite-n-p-reduction}\\
 \tr[Q_nP_{\sigma}^{\otimes n}]
 &=\sum_{\mathbf z}\tau_n(\mathbf z)q_t^{\otimes n}(\mathbf z).
 \label{eq:finite-n-q-reduction}
\end{align}
\end{proposition}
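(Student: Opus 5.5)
The argument has three parts: minimal sufficiency of $\Lambda_t$, the structure of tests in $\cA_t^{\otimes n}$, and the two trace identities.

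\textbf{Sufficiency.} This is purely classical. By Theorem~\ref{theo:geodesic-likelihood-surface}, $p_t(z)=\Lambda_t(z)q_t(z)$ on $\mathcal Z_t^+$, which is a Fisher--Neyman factorization through $\Lambda_t$, so $\Lambda_t$ is sufficient. For minimality I would use the Lehmann--Scheff\'e criterion. Take any sufficient statistic $T$. Then $p_t(z)/q_t(z)$ is a function of $T(z)$, so $\Lambda_t$ factors through $T$. Moreover, since the eigenvalues $\lambda_{t,z}$ are distinct by construction of $\mathcal Z_t$, the map $z\mapsto\Lambda_t(z)$ is injective. So the experiment is already reduced: no two outcomes can be merged without losing the likelihood ratio, and $\Lambda_t$ is minimal sufficient.

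\textbf{Representation of tests.} I would describe the algebra explicitly. $\cA_t$ is the span of the mutually orthogonal projections $E_{t,z}$, $z\in\mathcal Z_t^+$, on $\Pi_t\cB_2(\cH)$. Its tensor power $\cA_t^{\otimes n}$ is therefore the span of the orthogonal projections $E_{t,\mathbf z}$, which sum to $\Pi_t^{\otimes n}$.
\begin{itemize}
\item Every element can be written $Q_n=\sum_{\mathbf z}\tau_n(\mathbf z)E_{t,\mathbf z}$.
\item The coefficients are unique, since $\tau_n(\mathbf z)E_{t,\mathbf z}=E_{t,\mathbf z}Q_n$ and each $E_{t,\mathbf z}\neq0$.
\item Because the projections are orthogonal, the condition $0\le Q_n\le I$ holds if and only if $0\le\tau_n(\mathbf z)\le1$ for all $\mathbf z$.
\end{itemize}
Conversely, any $\tau_n$ with values in $[0,1]$ defines such a test. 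This gives the bijection with randomized classical tests on $(\mathcal Z_t^+)^n$.

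\textbf{Trace identities.} By linearity in $\tau_n$, it suffices to compute $\tr[E_{t,\mathbf z}P_{\rho,t}^{\otimes n}]$. This trace factorizes as $\prod_k\langle\Omega_{\rho,t},E_{t,z_k}\Omega_{\rho,t}\rangle_{\mathrm{HS}}$. By the computation in the proof of Theorem~\ref{theo:geodesic-likelihood-surface}, each factor equals $p_t(z_k)$, so the product is $p_t^{\otimes n}(\mathbf z)$. The same computation with $\Omega_\sigma$ gives $q_t^{\otimes n}(\mathbf z)$.

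The one point needing care is the restriction to $\Pi_t$. I would check that $\Omega_\sigma$ and $\Omega_{\rho,t}$ lie in $\Pi_t\cB_2(\cH)$. For any $z$ with $q_t(z)=0$, we have $\|E_{t,z}\Omega_\sigma\|^2=q_t(z)=0$, and $E_{t,z}\Omega_{\rho,t}=\Gamma_t^{1/2}E_{t,z}\Omega_\sigma=0$. Hence $\Pi_t$ fixes both vectors, and nothing is lost by working on the compressed space; in particular the traces above are computed correctly there.

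I expect this last bookkeeping, making the compression to $\Pi_t$ legitimate and justifying the tensor factorization of the traces, to be the only real obstacle. Everything else reduces directly to Theorem~\ref{theo:geodesic-likelihood-surface}.
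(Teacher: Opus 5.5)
Your proposal is correct and follows essentially the same route as the paper: classical sufficiency of the likelihood ratio, the diagonal representation of tests in the mutually orthogonal minimal projections $E_{t,\mathbf z}$ of $\cA_t^{\otimes n}$, and tensor factorization of the single-copy identities from Theorem~\ref{theo:geodesic-likelihood-surface}. The differences are only added detail: you prove minimality directly via the factorization theorem instead of citing Lehmann, and you explicitly check that $\Omega_\sigma$ and $\Omega_{\rho,t}$ lie in $\Pi_t\cB_2(\cH)$, which the paper leaves implicit.
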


\begin{proof}

For a binary experiment, the sufficiency of $\Lambda_t$ is shown for example in \cite[Theorem 6.12]{lehmann1998theory}. The projections $E_{t,\mathbf z}$ are the mutually orthogonal minimal projections of the finite-dimensional commutative algebra $\cA_t^{\otimes n}$ and resolve its identity.  Therefore every self-adjoint element of this algebra is diagonal in these projections, and the test inequalities $0\leq Q_n\leq I$ are equivalent to $0\leq\tau_n(\mathbf z)\leq1$.  Theorem \ref{theo:geodesic-likelihood-surface}  gives
\begin{equation*}
 \langle\Omega_{\rho,t},E_{t,z}\Omega_{\rho,t}\rangle_{\mathrm{HS}}=p_t(z),
 \qquad
 \langle\Omega_\sigma,E_{t,z}\Omega_\sigma\rangle_{\mathrm{HS}}=q_t(z).
\end{equation*}
Taking tensor products and inserting \eqref{eq:classical-test-representation} proves \eqref{eq:finite-n-p-reduction} and \eqref{eq:finite-n-q-reduction}.
\end{proof}

For hypothesis testing, we use the convention that $Q_n$ accepts the hypothesis $P_{\rho,t}^{\otimes n}$ and put
\begin{align}
 \alpha_n(Q_n)&:=1-\tr[Q_nP_{\rho,t}^{\otimes n}],\notag\\
 \beta_n(Q_n)&:=\tr[Q_nP_{\sigma}^{\otimes n}].
 \label{eq:restricted-errors-updated}
\end{align}
For $0<\varepsilon<1$, define
\begin{equation}\label{eq:restricted-beta-updated}
 \beta_{n,\varepsilon}^{(t)}
 :=\inf\left\{\beta_n(Q_n):
 Q_n\in\cA_t^{\otimes n},\ 0\leq Q_n\leq I,\
 \alpha_n(Q_n)\leq\varepsilon\right\},
\end{equation}
and let
\begin{equation}\label{eq:restricted-Chernoff-error}
 P_{\mathrm e,n}^{(t)}
 :=\inf_{\substack{Q_n\in\cA_t^{\otimes n}\\0\leq Q_n\leq I}}
 \frac12\left(1-\tr[Q_nP_{\rho,t}^{\otimes n}]
 +\tr[Q_nP_{\sigma}^{\otimes n}]\right)
\end{equation}
be the corresponding equal-prior Bayesian error. Next result shows that all the main statistical quantities of the experiment associated to some $t\in [0,1]$ can be expressed in terms of the likelihood surface. For $t=0$, these results can be found in e.g. \cite{jaksicEtAl2012}. The proof is a consequence of the classical results, but it is  added for completeness. 

\begin{proposition}[Statistical calculus of $\Psi$]
\label{prop:classical-likelihood-calculus}
Fix $t\in[0,1]$.  Then the following statements hold.

\begin{enumerate}[label=\textup{(\roman*)}]
\item For every $s\in\mathbb R$,
\begin{equation}\label{eq:Psi-first-second-derivative-classical}
 \partial_s\Psi_{\rho,\sigma}(t,s)
 =\bE_{\pi_{t,s}}[\ell_t]\, , \quad \partial_s^2\Psi_{\rho,\sigma}(t,s)
 =\Var_{\pi_{t,s}}(\ell_t)\, .
\end{equation}
In particular,
\begin{equation}\label{eq:Dt-Vt-from-Psi}
 \partial_s\Psi_{\rho,\sigma}(t,1)=D^t(\rho\Vert\sigma)
 =D_{\mathrm{KL}}(p_t\Vert q_t),
\end{equation}
and
\begin{equation}\label{eq:Vt-definition-updated}
 V_t(\rho\Vert\sigma)
 :=\partial_s^2\Psi_{\rho,\sigma}(t,1)
 =\Var_{p_t}\!\left(\log\frac{p_t}{q_t}\right).
\end{equation}

\item The classical Chernoff--Stein lemma gives, for every $0<\varepsilon<1$,
\begin{equation}\label{eq:geodesic-Stein-exponent-updated}
 \lim_{n\to\infty}-\frac1n\log\beta_{n,\varepsilon}^{(t)}
 =D^t(\rho\Vert\sigma).
\end{equation}
If $V_t(\rho\Vert\sigma)>0$, Strassen's second-order expansion gives
\begin{equation}\label{eq:second-order-Stein-updated}
 -\log\beta_{n,\varepsilon}^{(t)}
 =nD^t(\rho\Vert\sigma)
 +\sqrt{nV_t(\rho\Vert\sigma)}\,\mathcal{N}^{-1}(\varepsilon)
 +O(\log n),
\end{equation}
where $\mathcal{N}$ is the standard normal distribution function \cite{strassen1962}.

\item If we denote the Chernoff information of the canonical experiment for each $t\in [0,1]$ as
\begin{equation}\label{eq:Chernoff-information-classical}
 C_t(\rho,\sigma)
 :=-\min_{0\leq s\leq1}\Psi_{\rho,\sigma}(t,s),
\end{equation}
then Chernoff's theorem gives
\begin{equation}\label{eq:restricted-Chernoff-exponent}
 \lim_{n\to\infty}-\frac1n\log P_{\mathrm e,n}^{(t)}
 =C_t(\rho,\sigma).
\end{equation}

\item The Legendre--Fenchel transform
\begin{equation}\label{eq:rate-function-q-updated}
 I_{t,q}(a)
 :=\sup_{s\in\mathbb R}
 \{sa-\Psi_{\rho,\sigma}(t,s)\}
\end{equation}
is the Cram\'er rate function of the empirical log-likelihood under $q_t$.  Under $p_t$ the rate function is
\begin{equation}\label{eq:rate-function-p-updated}
 I_{t,p}(a)
=I_{t,q}(a)-a.
\end{equation}
Furthermore,
\begin{align}
 I_{t,q}(\partial_s\Psi_{\rho,\sigma}(t,s))
 &=s\,\partial_s\Psi_{\rho,\sigma}(t,s)-\Psi_{\rho,\sigma}(t,s),\notag\\
 I_{t,p}(\partial_s\Psi_{\rho,\sigma}(t,s))
 &=(s-1)\partial_s\Psi_{\rho,\sigma}(t,s)-\Psi_{\rho,\sigma}(t,s).
 \label{eq:supporting-line-tradeoff}
\end{align}
If $p_t\neq q_t$, the function $s\mapsto\Psi_{\rho,\sigma}(t,s)$ is strictly convex, its minimum is attained at the unique point $s_t^*\in(0,1)$ satisfying $\partial_s\Psi_{\rho,\sigma}(t,s_t^*)=0$, and
\begin{equation*}
 I_{t,q}(0)=I_{t,p}(0)=C_t(\rho,\sigma).
\end{equation*}
\end{enumerate}
\end{proposition}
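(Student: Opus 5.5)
The plan is to reduce everything to the finite classical experiment $(p_t,q_t)$ supplied by Theorem \ref{theo:geodesic-likelihood-surface}, and then invoke the standard classical theorems. The reduction for testing is exactly Proposition \ref{prop:exact-classical-test-reduction}.

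For (i), the starting point is \eqref{eq:Psi_log_expectation_exponential}, which writes $\Psi_{\rho,\sigma}(t,s)=\log\sum_{z\in\mathcal Z_t^+}q_t(z)e^{s\ell_t(z)}$. This is a finite sum of exponentials, so it is real-analytic in $s$. Differentiating once gives the expectation of $\ell_t$ under the tilted law $\pi_{t,s}$ of \eqref{eq:canonical-tilted-distribution}. Differentiating again gives $\bE_{\pi_{t,s}}[\ell_t^2]-(\bE_{\pi_{t,s}}[\ell_t])^2$, which is the variance.

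At $s=1$ we have $\pi_{t,1}=p_t$. Hence $\partial_s\Psi(t,1)=\sum_z p_t\log(p_t/q_t)=D_{\mathrm{KL}}(p_t\Vert q_t)$. Applying \eqref{eq:exact-classicalization-updated} with $f(x)=x\log x$ identifies this with $D^t(\rho\Vert\sigma)$. The formula for $V_t$ is immediate from the second derivative.

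For (ii) and (iii), Proposition \ref{prop:exact-classical-test-reduction} identifies $\beta^{(t)}_{n,\varepsilon}$ and $P^{(t)}_{\mathrm e,n}$ exactly with the optimal type-II error and the Bayes error of randomized classical tests for $p_t^{\otimes n}$ versus $q_t^{\otimes n}$. Since the outcome set $\mathcal Z_t^+$ is finite and both measures have full support there, all moments are finite. We can therefore quote three classical results: Stein's lemma gives the first-order rate $D_{\mathrm{KL}}(p_t\Vert q_t)$; Strassen's expansion gives the second-order term, using the finite third moment of $\ell_t$ under $p_t$ and the assumption $V_t>0$; and Chernoff's theorem gives the exponent $-\min_{0\le s\le1}\log\sum_z p_t^sq_t^{1-s}$. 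By \eqref{eq:Psi-classical-cumulant-updated} this last exponent is $C_t$.

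For (iv), Cramér's theorem for the i.i.d. bounded variables $\ell_t(z_k)$ under $q_t$ has rate function equal to the Legendre transform of their cumulant generating function, which is $\Psi(t,\cdot)$. Under $p_t$, the cumulant generating function is $\log\bE_{p_t}[e^{s\ell_t}]=\Psi(t,s+1)$. Its Legendre transform is $\sup_s\{sa-\Psi(t,s+1)\}=I_{t,q}(a)-a$.

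The supporting-line identities follow from attainment of the supremum. Because $\Psi$ is convex and differentiable, the supremum defining $I_{t,q}(\partial_s\Psi(t,s))$ is attained at that same $s$.

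For strict convexity, note that if $p_t\neq q_t$, then $\ell_t$ is non-constant on $\mathcal Z_t^+$. Indeed, a constant likelihood ratio would force $\Lambda_t\equiv1$ by normalization. Since each $\pi_{t,s}$ has full support, the variance in (i) is strictly positive. The function $\Psi(t,\cdot)$ vanishes at $s=0$ and $s=1$ by \eqref{eq:Psi-normalization-boundary} and is strictly convex, so its minimizer $s_t^*$ is unique and lies in $(0,1)$. There $\partial_s\Psi=0$, and the identities at $a=0$ give $I_{t,q}(0)=-\Psi(t,s_t^*)=C_t=I_{t,p}(0)$.

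I expect the main obstacle to be bookkeeping rather than mathematics: checking that the hypotheses of the quoted classical theorems (finite alphabet, mutual absolute continuity on $\mathcal Z_t^+$, nondegenerate variance for Strassen) are exactly what the restriction to $\mathcal Z_t^+$ provides. A second point of care is making sure the restricted testing problem matches the classical one on the nose. This is where the uniqueness of the representation in Proposition \ref{prop:exact-classical-test-reduction} is used.
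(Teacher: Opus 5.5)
Your proposal is correct and follows essentially the same route as the paper. Both reduce to the finite experiment $(p_t,q_t)$ via Theorem~\ref{theo:geodesic-likelihood-surface} and Proposition~\ref{prop:exact-classical-test-reduction}, differentiate the log-partition function, quote Stein, Strassen, Chernoff and Cram\'er (using $\Psi(t,s+1)$ for the $p_t$ rate function), and derive strict convexity from the non-constancy of $\ell_t$. The only cosmetic differences are these: you locate $s_t^*$ from $\Psi(t,0)=\Psi(t,1)=0$ (Rolle plus strict convexity) instead of the paper's endpoint slopes $\Psi'(0)=-D_{\mathrm{KL}}(q_t\Vert p_t)<0<\Psi'(1)$, and you cite Strassen's expansion directly instead of the commuting case of Li's second-order bounds.
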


\begin{proof}
Fix $t\in[0,1]$ and abbreviate
\begin{equation*}
 \Psi_t(s):=\Psi_{\rho,\sigma}(t,s),
 \qquad
 \ell_t(z):=\log\frac{p_t(z)}{q_t(z)}.
\end{equation*}
All sums below are taken over the finite set $\mathcal Z_t^+$.  In particular,
$p_t(z),q_t(z)>0$ on this set, so $\ell_t$ is finite and has exponential moments of every
order.  Recall that by \eqref{eq:Psi_log_expectation_exponential},
\begin{equation}\label{eq:proof-Psi-log-partition}
 \Psi_t(s)
 =\log\sum_{z\in\mathcal Z_t^+}q_t(z)e^{s\ell_t(z)}
 =\log\bE_{q_t}[e^{s\ell_t}],
\end{equation}
while \eqref{eq:canonical-tilted-distribution} is given by
\begin{equation}\label{eq:proof-tilted-family}
 \pi_{t,s}(z)
 =\exp\!\bigl(s\ell_t(z)-\Psi_t(s)\bigr)q_t(z).
\end{equation}
Thus $s\mapsto\pi_{t,s}$ is the one-dimensional exponential family with sufficient
statistic $\ell_t$ and log-partition function $\Psi_t$.

The standard differentiation formulas for a log-partition function give its first
derivative as the mean of the sufficient statistic and its second derivative as its
covariance; see
\cite[Proposition 3.1]{wainwrightJordan2008}.  Applying those formulas to
\eqref{eq:proof-Psi-log-partition}--\eqref{eq:proof-tilted-family} yields
\begin{equation*}
 \Psi_t'(s)=\bE_{\pi_{t,s}}[\ell_t],
 \qquad
 \Psi_t''(s)=\Var_{\pi_{t,s}}(\ell_t),
\end{equation*}
which proves \eqref{eq:Psi-first-second-derivative-classical}.  Moreover,
\eqref{eq:Psi-normalization-boundary} and \eqref{eq:proof-tilted-family} imply
$\pi_{t,1}=p_t$.  Therefore
\begin{align}\label{eq:Derivatives_psi_1}
 \Psi_t'(1)
 &=\bE_{p_t}\!\left[\log\frac{p_t}{q_t}\right]
   =D_{\mathrm{KL}}(p_t\Vert q_t),\\
 \Psi_t''(1)
 &=\Var_{p_t}\!\left(\log\frac{p_t}{q_t}\right).
\end{align}
Finally, \eqref{eq:exact-classicalization-updated} with the generator
$f(x)=x\log x$ identifies
$D_{\mathrm{KL}}(p_t\Vert q_t)=D^t(\rho\Vert\sigma)$.  This proves
\eqref{eq:Dt-Vt-from-Psi} and \eqref{eq:Vt-definition-updated}.

We next prove the testing statements.  Proposition \ref{prop:exact-classical-test-reduction}
shows, at every blocklength $n$, that the restricted tests in $\cA_t^{\otimes n}$ are
in one-to-one correspondence with randomized tests for
$p_t^{\otimes n}$ against $q_t^{\otimes n}$ and that both error probabilities are
preserved.  Consequently,
\begin{equation}\label{eq:proof-beta-classical-identification}
 \beta_{n,\varepsilon}^{(t)}
 =\beta_\varepsilon\!\left(p_t^{\otimes n}\Vert q_t^{\otimes n}\right),
\end{equation}
where the right-hand side is the ordinary classical optimal type-II error under the
constraint that the type-I error is at most $\varepsilon$.  The classical
Chernoff--Stein lemma
\cite[Theorem 11.8.3]{coverThomas2006} applied to
\eqref{eq:proof-beta-classical-identification} gives
\begin{equation*}
 \lim_{n\to\infty}-\frac1n\log\beta_{n,\varepsilon}^{(t)}
 =D_{\mathrm{KL}}(p_t\Vert q_t)
 =D^t(\rho\Vert\sigma),
\end{equation*}
which is \eqref{eq:geodesic-Stein-exponent-updated}.

Assume now that $V_t(\rho\Vert\sigma)>0$.  Represent the two classical distributions as
commuting density matrices
\begin{equation*}
 \widehat p_t:=\sum_zp_t(z)|z\rangle\!\langle z|,
 \qquad
 \widehat q_t:=\sum_zq_t(z)|z\rangle\!\langle z|.
\end{equation*}
For this commuting pair,
\begin{equation*}
 D(\widehat p_t\Vert\widehat q_t)
 =D_{\mathrm{KL}}(p_t\Vert q_t)=D^t(\rho\Vert\sigma),
 \qquad
 V(\widehat p_t\Vert\widehat q_t)=V_t(\rho\Vert\sigma).
\end{equation*}
The  bounds in
\cite[Theorem 5]{li2014secondorder} applied to this commuting
pair, give a lower bound with an $O(1)$ remainder and an upper bound with a
$2\log n+O(1)$ remainder.  Together with
\eqref{eq:proof-beta-classical-identification}, they imply
\begin{equation*}
 -\log\beta_{n,\varepsilon}^{(t)}
 =nD^t(\rho\Vert\sigma)
 +\sqrt{nV_t(\rho\Vert\sigma)}\,\mathcal{N}^{-1}(\varepsilon)
 +O(\log n),
\end{equation*}
which proves \eqref{eq:second-order-Stein-updated}.  This is the  finite commuting case of the modern quantum second-order expansions; see also
\cite[Eq. (34)]{tomamichelHayashi2013}.  The original classical
result is due to   Strassen
\cite{strassen1962}.

The same exact reduction applies to the equal-prior Bayesian error
\begin{equation*}
 P_{\mathrm e,n}^{(t)}
 =P_{\mathrm e}\!\left(p_t^{\otimes n},q_t^{\otimes n}\right).
\end{equation*}
Chernoff's theorem
(see e.g. \cite{chernoff1952} or \cite[Theorem 11.9.1]{coverThomas2006}) therefore gives
\begin{align*}
 \lim_{n\to\infty}-\frac1n\log P_{\mathrm e,n}^{(t)}
 &=-\min_{0\leq s\leq1}
   \log\sum_zp_t(z)^sq_t(z)^{1-s}\\
 &=-\min_{0\leq s\leq1}\Psi_t(s)
 =C_t(\rho,\sigma),
\end{align*}
where the second equality is \eqref{eq:Psi-classical-cumulant-updated}.  This proves
\eqref{eq:restricted-Chernoff-exponent}.

It remains to prove the large-deviation statements.  Let
$Z_1,Z_2,\ldots$ be independent random variables with common distribution $q_t$, and set
\begin{equation*}
 L_{t,n}:=\frac1n\sum_{k=1}^n\ell_t(Z_k).
\end{equation*}
By \eqref{eq:proof-Psi-log-partition}, the logarithmic moment-generating function of
$\ell_t(Z_1)$ is exactly $\Psi_t$.  Since $\ell_t$ has finite support, this function is
finite on all of $\mathbb R$.  Cram\'er's theorem
\cite[Theorem 2.2.3]{demboZeitouni1998} therefore shows that
$L_{t,n}$ satisfies a large-deviation principle under $q_t$ with rate function
\begin{equation*}
 I_{t,q}(a)=\sup_{s\in\mathbb R}\{sa-\Psi_t(s)\}\, .
\end{equation*}

Under $p_t$, the logarithmic moment-generating function is obtained from the likelihood
change of measure:
\begin{align*}
 \log\bE_{p_t}[e^{s\ell_t}]
 &=\log\sum_zp_t(z)
   \left(\frac{p_t(z)}{q_t(z)}\right)^s\\
 &=\log\sum_zp_t(z)^{s+1}q_t(z)^{-s}
 =\Psi_t(s+1).
\end{align*}
A second application of Cram\'er's theorem thus gives
\begin{equation*}
 I_{t,p}(a)=\sup_{s\in\mathbb R}\{sa-\Psi_t(s+1)\}.
\end{equation*}
With the change of variables $u=s+1$,
\begin{align*}
 I_{t,p}(a)
 &=\sup_{u\in\mathbb R}\{(u-1)a-\Psi_t(u)\}\\
 &=I_{t,q}(a)-a,
\end{align*}
which is \eqref{eq:rate-function-p-updated}.

The equality case in the Legendre duality for a differentiable
cumulant-generating function
\cite[Lemma 2.2.5(c)]{demboZeitouni1998} shows that 
\begin{equation*}
 I_{t,q}(\Psi_t'(s))=s\,\Psi_t'(s)-\Psi_t(s).
\end{equation*}
Combining this identity with $I_{t,p}(a)=I_{t,q}(a)-a$ gives
\begin{equation*}
 I_{t,p}(\Psi_t'(s))=(s-1)\Psi_t'(s)-\Psi_t(s),
\end{equation*}
proving \eqref{eq:supporting-line-tradeoff}.

Finally, suppose that $p_t\neq q_t$.  Then $\ell_t$ cannot be constant: if
$\ell_t(z)=c$ on $\mathcal Z_t^+$, then $p_t=e^cq_t$, and normalization forces
$e^c=1$ and hence $p_t=q_t$.  Since every tilted distribution $\pi_{t,s}$ has full support on
$\mathcal Z_t^+$, the variance identity already proved gives
\begin{equation*}
 \Psi_t''(s)=\Var_{\pi_{t,s}}(\ell_t)>0,
\end{equation*}
so $\Psi_t$ is strictly convex. At the endpoints, using \eqref{eq:Derivatives_psi_1} and the fact that $\pi_{t,0}=q_t$, we obtain
\begin{equation*}
 \Psi_t'(0)=-D_{\mathrm{KL}}(q_t\Vert p_t)<0,
 \qquad
 \Psi_t'(1)=D_{\mathrm{KL}}(p_t\Vert q_t)>0\, .
\end{equation*}
  Since $\Psi_t'$ is
continuous and strictly increasing, there is a unique $s_t^*\in(0,1)$ satisfying
$\Psi_t'(s_t^*)=0$, and strict convexity makes it the unique global minimizer of
$\Psi_t$.  Consequently,
\begin{align*}
 I_{t,q}(0)
 &=\sup_{s\in\mathbb R}\{-\Psi_t(s)\}
  =-\Psi_t(s_t^*)
  =C_t(\rho,\sigma),\\
 I_{t,p}(0)
 &=I_{t,q}(0)-0
  =C_t(\rho,\sigma).
\end{align*}
This proves the final assertion.
\end{proof}

\begin{remark}
   At $t=1$, the BS relative entropy is the Stein exponent of the canonical classical experiment $(p_1,q_1)$.
\end{remark}

The content of Proposition \ref{prop:classical-likelihood-calculus} is classical once $(p_t,q_t)$ has been constructed.  Its role here is to show that one horizontal section $s\mapsto\Psi_{\rho,\sigma}(t,s)$ organizes all the standard quantities of the selected experiment: its endpoint slope is the KL divergence, its endpoint curvature is the information variance, its minimum is the Chernoff information and its Legendre transform gives the likelihood large-deviation tradeoff. To sum up, given $\rho,\sigma \in \mathcal{S}(\mathcal{H})$, we can construct for every $t \in [0,1]$ classical probabilities that provide the information of the experiment, i.e.
\begin{equation*}
 (\rho,\sigma)
 \xrightarrow{\ t\ }
 (p_t,q_t)
 \xrightarrow{\ s\ }
 \pi_{t,s}
\end{equation*}
and the exact identification of the resulting classical quantities with the geodesic quantum divergences.

\section{Horospherical geometry of the likelihood surface}
\label{sec:horospherical-likelihood-geometry}

Theorem  \ref{theo:geodesic-likelihood-surface} and Section \ref{subsec:classical-statistical-consequences} show that, for every fixed $t$, the function $s\mapsto\Psi_{\rho,\sigma}(t,s)$ is the log-likelihood cumulant of the canonical experiment $(p_t,q_t)$. The purpose of this section is to complement this statistical interpretation with the geometry contained in
\begin{equation*}
 \Psi_{\rho,\sigma}(t,s)
 =\log\langle\Omega_\sigma,\Gamma_t^s\Omega_\sigma\rangle_{\HS}.
\end{equation*}
Notice for the moment we have not properly used that both $t\mapsto \Gamma_t$ and $s \mapsto \Gamma_t^s=I\#_s\Gamma_t$ are both geodesics on the Riemannian manifold of positive operators. These two geodesic structures  contain a large relation of geometric properties when looking at the right manifold.

 We are going to construct a geometric  function $b_\sigma$ on the cone of positive operators such that
\begin{equation*}
 b_\sigma(\Gamma_t^s)=\Psi_{\rho,\sigma}(t,s).
\end{equation*}
This function will be a Busemann function.  We first recall the geometric picture in Euclidean space and then transfer it to the positive cone.

\subsection{The  Euclidean Busemann function}

Before studying the geometry of the likelihood surface in the positive cone, it is going to be useful to visualize the necessary concepts in the Euclidean model.  Given $v\in\mathbb R^d$ a unit vector, consider the geodesic ray $\gamma_v(u)=uv$\footnote{Recall that a geodesic ray on a metric space $(X,d)$ is a path $\gamma:[0,\infty)\to X$ such that $d(\gamma(t),\gamma(t'))=\vert t-t'\vert$ for every $t,t' \in [0,\infty)$.}.  For a fixed $x\in\mathbb R^d$,
\begin{equation*}
 \|x-uv\|-u=-\langle x,v\rangle+o(1),
 \qquad u\longrightarrow\infty \, ,
\end{equation*}
and the limit
\begin{equation*}
 b_v(x):=\lim_{u\to\infty}\bigl(\|x-\gamma_v(u)\|-u\bigr)
 =-\langle x,v\rangle
\end{equation*}
is the Euclidean Busemann function associated with the direction $v$; see \cite[Example 8.24]{bridsonHaefliger1999}. Its level set at height $c$ is
\begin{equation*}
 \{x:b_v(x)=c\}
 =\{x:\langle x,v\rangle=-c\},
\end{equation*}
which is the hyperplane perpendicular to $v$. Moreover,
\begin{equation*}
 \nabla b_v=-v,
 \qquad \|\nabla b_v\|=1.
\end{equation*}
Hence the gradient gives the unit normal to all these parallel hyperplanes.  We can then think of a  Busemann function as a height function attached to a chosen direction at infinity. 

The same construction works on a Hadamard manifold, that is, a complete and simply connected Riemannian manifold with nonpositive sectional curvature. By the Cartan--Hadamard theorem, any two points are joined by a unique geodesic; see \cite[Theorem 4.1]{bridsonHaefliger1999}. If $\gamma:[0,\infty)\to M$ is a unit-speed geodesic ray, its Busemann function is defined as

\begin{equation*}
 b_\gamma(x)
 :=\lim_{u\to\infty}\bigl(d(x,\gamma(u))-u\bigr)\, ,
\end{equation*}
and the existence of the limit follows by the  triangle inequality and can be found in   \cite[Lemma 8.18(1)]{bridsonHaefliger1999}. Moreover, the function $b_\gamma$ is also convex and $1$-Lipschitz; see \cite[Proposition 8.22]{bridsonHaefliger1999}. Its level sets
\begin{equation*}
 \{x:b_\gamma(x)=c\}
\end{equation*}
are called horospheres, and the sublevel sets $\{b_\gamma\le c\}$ are the corresponding horoballs; see \cite[Definitions 8.14 and 8.17]{bridsonHaefliger1999}.

For Hadamard manifolds, Busemann functions $b_\gamma$ are $C^2$ \cite{heintze1977geometry}, convex and its gradient has unit length \cite[Section 2]{itohKimParkSatoh2017}. Since a
horosphere is a level set of $b_{\gamma}$, its gradient is perpendicular to that level set and therefore gives
a natural unit normal. We orient the normal by declaring that $\nabla b_{\gamma}$ points toward increasing
values of $b_{\gamma}$. Equivalently, for the horoball $\{b_{\gamma} \leq  c\}$, the vector $\nabla b_{\gamma}$ points outward and $-\nabla b_{\gamma}$ points inward.

\subsection{The horospheres in the positive cone}

 The cone of  positive  operators $\mathbb{P}(\mathcal{B}_2(\mathcal{H}))$ is a Hadamard manifold with the metric  given by \eqref{eq:metric}. Simply connectedness follows from its convexity, the nonpositive sectional curvature was proven  in \cite{bhatia2003exponential} and the geodesically completeness in \cite[Appendix A.4]{schwartzman2016lognormal}. We can  therefore use Busemann functions on this cone to identify our log-likelihood cumulant function $\Psi(t,s)$ as a  Busemann function of the purification $\Omega_{\sigma}$. These log-expectations functions have already been identified before  in \cite[Section 6]{hiraiNieuwboerWalter2026} as Busemann functions when studying the Kempf-Ness function or in \cite{solan2025geometric} with more generality in Hadamard manifolds. We add here a brief proof for the sake of completeness.

\begin{proposition}
    Let $\sigma \in \mathcal{S}(\mathcal{H})$ be an invertible quantum state and consider the unit vector  $\Omega_\sigma=\sigma^{1/2}$ in $\mathcal B_2(\cH)$. Denote $ P_\sigma=|\Omega_\sigma\rangle\!\langle\Omega_\sigma|$ that corresponds to the orthogonal projection onto the line $\mathbb C\Omega_\sigma$. Then
\begin{equation*}
 \eta_\sigma(u):=e^{-uP_\sigma},
 \qquad u\ge0
\end{equation*}
is a geodesic ray and \begin{equation}\label{eq:Busemann-normalization-profile-new}
 b_\sigma(A)
 =\lim_{u\to\infty}
 \bigl(d_2(A,\eta_\sigma(u))-u\bigr)
 =\log\langle\Omega_\sigma,A\Omega_\sigma\rangle_{\HS}\, ,
\end{equation}
for every operator $A>0$.
\end{proposition}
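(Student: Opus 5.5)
We prove the two assertions separately: first that $\eta_\sigma$ is a unit-speed geodesic ray, then the Busemann formula by an explicit spectral computation of $d_2(A,\eta_\sigma(u))$ as $u\to\infty$.

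\textbf{The ray.} The geodesic issuing from $I$ with self-adjoint direction $X$ is $u\mapsto e^{uX}$. Here $X=-P_\sigma$ and $\|P_\sigma\|_{\HS}=1$, so $\eta_\sigma$ has unit speed. Directly,
\begin{equation*}
d_2(\eta_\sigma(u),\eta_\sigma(u'))=\|\log(e^{(u-u')P_\sigma})\|_{\HS}=|u-u'|\,\|P_\sigma\|_{\HS}=|u-u'|,
\end{equation*}
since all $e^{-uP_\sigma}$ commute and $e^{-uP_\sigma}=I-(1-e^{-u})P_\sigma$. Hence $\eta_\sigma$ is a geodesic ray.

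\textbf{Reduction of the distance.} By congruence invariance of $d_2$,
\begin{equation*}
d_2(A,\eta_\sigma(u))=\|\log B_u\|_{\HS},\qquad B_u:=\eta_\sigma(u)^{-1/2}A\,\eta_\sigma(u)^{-1/2}=D_uAD_u,
\end{equation*}
where $D_u:=e^{uP_\sigma/2}=I+(e^{u/2}-1)P_\sigma$. Decompose $\mathcal B_2(\cH)=\mathbb C\Omega_\sigma\oplus\Omega_\sigma^\perp$ and write $A$ in block form with entries $a=\langle\Omega_\sigma,A\Omega_\sigma\rangle>0$, off-diagonal column $c$, and lower block $A_{22}>0$. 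Then
\begin{equation*}
B_u=\begin{pmatrix} e^{u}a & e^{u/2}c^*\\ e^{u/2}c & A_{22}\end{pmatrix}.
\end{equation*}
The expected outcome is one eigenvalue $\lambda_1(u)=e^ua+O(1)$, with the remaining eigenvalues converging to those of the Schur complement $S:=A_{22}-cc^*/a>0$.

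\textbf{Eigenvalue asymptotics (main obstacle).} The delicate point is controlling all eigenvalues uniformly enough to obtain $o(1)$ error in the distance. We plan two complementary arguments.
\begin{enumerate}
\item \emph{Determinant.} The block determinant formula gives $\det B_u=e^u a\det S$ exactly.
\item \emph{Inverse.} The inverse is
\begin{equation*}
B_u^{-1}=\begin{pmatrix} e^{-u}(\cdots) & -e^{-u/2}(\cdots)\\ -e^{-u/2}(\cdots) & S^{-1}\end{pmatrix},
\end{equation*}
so $B_u^{-1}\to 0\oplus S^{-1}$. By continuity of eigenvalues, the $n-1$ smallest eigenvalues of $B_u$ converge to the eigenvalues $\mu_k>0$ of $S$. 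This uses only that the limit has a simple zero eigenvalue, matching the single large eigenvalue of $B_u$.
\end{enumerate}
Combining the two, $\lambda_1(u)=\det B_u/\prod_{k\ge2}\lambda_k(u)=e^ua(1+o(1))$, so $\log\lambda_1(u)=u+\log a+o(1)$.

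\textbf{Conclusion.} We then have
\begin{equation*}
d_2(A,\eta_\sigma(u))=\Bigl((u+\log a+o(1))^2+\textstyle\sum_{k\ge2}(\log\mu_k+o(1))^2\Bigr)^{1/2}.
\end{equation*}
Expanding $\sqrt{(u+\alpha)^2+C}=u+\alpha+O(1/u)$ with $\alpha=\log a+o(1)$ and $C$ bounded gives
\begin{equation*}
d_2(A,\eta_\sigma(u))-u\to\log a=\log\langle\Omega_\sigma,A\Omega_\sigma\rangle_{\HS},
\end{equation*}
which is the claimed Busemann formula. Existence of the limit is also guaranteed a priori by the general Hadamard theory recalled above, but the computation determines its value directly.
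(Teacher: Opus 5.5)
Your proof is correct and follows essentially the same route as the paper. Your $B_u=\eta_\sigma(u)^{-1/2}A\,\eta_\sigma(u)^{-1/2}$ has the same spectrum as the paper's $M_u^{-1}=A+(e^u-1)|A^{1/2}\Omega_\sigma\rangle\langle A^{1/2}\Omega_\sigma|$, and both arguments isolate one eigenvalue of size $e^u a(1+o(1))$, control the others through a limiting operator with a simple zero eigenvalue, and then expand the Hilbert--Schmidt norm of the logarithm. The only difference is how you pin down the exceptional eigenvalue: you use the exact identity $\det B_u=e^u a\det S$, which also identifies the other limits as the spectrum of the Schur complement $S$. The paper instead uses the Rayleigh-quotient bounds $(e^u-1)a+\lambda_{\min}(A)\le\lambda_{\max}(M_u^{-1})\le(e^u-1)a+\lambda_{\max}(A)$, which give the slightly sharper $e^u a+O(1)$ directly.
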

\begin{proof}
  The operator $\eta_{\sigma}(u)$ acts as   multiplication by $e^{-u}$ on $\Omega_\sigma$ and as the identity on $\Omega_\sigma^\perp$. To prove that it is a geodesic ray compute
  \begin{equation}
      \log(\eta_{\sigma}(u)^{-1/2}\eta_{\sigma}(v)\eta_{\sigma(u)}^{-1/2})=(u-v)P_{\sigma}
  \end{equation}
so
\begin{equation}
    d_2(\eta_{\sigma}(u),\eta_{\sigma}(v))=\vert u-v\vert \Vert P_{\sigma}\Vert_{\HS}=\vert u-v\vert\, ,
\end{equation}
which shows the first statement.

To show \eqref{eq:Busemann-normalization-profile-new}, set $M_u:=A^{-1/2}\eta_\sigma(u)A^{-1/2}$ and let $a=\langle \Omega_{\sigma},A \Omega_{\sigma}\rangle_{\HS}>0$. The distance $d_2(A,\eta_\sigma(u))$ is the Hilbert--Schmidt norm of $\log M_u$. The key observation is that, as $u\to\infty$, exactly one eigenvalue of $M_u$ tends to zero. To determine its size, consider the inverse:
\begin{align*}
 M_u^{-1}
 &=A^{1/2}\eta_\sigma(u)^{-1}A^{1/2}\\
 &=A+(e^u-1)|A^{1/2}\Omega_{\sigma}\rangle\!\langle A^{1/2} \Omega_{\sigma}|\, ,
\end{align*}
where in the  last equation we use that 
\begin{equation*}
    \eta_{\sigma}(u)^{-1}=e^{uP_{\sigma}}=I+\left( \sum_{n=1}^{\infty} \frac{u^n}{n!}\right)P_{\sigma}=I+(e^u-1)P_{\sigma}\, .
\end{equation*}

If $w=A^{1/2}\Omega_{\sigma}$, then $\|w\|^2=a$. Evaluating the Rayleigh quotient on $w/\|w\|$,  we can upper and lower bound the maximum eigenvalue of $M_u^{-1}$ as follows:
\begin{equation*}
 (e^u-1)a+\lambda_{\min}(A)
 \le \lambda_{\max}(M_u^{-1})
 \le (e^u-1)a+\lambda_{\max}(A).
\end{equation*}
Therefore
\begin{equation*}
 \lambda_{\max}(M_u^{-1})=e^u a\,(1+o(1)),
\end{equation*}
and the smallest eigenvalue of $M_u$ satisfies
\begin{equation*}
 \mu_1(u)=\frac{e^{-u}}{a}\,(1+o(1)).
\end{equation*}
The remaining eigenvalues stay bounded away from both zero and infinity. Indeed, as $u\to \infty$, $
 M_u\longrightarrow A^{-1/2}(I-P_\sigma)A^{-1/2}$, 
and this limiting operator has one zero eigenvalue and all its other eigenvalues strictly positive, since $A^{-1/2}$ is invertible. Consequently,
\begin{equation*}
 \log\mu_1(u)=-u-\log a+o(1),
\end{equation*}
whereas the logarithms of the remaining eigenvalues are $O(1)$. It follows that
\begin{equation*}
 d_2(A,\eta_\sigma(u))
 =u+\log a+o(1).
\end{equation*}
Subtracting $u$ proves \eqref{eq:Busemann-normalization-profile-new}.
\end{proof}

Let
\begin{equation*}
 \mathcal H_c^\sigma:=\{A>0:b_\sigma(A)=c\}
\end{equation*}
for the horosphere of height $c$, and
\begin{equation*}
 \mathcal B_c^\sigma:=\{A>0:b_\sigma(A)\le c\}
\end{equation*}
for the corresponding horoball. We can now locate the relative-operator geodesic. Recall that 
\begin{equation*}
 \Gamma_t=L_{\rho^{1-t}}R_{K_t},
 \qquad
 K_t=\sigma^{-1/2}\rho^t\sigma^{-1/2}.
\end{equation*}
Applying $\Gamma_t$ to $\Omega_\sigma=\sigma^{1/2}$ gives $\rho \sigma^{-1/2}$ as we saw in Theorem \ref{theo:geodesic-likelihood-surface}. Therefore,
\begin{equation}
\langle\Omega_\sigma,\Gamma_t\Omega_\sigma\rangle_{\HS}
 =\tr\!\left(\sigma^{1/2}\rho\sigma^{-1/2}\right)
 =\tr\rho=1\, .
\end{equation}
Substituting this identity into the Busemann formula yields

\begin{equation}\label{eq:likelihood-geodesic-horosphere-new}
 b_\sigma(\Gamma_t)
 =\log\langle\Omega_\sigma,\Gamma_t\Omega_\sigma\rangle_{\HS}
 =0,
 \qquad 0\le t\le1.
\end{equation}

As a consequence, the geodesic $\Gamma_t=\Delta_{\rho\vert \sigma}\#_t \widehat{\Delta}_{\rho\vert \sigma}$ lies on the single horosphere $\mathcal H_0^\sigma$. Differentiating $b_\sigma(\Gamma_t)=0$ gives
\begin{equation*}
 0=\frac{\dd}{\dd t}b_\sigma(\Gamma_t)
 =g_{\Gamma_t}\!\left(\nabla b_\sigma(\Gamma_t),
                        \dot\Gamma_t\right).
\end{equation*}
Since the gradient is normal to the horosphere, we obtain that  $\dot\Gamma_t$ is tangent to $\mathcal H_0^\sigma$. 

Differentiating once more,
\begin{equation*}
 \frac{\dd^2}{\dd t^2}b_{\sigma}(\Gamma_t)
 =\operatorname{Hess}b_\sigma|_{\Gamma_t}(\dot \Gamma_t,\dot \Gamma_t)
 +g_{\Gamma_t}\!\left(\nabla b_{\sigma},\nabla_{\dot \Gamma_t}\dot \Gamma_t\right).
\end{equation*}
Here the left-hand side vanishes because $b_\sigma(\Gamma_t)$ is the zero function, and the second term vanishes because a geodesic satisfies $\nabla_{\dot\Gamma_t}\dot\Gamma_t=0$. Then,
\begin{equation}\label{eq:t-null-busemann-hessian-new}
 \operatorname{Hess}b_\sigma|_{\Gamma_t}
 (\dot\Gamma_t,\dot\Gamma_t)=0.
\end{equation}
 In other words, the horosphere has zero normal curvature in the $t$-direction, i.e.  as $\Gamma_t$ moves along the horosphere.

\subsection{A geometric meaning for the statistical quantities of the likelihood surface}

We  will move the coordinate $s$ from $I$ to $\Gamma_t$ through the radial geodesic $s\mapsto\Gamma_t^s$ and will see how this curve moves across the horospherical levels.  For fixed $t$, set
\begin{equation*}
 \gamma_t(s):=I\#_s\Gamma_t=\Gamma_t^s,
 \qquad 0\leq s\leq1.
\end{equation*}
Both endpoints belong to $\mathcal H_0^\sigma$, because $b_\sigma(I)=b_\sigma(\Gamma_t)=0$.  Along the  geodesic, however,
\begin{equation}\label{eq:Psi-as-Busemann-profile-new}
 b_\sigma(\gamma_t(s))
 =\log\langle\Omega_\sigma,\Gamma_t^s\Omega_\sigma\rangle_{\mathrm{HS}}
 =\Psi_{\rho,\sigma}(t,s).
\end{equation}
This  allow us to provide geometric meanings to the quantities studied in  Proposition \ref{prop:classical-likelihood-calculus}.

 Indeed, the chain rule and \eqref{eq:Psi-first-second-derivative-classical} give
\begin{equation}\label{eq:normal-velocity-mean-loglikelihood-new}
 \partial_s\Psi_{\rho,\sigma}(t,s)
 =g_{\gamma_t(s)}\left(
 \nabla b_\sigma(\gamma_t(s)),
 \dot\gamma_t(s)
 \right)
 =\bE_{\pi_{t,s}}[\ell_t].
\end{equation}
as was pointed out in \cite[Section 6]{hiraiNieuwboerWalter2026} for general observables. Since $\nabla b_{\sigma}$ is the normal vector to the horosphere,  $\bE_{\pi_{t,s}}[\ell_t]$ represents the velocity of $\gamma_t(s)$ through the horospheres.  At $s=0$ this velocity is
\begin{equation*}
 \partial_s\Psi_{\rho,\sigma}(t,0)
 =-D_{\mathrm{KL}}(q_t\Vert p_t)\leq0 \, .
\end{equation*}
In particular, for sufficiently small $s$,  $b_{\sigma}(\gamma_t(s))=b_{\sigma}(\gamma_t(0))+s\partial_s\Psi_{\rho,\sigma}(t,0)+o(s)$, and since $b_{\sigma}(\gamma_t(0))=0$, then  $b_{\sigma}(\gamma_t(s))<0$, so the radial geodesic initially enters the  horoball $\mathcal B_0^\sigma$, while at $s=1$ it returns to the boundary horosphere with outward velocity
\begin{equation*}
 \partial_s\Psi_{\rho,\sigma}(t,1)
 =D^t(\rho\Vert\sigma)\geq0.
\end{equation*}

For fixed $t$, $\gamma_t(s)=\Gamma_t^s=e^{s \log \Gamma_t}$ with derivative $\dot \gamma_t(s)=(\log \Gamma_t)\Gamma_t^s$ and has velocity
\begin{align}
     L_t&:=\sqrt{g_{\gamma_t(s)}(\dot \gamma_t(s),\dot \gamma_t(s))}\\ & =\sqrt{\tr[\Gamma_t^{-s}(\log \Gamma_t)\Gamma_t^s\Gamma_t^{-s}(\log \Gamma_t)\Gamma_t^s]}\\ &=\Vert \log \Gamma_t \Vert_{\HS}\\
     &=d_2(I,\Gamma_t)
\end{align}

Whenever $\Gamma_t\neq I$, if $\theta_t(s)$ denotes the angle between $\dot\gamma_t(s)$ and the Busemann normal $\nabla b_\sigma$, then \eqref{eq:normal-velocity-mean-loglikelihood-new} becomes
\begin{equation*}
 \partial_s\Psi_{\rho,\sigma}(t,s)=L_t\cos\theta_t(s).
\end{equation*}
In particular,
\begin{equation}\label{eq:Stein-normal-component-new}
 D^t(\rho\Vert\sigma)
 =L_t\cos\theta_t(1)\, .
\end{equation}
The geodesic divergence is therefore the normal component of the velocity when the curve returns to the  horosphere.

Because $\gamma_t$ is a geodesic, differentiating \eqref{eq:Psi-as-Busemann-profile-new} twice gives
\begin{equation}\label{eq:variance-Busemann-Hessian-new}
 \partial_s^2\Psi_{\rho,\sigma}(t,s)
 =\operatorname{Hess}b_\sigma|_{\gamma_t(s)}
 (\dot\gamma_t(s),\dot\gamma_t(s))
 =\Var_{\pi_{t,s}}(\ell_t).
\end{equation}
Thus the variance  measures how  $b_\sigma(\gamma_t(s))$ bends along the radial geodesic $s\mapsto\gamma_t(s)$. Furthermore,  the convexity given by $\partial_s^2\Psi\geq0$, together with the zero endpoint values for $s=0$ and $s=1$, also shows geometrically that the whole radial geodesic lies in the horoball  $\mathcal B_0^\sigma$. This connection between the Hessian of the  Busemann function and the variance, was also previously identified in \cite[Section 6]{hiraiNieuwboerWalter2026} for general observables, and here we make it specific for our geodesic $\gamma_t(s)$, identifying that this operator variance is exactly the likelihood variance of the canonical experiment constructed in Theorem \ref{theo:geodesic-likelihood-surface}.

Assume now that $p_t\neq q_t$.  Then the likelihood cumulant is strictly convex and has a unique Chernoff minimizer $s_t^*\in(0,1)$.  Since by Proposition \ref{prop:classical-likelihood-calculus} (iv)
\begin{equation*}
 \partial_s\Psi_{\rho,\sigma}(t,s_t^*)=0,
\end{equation*}
equation \eqref{eq:normal-velocity-mean-loglikelihood-new} shows that $\dot\gamma_t(s_t^*)$ is tangent to the horosphere through $\Gamma_t^{s_t^*}$.  Moreover, by Proposition \ref{prop:classical-likelihood-calculus} (iv)
\begin{equation*}
 b_\sigma(\Gamma_t^{s_t^*})
 =\Psi_{\rho,\sigma}(t,s_t^*)
 =-C_t(\rho,\sigma).
\end{equation*}
The Chernoff point $\Gamma_t^{s_t^*}$  is therefore the unique point at which the radial geodesic is tangent to the horosphere $\mathcal H_{-C_t(\rho,\sigma)}^\sigma$.  This intuition motivates the next result, which provides, as a consequence, a new  geometric meaning for the Chernoff information, see Figure \ref{fig:horosphere}.

\begin{theorem}[Horospherical depth]\label{theo:Chernoff-Distance}
    For invertible states $\rho,\sigma \in \mathcal{S}(\mathcal{H}
    )$ and every $t,s \in [0,1]$
    \begin{equation}
        d_2(\Gamma_t^{s},\mathcal{H}_0^{\sigma})=-\Psi_{\rho,\sigma}(t,s)\, .
    \end{equation}
     As a consequence, if the associated probabilities $p_t$ and $q_t$ in Theorem \ref{theo:geodesic-likelihood-surface} satisfy $p_t\neq q_t$,  the Chernoff point $\Gamma_t^{s_t^*}$  is the point on the geodesic  $\gamma_{t}(s)$, $s \in [0,1]$, most distant from the horosphere $\mathcal{H}_0^{\sigma}$. i.e.
   \begin{equation}
    C_t(\rho,\sigma)=\max_{s \in [0,1]}d_2(\Gamma_t^s, \mathcal{H}_0^{\sigma})\, .
\end{equation}
\end{theorem}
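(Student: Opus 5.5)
We will prove the general identity $d_2(A,\mathcal H_0^\sigma)=-b_\sigma(A)$ for every $A$ in the horoball $\mathcal B_0^\sigma$, and then specialize to $A=\Gamma_t^s$ using \eqref{eq:Psi-as-Busemann-profile-new} together with the fact, shown after \eqref{eq:variance-Busemann-Hessian-new}, that $\Gamma_t^s\in\mathcal B_0^\sigma$ for $s\in[0,1]$. The argument has two inequalities, and both rely only on the facts recalled above: $b_\sigma$ is $1$-Lipschitz for $d_2$, and its gradient has unit length.

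\emph{Lower bound.} Let $A$ satisfy $b_\sigma(A)=c\le0$, and let $B\in\mathcal H_0^\sigma$. The $1$-Lipschitz property gives $d_2(A,B)\ge |b_\sigma(B)-b_\sigma(A)|=-c$. Taking the infimum over $B$ yields $d_2(A,\mathcal H_0^\sigma)\ge -b_\sigma(A)$.

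\emph{Upper bound.} Here we need an explicit curve from $A$ to $\mathcal H_0^\sigma$ of length exactly $-c$. The natural choice is the flow line of $\nabla b_\sigma$, but a closed form is better. We use the scaling $u\mapsto e^{u}A$. Since $b_\sigma(e^uA)=u+b_\sigma(A)$, the point $e^{-c}A$ lies on $\mathcal H_0^\sigma$. The curve $u\mapsto e^uA$ is a geodesic; by the distance formula, $d_2(A,e^{-c}A)=\|(-c)I\|_{\HS}=|c|\sqrt{\dim\mathcal B_2(\cH)}$. This is too long unless $\dim=1$, so scaling fails and a better curve is needed. The fix is to use the geodesic $u\mapsto A^{1/2}e^{uQ}A^{1/2}$, where $Q$ is the rank-one projection onto $A^{1/2}\Omega_\sigma/\|A^{1/2}\Omega_\sigma\|$. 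Its speed is $\|Q\|_{\HS}=1$, and
\[
\langle\Omega_\sigma,A^{1/2}e^{uQ}A^{1/2}\Omega_\sigma\rangle_{\HS}=e^{u}\langle\Omega_\sigma,A\Omega_\sigma\rangle_{\HS},
\]
because $A^{1/2}\Omega_\sigma$ is an eigenvector of $e^{uQ}$ with eigenvalue $e^u$. Hence $b_\sigma$ increases exactly by $u$ along this unit-speed geodesic. At $u=-c$ we reach a point $B\in\mathcal H_0^\sigma$ with $d_2(A,B)=-c$. This is the step we expect to be the main obstacle, namely finding a unit-speed geodesic along which $b_\sigma$ has slope exactly $1$; the rank-one construction above resolves it, mirroring the ray $\eta_\sigma$ transported by the congruence $X\mapsto A^{1/2}XA^{1/2}$, which is an isometry.

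\emph{Consequence.} Combining the two bounds gives $d_2(\Gamma_t^s,\mathcal H_0^\sigma)=-\Psi_{\rho,\sigma}(t,s)$ for $s\in[0,1]$. Maximizing over $s$, and invoking Proposition \ref{prop:classical-likelihood-calculus}(iv) (strict convexity, with a unique minimizer $s_t^*\in(0,1)$ when $p_t\neq q_t$), shows that the maximum of the distance is attained only at $s_t^*$ and equals $-\Psi_{\rho,\sigma}(t,s_t^*)=C_t(\rho,\sigma)$, by \eqref{eq:Chernoff-information-classical}.
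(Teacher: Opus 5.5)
Your proof is correct. The lower bound (the $1$-Lipschitz estimate, with $\Gamma_t^s\in\mathcal B_0^\sigma$ from convexity and the zero endpoint values) and the final step (strict convexity and the unique minimizer $s_t^*$) are the same as in the paper. The upper bound is where you differ.

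\textbf{The paper's upper bound.} It follows the integral curve $\delta$ of $\nabla b_\sigma$ starting at $\Gamma_t^s$. It relies on the cited facts that $b_\sigma$ is $C^2$ with $\|\nabla b_\sigma\|=1$ to get $b_\sigma(\delta(r))=b_\sigma(\delta(0))+r$ along a unit-speed curve.

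\textbf{Your upper bound.} You write down the geodesic $u\mapsto A^{1/2}e^{uQ}A^{1/2}$ explicitly and check, by direct computation, that it has unit speed and that $b_\sigma$ has slope exactly $1$ along it.

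\textbf{The two curves coincide.} For the affine-invariant metric,
$\nabla b_\sigma(A)=A|\Omega_\sigma\rangle\langle\Omega_\sigma|A/\langle\Omega_\sigma,A\Omega_\sigma\rangle_{\HS}=A^{1/2}QA^{1/2}$, which is your initial velocity. Along your curve, $b_\sigma$ grows at rate $1$ while the speed is $1$ and $\|\nabla b_\sigma\|=1$. Equality in Cauchy--Schwarz then forces the velocity to equal $\nabla b_\sigma$ at every point. So your geodesic is the paper's gradient line in closed form.

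\textbf{What each route buys.} Your version is self-contained. It needs neither the regularity theory of Busemann functions nor global existence of the gradient flow. Contrary to your opening sentence, it does not even use the unit-gradient fact. It also exhibits an explicit nearest point $A^{1/2}e^{-b_\sigma(A)Q}A^{1/2}$ on $\mathcal H_0^\sigma$. This shows the infimum is attained, along a geodesic that meets the horosphere orthogonally. The paper's version is shorter and applies verbatim to any Busemann function on a Hadamard manifold.

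Your aside that the curve is $\eta_\sigma$ transported by the congruence is only loosely true: the projection is $Q$ rather than $P_\sigma$, and the direction is reversed. Nothing in the proof depends on it.
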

\begin{proof}
    As we already mentioned, a Busemann function is $1$-Lipschitz, meaning that for every $A,B>0$, $\vert b_{\sigma}(A)-b_{\sigma}(B)\vert\leq d_2(A,B)$. Let $A=\Gamma_t^{s}$ and $B\in \cH_0^{\sigma}$, so $b_{\sigma}(A)=\Psi_{\rho,\sigma}(t,s)$ and $b_{\sigma}(B)=0$. Since $s\mapsto \Psi_{\rho,\sigma}(t,s)$ is convex for $s \in [0,1]$ and $\Psi_{\rho,\sigma}(t,0)=\Psi_{\rho,\sigma}(t,1)=0$, $\Psi_{\rho,\sigma}(t,s)\leq 0$.  Therefore,
    \begin{equation}
        -\Psi_{\rho,\sigma}(t,s)=\vert b_{\sigma}(\Gamma_t^{s})-b_{\sigma}(B)\vert \leq d_2(\Gamma_t^{s},B )\, .
    \end{equation}
Since this holds for every $B$, taking the infimum we obtain that $-\Psi_{\rho,\sigma}(t,s)\leq d_2(\Gamma_t^{s},\mathcal{H}_0^{\sigma} )$\, .

    To show the other direction, let $\delta(r)$ be a path starting at $\Gamma_t^{s}$ and following the Busemann gradient, i.e.
    \begin{equation}
        \dot \delta(r)=\nabla b_{\sigma}(\delta(r))\, , \quad \delta(0)=\Gamma_t^{s}\, .
    \end{equation}
    By the chain rule,
    \begin{align}
        \frac{\dd}{\dd r}b_{\sigma}(\delta(r))& =g_{\delta(r)}\left( \nabla b_{\sigma}(\delta(r)), \dot \delta (r) \right)\\
        &= \Vert \nabla b_{\sigma}(\delta(r))\Vert^2\\
        &=1\, ,
    \end{align}
    and integrating,
    \begin{equation}
        b_{\sigma}(\delta(r))=b_{\sigma}(\delta(0))+r
    \end{equation}
In particular, if we choose $r=-\Psi_{\rho,\sigma}(t,s)\equiv -\Psi$, then $b_{\sigma}(\delta(-\Psi))=0$, so $\delta(-\Psi) \in \mathcal{H}_0^{\sigma}$.

Since $\Vert \dot \delta(r)\Vert=1$, $\delta(r)$ is parameterized by the arc-length, so there exists a path from $\Gamma_t^{s}$ to $\mathcal{H}_0^{\sigma}$ of length $-\Psi$, which implies that $d_2(\Gamma_t^{s},\cH_0^{\sigma})\leq -\Psi$.

For the second part, since
\begin{equation}
    d_2(\Gamma_t^{s_t^*},\mathcal{H}_0^{\sigma})=-\Psi_{\rho,\sigma}(t,s_t^*)=C_t(\rho,\sigma)\, ,
\end{equation}
whenever $p_t\neq q_t$, the strict convexity of $s\mapsto \Psi(t,s)$ makes the maximizer unique. Since $s_t^*$ is the unique minimizer of $s \mapsto \Psi_{\rho,\sigma}(t,s)$ by Proposition \ref{prop:classical-likelihood-calculus} (iv), we conclude that 
\begin{equation}
    C_t(\rho,\sigma)=-\min_{s\in [0,1]}\Psi_{\rho,\sigma}(t,s)=\max_{s \in [0,1]}d_2(\Gamma_t^s, \mathcal{H}_0^{\sigma})\, .
\end{equation}

\end{proof}

\begin{figure}[t]
\centering
\color{black}
\begin{tikzpicture}[x=1.18cm,y=0.92cm,>=Latex,font=\footnotesize]
  \path[fill=gray!10]
    plot[smooth,domain=-5.15:5.15,samples=100]
      (\x,{3.15+0.015*\x*\x})
    -- (5.15,-0.45) -- (-5.15,-0.45) -- cycle;
  \draw[thick]
    plot[smooth,domain=-5.15:5.15,samples=100]
      (\x,{3.15+0.015*\x*\x});
  \draw[thick,dashed]
    plot[smooth,domain=-5.15:5.15,samples=100]
      (\x,{0.80+0.015*(\x-0.35)*(\x-0.35)});

  \draw[very thick,blue!70!black]
    (-3.55,3.34)
    .. controls (-2.70,1.65) and (-0.60,0.80) .. (0.35,0.80)
    .. controls (1.30,0.80) and (2.70,1.75) .. (3.55,3.34);

  \fill (-3.55,3.34) circle (2.3pt);
  \fill (3.55,3.34) circle (2.3pt);
  \fill (0.35,0.80) circle (2.3pt);
  \node[above left=2pt] at (-3.55,3.34) {$I=\gamma_t(0)$};
  \node[above right=3pt] at (3.55,3.34) {$\Gamma_t=\gamma_t(1)$};
  \node[below=3pt] at (0.35,0.80) {$\Gamma_t^{s_t^*}$};

  \draw[-{Latex[length=2.6mm]},blue!70!black,thick]
    (-3.42,3.10) -- (-2.98,2.34);
  \draw[-{Latex[length=2.6mm]},blue!70!black,thick]
    (3.00,2.28) -- (3.42,3.08);
  \draw[-{Latex[length=2.6mm]},blue!70!black,thick]
    (-0.45,0.80) -- (1.15,0.80);

  \draw[-{Latex[length=2.8mm]},teal!70!black,very thick]
    (-0.10,0.87) -- (-0.10,2.98);
  \node[teal!70!black,anchor=east]
    at (-0.23,2.18) {$\nabla b_\sigma$};
  \draw[<->,teal!70!black,thick]
    (1.05,0.83) -- (1.05,3.15);
  \node[teal!70!black,anchor=west]
    at (1.16,1.92) {$C_t(\rho,\sigma)$};

  \node[draw=blue!50!black,rounded corners,fill=white,align=left,
        inner sep=3pt,anchor=west]
    at (-5.05,1.68)
    {$\Psi_t'(0)=-D_{\rm KL}(q_t\Vert p_t)$\\[-1mm]
     {\scriptsize inward normal velocity}};
  \node[draw=blue!50!black,rounded corners,fill=white,align=right,
        inner sep=3pt,anchor=east]
    at (5.05,1.68)
    {$\Psi_t'(1)=D^t(\rho\Vert\sigma)$\\[-1mm]
     {\scriptsize outward normal velocity}};
  \node[draw=blue!50!black,rounded corners,fill=white,
        inner sep=3pt]
    at (0.35,-0.18)
    {$\Psi_t'(s_t^*)=0$: tangency};

  \node[anchor=west,align=left] at (4.02,3.02)
    {$\mathcal H_0^\sigma$\\[-1mm]{\scriptsize $b_\sigma=0$}};
  \node[anchor=west] at (3.68,1.10)
    {$\mathcal H_{-C_t}^\sigma$};
  \node[gray!70!black,anchor=west]
    at (-5.00,0.10)
    {Horoball $\mathcal B_0^\sigma$};
  \node[blue!70!black] at (2.30,0.48)
    {$\gamma_t(s)=\Gamma_t^s$, speed $L_t$};
\end{tikzpicture}
\caption{Example of the geometry of the horospheres: The radial geodesic $\gamma_t(s)$ has constant total speed $L_t$, enters the horoball with normal component $-D_{\mathrm{KL}}(q_t\Vert p_t)$, is tangent to the deepest horosphere at the value of parameter $s_t^*$, and returns with normal component $D^t(\rho\Vert\sigma)$.  The normal distance from the Chernoff point to $\mathcal H_0^\sigma$ is $C_t(\rho,\sigma)$.  }
\label{fig:horosphere}
\end{figure}

\section*{Acknowledgments}
 Á.C. acknowledges support from the Deutsche Forschungsgemeinschaft
(DFG, German Research Foundation), Project-ID 470903074, TRR 352.
This project was funded within the QuantERA II Programme, which
received funding from the European Union's Horizon 2020 research and
innovation programme under Grant Agreement No. 101017733. PCR acknowledges support from the European
Research Council (ERC Grant Agreement No.\ 948139) and from the Excellence Cluster --
Matter and Light for Quantum Computing (ML4Q-2).

\paragraph{AI Statement:}
The authors acknowledge the use of  ChatGPT 5.6 Sol to assist with brainstorming, drafting and organizing the literature review. The final output was thoroughly verified by the authors, who take full responsibility for the accuracy and integrity of this work.

\bibliographystyle{abbrv}
\bibliography{literatura}

\appendix

\section{Proofs of the lemmas}\label{app:proofs-of-lemmas}

For ease of reference, we collect in this appendix the complete statements and proofs of all lemmas used in the main text. 

\begin{lemma}\label{lemma:proof-orthogonality-relations}
Let $\rho=\sum_i p_iE_{ii}>0$, and  let  $D$ be a  traceless diagonal self-adjoint matrix in the eigenbasis of $\rho$. Then, for $i<j$ and $k<l$,
\begin{align*}
     g_{\rho,t}^{(f)}(D,A_{ij})
&=g_{\rho,t}^{(f)}(D,B_{ij})=0,\\
         g_{\rho,t}^{(f)}(A_{ij},A_{kl})
   &=g_{\rho,t}^{(f)}(B_{ij},B_{kl})
    =g_{\rho,t}^{(f)}(A_{ij},B_{kl})=0,
    \qquad (i,j)\neq(k,l),\\
g_{\rho,t}^{(f)}(A_{ij},B_{ij})&=0,\\
         g_{\rho,t}^{(f)}(A_{ij},A_{ij})
   &=g_{\rho,t}^{(f)}(B_{ij},B_{ij}).
\end{align*}
\end{lemma}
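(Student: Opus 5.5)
The plan is a symmetry argument: every identity in the statement will follow from invariance of the Hessian form $g_{\rho,t}^{(f)}$ under unitaries commuting with $\rho$. For any unitary $U$, $\Gamma_t(U\rho U^*,U\sigma U^*)=(L_U R_{U^*})\,\Gamma_t(\rho,\sigma)\,(L_{U^*}R_U)$, because
\begin{equation*}
(U\sigma U^*)^{-1/2}(U\rho U^*)^t(U\sigma U^*)^{-1/2}=UK_tU^*
\end{equation*}
and $X\mapsto UXU^*$ is unitary on $\mathcal B_2(\cH)$. Moreover $\Omega_{U\sigma U^*}=U\Omega_\sigma U^*$. Hence $D_f^t(U\rho U^*\Vert U\sigma U^*)=D_f^t(\rho\Vert\sigma)$. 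Differentiating the definition \eqref{eq:Definition_metric_Hessian} then gives
\begin{equation*}
g_{U\rho U^*,t}^{(f)}(UXU^*,UYU^*)=g_{\rho,t}^{(f)}(X,Y).
\end{equation*}
When $U\rho U^*=\rho$, this yields $g_{\rho,t}^{(f)}(UXU^*,UYU^*)=g_{\rho,t}^{(f)}(X,Y)$. We will also use that the form is real bilinear and symmetric, since it is a Hessian and mixed partials commute by smoothness in finite dimensions.

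Next, apply this invariance with the diagonal phase unitaries $U_\phi=\sum_k e^{\mathrm i\phi_k}E_{kk}$, which commute with $\rho$. They act by $E_{ij}\mapsto e^{\mathrm i(\phi_i-\phi_j)}E_{ij}$ and fix $D$. We extend $g$ sesquilinearly, as the paper does, and evaluate on the complex basis $E_{ij}$. Invariance gives
\begin{equation*}
g(E_{ij},E_{kl})=e^{\mathrm i(\phi_k-\phi_l-\phi_i+\phi_j)}\,g(E_{ij},E_{kl})
\end{equation*}
for all choices of phases. Hence $g(E_{ij},E_{kl})=0$ unless $E_{ij}$ and $E_{kl}$ carry the same character. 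For $i\neq j$ this happens only when $(k,l)=(i,j)$; note that $(k,l)=(j,i)$ carries the opposite character. Every diagonal $E_{kk}$ carries the trivial character, so all diagonal–off-diagonal pairings also vanish.

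Expanding $A_{ij}=E_{ij}+E_{ji}$ and $B_{ij}=\mathrm i(E_{ij}-E_{ji})$ turns these vanishings into the orthogonality relations. The only surviving terms are $g(E_{ij},E_{ij})=:a$ and $g(E_{ji},E_{ji})=:b$. Then
\begin{equation*}
g(A_{ij},A_{ij})=a+b,\qquad g(B_{ij},B_{ij})=|\mathrm i|^2(a+b)=a+b,\qquad g(A_{ij},B_{ij})=\mathrm i(a-b).
\end{equation*}
Since $g(A_{ij},B_{ij})$ is real, $\mathrm i(a-b)$ must be real. Because $a$ and $b$ are real (each is a Hermitian-form diagonal entry) and $\mathrm i(a-b)$ must also be real, we get $a=b$, and so $g(A_{ij},B_{ij})=0$.

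The main obstacle is justifying the sesquilinear extension cleanly, namely that it is Hermitian so that $a,b\in\mathbb R$. To avoid relying on this, I would rather settle $g(A_{ij},B_{ij})=0$ with a real unitary. Take $V=\sum_{k\neq j}E_{kk}-E_{jj}$, which commutes with $\rho$ and is a special case of the phase unitaries. It gives $VA_{ij}V^*=-A_{ij}$, $VB_{ij}V^*=-B_{ij}$ and $VDV^*=D$, so it kills every $D$-cross term. For $(k,l)$ sharing an index with $(i,j)$, a suitable choice of which index to flip kills those cross terms as well. For $g(A_{ij},B_{ij})$, use the unitary $W=\sum_{k\neq j}E_{kk}+\mathrm iE_{jj}$, which commutes with $\rho$ and satisfies $WA_{ij}W^*=-B_{ij}$ and $WB_{ij}W^*=A_{ij}$. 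Invariance then gives $g(A,A)=g(B,B)$ and $g(A,B)=-g(B,A)$, and symmetry of the form forces $g(A,B)=0$. This argument uses only real bilinearity and symmetry.
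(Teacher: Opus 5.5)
Your proposal is correct and is essentially the paper's argument: unitary covariance of $g_{\rho,t}^{(f)}$ (inherited from unitary invariance of $D_f^t$) is applied to diagonal phase unitaries commuting with $\rho$. A phase difference $\pi$ kills the $D$-cross terms, the phase/character argument handles distinct pairs, and a phase difference $\pm\pi/2$ together with symmetry gives $g_{\rho,t}^{(f)}(A_{ij},B_{ij})=0$ and $g_{\rho,t}^{(f)}(A_{ij},A_{ij})=g_{\rho,t}^{(f)}(B_{ij},B_{ij})$. Your real sign-flip variant for distinct pairs is a harmless refinement that avoids the sesquilinear extension. Note, however, that symmetry of $g_{\rho,t}^{(f)}$ comes from the second-variation identity, not merely from commuting mixed partials: because $D_f^t(\eta\Vert\eta)$ is constant with vanishing first variation, the mixed Hessian equals the Hessian in a single argument.
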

\begin{proof}
As already mentioned, the tangent space of the set of quantum states is formed by  traceless self-adjoint matrices. Such a matrix $X$ can  be decomposed as
\begin{equation*}
    X=D_X+\sum_{i<j}(a_{ij}A_{ij}+b_{ij}B_{ij})\, ,
\end{equation*}
where $D_X$ is traceless and  diagonal, $A_{ij}=E_{ij}+E_{ji}$ and $B_{ij}=\mathrm{i}(E_{ij}-E_{ji})$.

Since the divergence is unitarily invariant \cite[Proposition 4.3]{capel2026geodesic}, differentiating twice on the diagonal gives the covariance relation
\begin{equation*}
   g_{U\rho U^*,t}^{(f)}(UXU^*,UYU^*)
    =g_{\rho,t}^{(f)}(X,Y).
  \end{equation*}

Take the  eigenbasis of $\rho$. In this basis $\rho$ is diagonal and $E_{ij}$ has a single nonzero entry, equal to $1$, in position $(i,j)$. Consider  a  diagonal unitary matrix
\begin{equation*}
 U_{\varphi}=\sum_k e^{\mathrm{i}\varphi_k}E_{kk}.
\end{equation*}
The  conjugation of $E_{ij}$ by $U_{\varphi}$ gives
\begin{equation*}
    U_{\varphi}E_{ij}U_{\varphi}^*
    =e^{\mathrm{i}(\varphi_i-\varphi_j)}E_{ij}.
\end{equation*}
Choosing $\varphi_i-\varphi_j=\pi$, we obtain
\begin{equation*}
 U_{\varphi}A_{ij}U_{\varphi}^*=-A_{ij},
 \qquad
 U_{\varphi}B_{ij}U_{\varphi}^*=-B_{ij},
\end{equation*}
whereas $U_{\varphi}DU_{\varphi}^*=D$ for every diagonal $D$. By unitary covariance and real bilinearity,
\begin{equation*}
    g_{\rho,t}^{(f)}(D,A_{ij})
    =g_{\rho,t}^{(f)}(D,-A_{ij})
    =-g_{\rho,t}^{(f)}(D,A_{ij})\, .
\end{equation*}
Therefore $g_{\rho,t}^{(f)}(D,A_{ij})=0$. The same argument gives $g_{\rho,t}^{(f)}(D,B_{ij})=0$, proving \eqref{eq:Orthogonality_D}.

For ordered pairs $(i,j)$ and $(k,l)$, unitary covariance and sesquilinearity give
\begin{equation}\label{eq:Phases_metric}
    g_{\rho,t}^{(f)}(E_{ij},E_{kl})
    =e^{-\mathrm{i}(\varphi_i-\varphi_j)}
     e^{\mathrm{i}(\varphi_k-\varphi_l)}
     g_{\rho,t}^{(f)}(E_{ij},E_{kl})\,  .
\end{equation}
For $g_{\rho,t}^{(f)}(E_{ij},E_{kl})$ to be nonzero, the phase factor in \eqref{eq:Phases_metric} must equal one for every choice of the phases. This occurs exactly when $(i,j)=(k,l)$. Consequently, if $i<j$, $k<l$,  and $(i,j)\neq(k,l)$, every matrix-unit pairing arising from the expansions of $A_{ij},B_{ij}$ and $A_{kl},B_{kl}$ vanishes. This proves \eqref{eq:orthogonality_pairs}.

Finally, choose $\varphi_i-\varphi_j=\pi/2$. Then
\begin{equation*}
 U_{\varphi}A_{ij}U_{\varphi}^*=B_{ij},
 \qquad
 U_{\varphi}B_{ij}U_{\varphi}^*=-A_{ij}.
\end{equation*}
Using unitary covariance and symmetry on the real tangent space,
\begin{align*}
 g_{\rho,t}^{(f)}(A_{ij},B_{ij})
 &=g_{\rho,t}^{(f)}(B_{ij},-A_{ij})
 =-g_{\rho,t}^{(f)}(A_{ij},B_{ij}),
\end{align*}
which proves \eqref{eq:Orthogonality_A_B}. The same transformation gives
\begin{equation*}
    g_{\rho,t}^{(f)}(A_{ij},A_{ij})
    =g_{\rho,t}^{(f)}(B_{ij},B_{ij})\, ,
\end{equation*}
which is \eqref{eq:Same_norm_AB}.
\end{proof}

\begin{lemma}\label{lemma:proof-two-level-reduction}

     Let $\rho\in \cS(\cH)$ be an invertible quantum state with eigenvalues  $(p_i)_i$ and corresponding eigenvectors $(v_i)_i$. Let $E_{ij}=\vert v_i \rangle \langle v_j \vert$ and define the traceless self-adjoint matrix $A_{ij}=E_{ij}+E_{ji}$. Then, for any operator convex function $f:(0,\infty)\to \mathbb{R}$ with $f(1)=0$,
\begin{equation*}
    g_{\rho,t}^{(f)}(A_{ij},A_{ij})=\frac{1}{p_j}g_{P_r,t}^{(f)}(A,A)\, ,
\end{equation*}
where, for $r=p_i/p_j$,
\begin{equation*}
         P_r=\begin{pmatrix}
            r & 0\\
             0 & 1
         \end{pmatrix},
         \qquad
         A=\begin{pmatrix}
            0 & 1\\
            1 & 0
         \end{pmatrix}.
\end{equation*}
\end{lemma}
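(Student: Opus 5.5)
We plan to combine three ingredients: unitary covariance of $g^{(f)}_{\rho,t}$, a reduction of the Hessian to a second derivative along a rotation curve, and a block decomposition of the relative operator.

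\textbf{Step 1: pass to a rotation curve.} Let $R_\theta$ be the rotation by angle $\theta$ in the plane $\operatorname{span}\{v_i,v_j\}$, acting as the identity on the orthogonal complement, and set $\rho_\theta:=R_\theta\rho R_\theta^*$. Differentiating at $\theta=0$ gives
\begin{equation*}
\rho_0'=[\,E_{ji}-E_{ij},\rho\,]=(p_i-p_j)A_{ij}
\end{equation*}
(up to the sign convention for $R_\theta$), while $\rho_0''$ is traceless and diagonal. By the second-variation identity used in Proposition \ref{prop:monotone_metric},
\begin{equation*}
\frac{\dd^2}{\dd\theta^2}D_f^t(\rho\Vert\rho_\theta)\Big|_{\theta=0}
= g_{\rho,t}^{(f)}(\rho_0',\rho_0')+\bigl(\text{first-order term in }\rho_0''\bigr).
\end{equation*}
The first-order term vanishes because $D_f^t(\rho\Vert\cdot)$ attains its minimum $0$ at $\rho$; this uses $f(1)=0$ and $D_f^t\ge0$, which follows from Theorem \ref{theo:geodesic-likelihood-surface} and Jensen's inequality. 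Hence
\begin{equation*}
(p_i-p_j)^2\,g_{\rho,t}^{(f)}(A_{ij},A_{ij})=\frac{\dd^2}{\dd\theta^2}D_f^t(\rho\Vert\rho_\theta)\Big|_{\theta=0}.
\end{equation*}
If $p_i=p_j$ this identity gives no information, so we treat that case by continuity: approximate $\rho$ by states with $p_i\neq p_j$ and use smoothness of the Hessian in the base point.

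\textbf{Step 2: reduce to a $2\times2$ problem (main obstacle).} Write $\cH=V\oplus V^\perp$ with $V=\operatorname{span}\{v_i,v_j\}$. Both $\rho$ and $\rho_\theta$ are block diagonal, and they agree on $V^\perp$. Therefore $K_t=\rho_\theta^{-1/2}\rho^t\rho_\theta^{-1/2}$ is block diagonal, and $\Gamma_t=L_{\rho^{1-t}}R_{K_t}$ leaves invariant the four Hilbert--Schmidt sectors $\cB(V_a,V_b)$. The vector $\Omega_{\rho_\theta}=\rho_\theta^{1/2}$ lies in the two diagonal sectors only. Consequently $D_f^t$ splits as a $V$-part plus a $V^\perp$-part, and the $V^\perp$-part is independent of $\theta$. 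Carrying out this decomposition carefully is the step that needs the most care.

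\textbf{Step 3: rescale.} On $V$ the two states are $p_j P_r$ and $p_j R_\theta P_rR_\theta^*$. Check the homogeneity: $\Gamma_t$ is unchanged under $(\rho,\sigma)\mapsto(c\rho,c\sigma)$, since $c^{1-t}\cdot c^{-1}c^{t}=1$, while $\Omega_\sigma$ scales by $c^{1/2}$. Hence
\begin{equation*}
D_f^t(c\rho\Vert c\sigma)=c\,D_f^t(\rho\Vert\sigma).
\end{equation*}
This gives
\begin{equation*}
\frac{\dd^2}{\dd\theta^2}D_f^t(\rho\Vert\rho_\theta)=p_j\frac{\dd^2}{\dd\theta^2}D_f^t(P_r\Vert P_{r,\theta}).
\end{equation*}

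\textbf{Step 4: apply Step 1 again.} Applying the Step 1 identity to $P_r$, whose eigenvalues are $r$ and $1$, gives
\begin{equation*}
\frac{\dd^2}{\dd\theta^2}D_f^t(P_r\Vert P_{r,\theta})\Big|_{\theta=0}=(r-1)^2 g_{P_r,t}^{(f)}(A,A)
\end{equation*}
in the cone sense of the Remark. Combining,
\begin{equation*}
(p_i-p_j)^2g_{\rho,t}^{(f)}(A_{ij},A_{ij})=p_j(r-1)^2g_{P_r,t}^{(f)}(A,A).
\end{equation*}
Since $(p_i-p_j)^2=p_j^2(r-1)^2$, dividing both sides yields $g_{\rho,t}^{(f)}(A_{ij},A_{ij})=p_j^{-1}g_{P_r,t}^{(f)}(A,A)$, which is the claim.
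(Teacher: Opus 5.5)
Your proposal is correct and follows the paper's proof essentially step for step: the rotation curve $\rho_\theta=U_\theta\rho U_\theta^*$, the reducing-subspace splitting $D_f^t(\rho\Vert\rho_\theta)=D_f^t(P\Vert P_\theta)$, the homogeneity $D_f^t(\lambda\rho\Vert\lambda\sigma)=\lambda D_f^t(\rho\Vert\sigma)$, and cancellation of $(p_i-p_j)^2$. The differences are minor: you account explicitly for the $\rho_0''$ term, which the paper's cited second-variation identity absorbs. You also treat $p_i=p_j$ by continuity in the base point, keeping $v_i,v_j$ fixed, whereas the paper applies the block splitting directly to $D_f^t(\rho+sA_{ij}\Vert\rho+uA_{ij})=D_f^t(P+sA\Vert P+uA)$; that argument in fact works for all $p_i,p_j$.
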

\begin{proof}
     Let $v_i,v_j$ denote two  different  eigenvectors of $\rho$ and let  $\cH_{i,j}=\operatorname{span}\{v_i,v_j\}$, so $\cH=\cH_{i,j}\oplus \cH_{i,j}^{\perp}$. Write $\rho=P\oplus \tau$, where
    \begin{equation*}
         P=\begin{pmatrix}
             p_i & 0\\
              0 & p_j
         \end{pmatrix}\,  , \qquad
        \tau=\sum_{k\neq i,j}p_k \vert v_k \rangle \langle v_k\vert .
    \end{equation*}
    Consider the rotation on the $i,j$ subspace $U_{\theta}=R_{\theta}\oplus I_{\cH_{i,j}^{\perp}}$, where
     \begin{equation*}
         R_{\theta}=\begin{pmatrix}
               \cos \theta &  -\sin\theta\\
             \sin \theta & \cos \theta
        \end{pmatrix}\, .
     \end{equation*}

Define $\rho_{\theta}:=U_{\theta}\rho  U_{\theta}^*=P_{\theta}\oplus \tau$, where $P_\theta=R_\theta P R_\theta^*$.  We first show that
\begin{equation}\label{eq:Direct_sum_divergence}
 D_f^t(\rho\Vert \rho_{\theta})=D_f^t(P\Vert P_{\theta}).
\end{equation}
Since $\rho$ is invertible, $\rho^a=P^a\oplus \tau^a$ for every $a \in \mathbb{R}$, and  similarly for $\rho_{\theta}$. Hence
\begin{equation*}
    \rho_{\theta}^{-1/2}\rho^t \rho_{\theta}^{-1/2}
    =(P_{\theta}^{-1/2}P^tP_{\theta}^{-1/2})
      \oplus(\tau^{-1/2}\tau^t \tau^{-1/2})
    =Z_{\theta}\oplus \tau^{t-1}\, ,
\end{equation*}
where $Z_{\theta}=P_{\theta}^{-1/2}P^tP_{\theta}^{-1/2}$. Therefore, the interpolating  relative operator  can be written as
\begin{equation}\label{eq:geodesic_Lemma_direct_sum}
    \gamma_{\rho,\rho_{\theta}}(t)
    =L_{P^{1-t}\oplus \tau^{1-t}}
     R_{Z_{\theta}\oplus \tau^{t-1}}\, .
\end{equation}

The decomposition $\cH=\cH_{i,j}\oplus \cH_{i,j}^{\perp}$ allows us to write any $X \in \cB(\cH)$ as
\begin{equation*}
    X=\begin{pmatrix}
        X_{11} & X_{12} \\
        X_{21} & X_{22}
     \end{pmatrix}.
\end{equation*}
Because  the left and right multipliers in \eqref{eq:geodesic_Lemma_direct_sum} are block diagonal, $\gamma_{\rho,\rho_{\theta}}(t)$ preserves each  of the four block subspaces. Since this relative operator is self-adjoint on the Hilbert--Schmidt space, these subspaces are reducing  and are consequently preserved by $f(\gamma_{\rho,\rho_{\theta}}(t))$. As
\begin{equation*}
 \rho_\theta^{1/2}=P_\theta^{1/2}\oplus\tau^{1/2},
\end{equation*}
the divergence splits into two contributions:
\begin{align*}
    D_f^t(\rho\Vert \rho_{\theta})
    &=\langle \rho_{\theta}^{1/2},f(\gamma_{\rho,\rho_{\theta}}(t)) \rho_{\theta}^{1/2}\rangle_{\mathrm{HS}}\\
     &=D_f^t(P\Vert P_{\theta})
      +\langle \tau^{1/2}, f(L_{\tau^{1-t}}R_{\tau^{t-1}})\tau^{1/2}\rangle_{\mathrm{HS}}\\
    &=D_f^t(P\Vert P_{\theta})\, ,
\end{align*}
since $\tau^{1/2}$ is an eigenvector of $L_{\tau^{1-t}}R_{\tau^{t-1}}$  with eigenvalue $1$ and $f(1)=0$. This proves \eqref{eq:Direct_sum_divergence}.

To compute the Hessian, notice that by the standard second-variation identity for divergence functions \cite[pp. 632--633]{eguchi1992geometry}, \cite[Proposition 2.7]{matsuzoe2025divergence}, or \cite[Eq. (4)]{ay2015novel}, the metric induced by a smooth divergence can be evaluated by varying only the second entry:
\begin{equation}\label{eq:Second_derivative_divergence}
    g_{\eta,t}^{(f)}(X,X)
    =\left. \frac{\dd^2}{\dd \theta^2}
    D_f^t(\eta\Vert \sigma_{\theta})\right|_{\theta=0},
\end{equation}
for every smooth  path satisfying $\sigma_0=\eta$ and $\dot\sigma_0=X$.

Direct multiplication gives
\begin{equation*}
    P_{\theta}=\begin{pmatrix}
         p_i \cos^2\theta+p_j \sin^2\theta
         & (p_i-p_j)\sin\theta \cos\theta\\
        (p_i-p_j)\sin\theta \cos\theta
        & p_i  \sin^2\theta+p_j \cos^2\theta
    \end{pmatrix},
\end{equation*}
and differentiation at $\theta=0$ yields
\begin{equation*}
    \dot P_0
     =\begin{pmatrix}
        0 & p_i-p_j\\
        p_i-p_j  & 0
    \end{pmatrix}
    =(p_i-p_j)A.
\end{equation*}
Likewise,
\begin{equation*}
 \dot\rho_0=(p_i-p_j)A_{ij}.
\end{equation*}
Applying \eqref{eq:Second_derivative_divergence} to the paths $\rho_\theta$ and $P_\theta$, and using \eqref{eq:Direct_sum_divergence}, we obtain
\begin{align}\label{eq:Before_cancel}
 (p_i-p_j)^2g_{\rho,t}^{(f)}(A_{ij},A_{ij})
 &=\left.\frac{\dd^2}{\dd\theta^2}D_f^t(\rho\Vert\rho_\theta)\right|_{\theta=0}\\
 &=\left.\frac{\dd^2}{\dd\theta^2}D_f^t(P\Vert P_\theta)\right|_{\theta=0}\notag\\
 &=(p_i-p_j)^2g_{P,t}^{(f)}(A,A).\notag
\end{align}
If $p_i\neq p_j$, cancellation in \eqref{eq:Before_cancel} gives
\begin{equation}\label{eq:Reduction_metric}
  g_{\rho,t}^{(f)}(A_{ij},A_{ij})=g_{P,t}^{(f)}(A,A).
\end{equation}

  When $p_i=p_j$, the  rotation path  is constant and cannot be used to cancel the  prefactor. In that case, for sufficiently small $s,u$,
\begin{equation*}
   \rho+sA_{ij}=(P+sA)\oplus\tau,
      \qquad
     \rho+uA_{ij}=(P+uA)\oplus\tau.
\end{equation*}
The same reducing-subspace  argument used  above gives
\begin{equation*}
    D_f^t(\rho+sA_{ij}\Vert\rho+uA_{ij})
     =D_f^t(P+sA\Vert P+uA),
\end{equation*}
    and taking the mixed derivative at $s=u=0$ again yields \eqref{eq:Reduction_metric}. Thus \eqref{eq:Reduction_metric} holds without any nondegeneracy assumption.

It remains to discuss  the scaling of the divergence. For $\lambda>0$,
\begin{align*}
     \gamma_{\lambda\rho,\lambda\sigma}(t)
    &=L_{(\lambda\rho)^{1-t}}
R_{(\lambda\sigma)^{-1/2}(\lambda\rho)^t(\lambda\sigma)^{-1/2}}\\
        &=L_{\lambda^{1-t}\rho^{1-t}}
    R_{\lambda^{t-1}\sigma^{-1/2}\rho^t\sigma^{-1/2}}\\
    &=\gamma_{\rho,\sigma}(t).
 \end{align*}
Since $(\lambda\sigma)^{1/2}=\lambda^{1/2}\sigma^{1/2}$, it follows that
\begin{equation*}
   D_f^t(\lambda \rho\Vert \lambda \sigma)
    =\lambda D_f^t(\rho\Vert\sigma).
 \end{equation*}
    Consequently,
\begin{equation*}
    D_f^t(\lambda \rho +sX\Vert \lambda \rho + uY)
    =\lambda D_f^t\left(
    \rho+\frac{s}{\lambda}X
\left\Vert
         \rho+\frac{u}{\lambda}Y
   \right.\right)\, .
\end{equation*}
  Taking derivatives in $s$ and $u$ at zero, each differentiation supplies a factor $\lambda^{-1}$,  and therefore
\begin{equation}\label{eq:Scaling_Metric_in_rho}
         g_{\lambda \rho,t}^{(f)}(X,Y)=\frac{1}{\lambda}g_{\rho,t}^{(f)}(X,Y)\, .
\end{equation}
  Writing
\begin{equation*}
P=p_j\begin{pmatrix}
        p_i/p_j & 0\\
   0 & 1
    \end{pmatrix}=p_jP_r
 \end{equation*}
     and applying \eqref{eq:Scaling_Metric_in_rho} to \eqref{eq:Reduction_metric} proves \eqref{eq:Equation_Statement_Lemma_Metric_Basis}.

\end{proof}

\begin{lemma}\label{lemma:Taylor}
Let $0\leq t<1$,  $r>0$, $r\neq1$, and
\begin{equation*}
   P=\begin{pmatrix}r&0\\0&1\end{pmatrix},
    \qquad
    P_\theta=R_\theta P R_\theta^*,
\qquad
        Z_\theta=P_\theta^{-1/2}P^tP_\theta^{-1/2}.
\end{equation*}
Let $a_\theta$ and $b_\theta$ denote the first and second rows of $P_\theta^{1/2}$, respectively. If $f(1)=f'(1)=0$, then
\begin{equation*}\label{eq:Taylor_result}
D_f^t(P\Vert  P_\theta)
          =(r-1)^2
   \frac{f(r^{1-t})+r^{1-2t}f(r^{t-1})}
    {(r^{1-t}-1)^2}\,\theta^2
     +o(\theta^2).
\end{equation*}
\end{lemma}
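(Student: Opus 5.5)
The plan is to start from the exact row decomposition established in the proof of Theorem \ref{theo:Hessian_metric_general},
\begin{equation*}
D_f^t(P\Vert P_\theta)=\langle a_\theta,f(M_\theta)a_\theta\rangle+\langle b_\theta,f(N_\theta)b_\theta\rangle,\qquad M_\theta:=r^{1-t}Z_\theta,\quad N_\theta:=Z_\theta,
\end{equation*}
and to treat each row by second-order perturbation theory of $2\times2$ self-adjoint matrices. At $\theta=0$ we have $Z_0=P^{t-1}=\operatorname{diag}(r^{t-1},1)$, so $M_0=\operatorname{diag}(1,r^{1-t})$ and $N_0=\operatorname{diag}(r^{t-1},1)$. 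Moreover $a_0=(\sqrt r,0)$ and $b_0=(0,1)$. Thus each unperturbed vector is an eigenvector with eigenvalue $1$. The eigenvalues $r^{1-t}$ and $r^{t-1}$ of the complementary directions differ from $1$ because $r\neq1$ and $t<1$, so the eigenvalues of $M_\theta$ and $N_\theta$ stay simple and the spectral projections depend smoothly on $\theta$.

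Next I will split, for instance for the first row, $\langle a_\theta,f(M_\theta)a_\theta\rangle=f(\lambda_1(\theta))|\langle e_1(\theta),a_\theta\rangle|^2+f(\lambda_2(\theta))|\langle e_2(\theta),a_\theta\rangle|^2$. Here $\lambda_1(\theta)$ is the eigenvalue branch starting at $1$.

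\emph{First term.} Since $P_\theta=R_\theta PR_\theta^*$ with $P$ diagonal, the derivatives of $P_\theta^{\pm1/2}=R_\theta P^{\pm1/2}R_\theta^*$ at $0$ are commutators of the rotation generator with diagonal matrices, hence off-diagonal. Therefore $\dot Z_0$ is off-diagonal, and the first-order eigenvalue shift $\langle e_1,\dot M_0e_1\rangle$ vanishes. This gives $\lambda_1(\theta)=1+O(\theta^2)$, and $f(1)=f'(1)=0$ makes the first term $O(\theta^4)$.

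\emph{Second term.} Here $f(\lambda_2(\theta))=f(r^{1-t})+O(\theta)$, so it remains to find the leading coefficient of the overlap $\langle e_2(\theta),a_\theta\rangle=O(\theta)$. To avoid differentiating $Z_\theta$ directly, I will use the identity $\Gamma_t\Omega=\rho\sigma^{-1/2}$ from Theorem \ref{theo:geodesic-likelihood-surface} applied row-wise. This gives $M_\theta a_\theta=r\cdot\mathrm{row}_1(P_\theta^{-1/2})$ and $N_\theta b_\theta=\mathrm{row}_2(P_\theta^{-1/2})$. Projecting $(M_\theta-1)a_\theta$ onto $e_2(\theta)$ yields
\begin{equation*}
(\lambda_2(\theta)-1)\langle e_2(\theta),a_\theta\rangle=\langle e_2(\theta),\,r\,\mathrm{row}_1(P_\theta^{-1/2})-\mathrm{row}_1(P_\theta^{1/2})\rangle.
\end{equation*}
The right-hand side vanishes at $\theta=0$. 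Its first-order coefficient comes only from the off-diagonal entries of the commutators, which are $\pm\bigl(r(r^{-1/2}-1)-(r^{1/2}-1)\bigr)=\pm(1-r)$. Hence
\begin{equation*}
|\langle e_2,a_\theta\rangle|^2=\theta^2\frac{(r-1)^2}{(r^{1-t}-1)^2}+o(\theta^2).
\end{equation*}

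For the second row the same computation gives the coefficient $\pm\bigl((r^{-1/2}-1)-(r^{1/2}-1)\bigr)=\pm(1-r)/\sqrt r$, divided by $r^{t-1}-1=-r^{t-1}(r^{1-t}-1)$. The squared overlap is therefore $\theta^2(r-1)^2r^{1-2t}/(r^{1-t}-1)^2$, multiplying $f(r^{t-1})$. Summing the two rows gives exactly the claimed coefficient of $\theta^2$.

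The main obstacle is justifying that no other $\theta^2$ contributions appear. Specifically, one must check that the eigenvalue-$1$ branch has vanishing first-order shift, which relies on the off-diagonality of $\dot Z_0$, and that replacing $e_2(\theta)$ by $e_2(0)$ and $\lambda_2(\theta)$ by $\lambda_2(0)$ only costs $o(\theta^2)$ in the squared overlap. I would handle both points by smoothness of simple eigenprojections in finite dimension. The sign conventions for the commutators have to be tracked but are irrelevant after squaring.
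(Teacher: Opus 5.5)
Your proposal is correct and shares the paper's skeleton: the row decomposition $\langle a_\theta,f(r^{1-t}Z_\theta)a_\theta\rangle+\langle b_\theta,f(Z_\theta)b_\theta\rangle$, the spectral splitting of each row, and the vanishing first-order shift of the eigenvalue branch through $1$. That vanishing comes from the off-diagonality of $\dot Z_0$ and makes that branch contribute only $O(\theta^4)$.

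The genuine difference is how you obtain the leading coefficient of the overlap with the other eigenvector. The paper expands $Z_\theta$ to first order (the coefficient $v$ in \eqref{eq:Expansion_Z}) and solves for the eigenvector correction $\alpha=qv/(q-1)$. It then simplifies $h-1+hqv/(q-1)$ to $-(r-1)/(q-1)$ by hand, and does the same for the second row. You instead use the exact relation $r^{1-t}Z_\theta a_\theta=r\,\mathrm{row}_1(P_\theta^{-1/2})$, i.e.\ the row form of $\Gamma_t\Omega_\sigma=\rho\sigma^{-1/2}$. This gives $(\lambda_2(\theta)-1)\langle e_2(\theta),a_\theta\rangle=\langle e_2(\theta),w(\theta)\rangle$, where $w(\theta)=r\,\mathrm{row}_1(P_\theta^{-1/2})-\mathrm{row}_1(P_\theta^{1/2})$ and $w(0)=0$. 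Only $\dot w(0)=(0,1-r)$ is then needed; it is built from the commutators of the rotation generator with $P^{\pm1/2}$. The parameter $t$ enters solely through the denominators $r^{1-t}-1$ and $r^{t-1}-1$. Your resulting factors $(1-r)/(r^{1-t}-1)$ and $(1-r)/(\sqrt r\,(r^{t-1}-1))$ coincide with the paper's \eqref{eq:first_factor} and \eqref{eq:second_factor}.

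What your route buys:
\begin{enumerate}
\item It needs no eigenvector perturbation and no explicit $v$, and it explains why $v$ cancels in the paper's algebra.
\item It settles your stated obstacle at once. Since $w(0)=0$, replacing $e_2(\theta)$ by $e_2(0)$ costs $O(\theta^2)$ in the overlap, hence $O(\theta^3)$ in its square, while $\lambda_2(\theta)-1$ stays bounded away from $0$.
\item The same identity also yields $\lambda_1(\theta)-1=O(\theta^2)$ without the off-diagonality argument, because $\langle e_1,\dot w(0)\rangle=0$ and $\langle e_1(\theta),a_\theta\rangle\to\sqrt r\neq0$.
\end{enumerate}
The paper's computation, in exchange, is a purely mechanical use of standard first-order perturbation theory and does not depend on any structural identity of $\Gamma_t$.
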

\begin{proof}
Put
\begin{equation*}
 h:=r^{1/2},\qquad q:=r^{1-t}.
\end{equation*}
Since $P_{\theta}=R_{\theta}PR_{\theta}^*$, the spectral theorem shows that $P_{\theta}^{1/2}=R_{\theta}P^{1/2}R_{\theta}^*$, which yields
\begin{equation*}
    P_{\theta}^{1/2}=\begin{pmatrix}
        h\cos^2 \theta+\sin^2\theta
        & (h-1)\sin \theta \cos \theta \\
        (h-1)\sin \theta \cos \theta
         & h\sin^2 \theta+\cos^2\theta
     \end{pmatrix}
    =\begin{pmatrix}
        h & (h-1)\theta\\
        (h-1)\theta & 1
    \end{pmatrix}+O(\theta^2 )\, .
\end{equation*}
The vectors $a_{\theta}$ and $b_{\theta}$ can therefore be expanded as
\begin{equation}\label{eq:Expansion_a_b}
    a_{\theta}=\begin{pmatrix}h\\(h-1)\theta\end{pmatrix}+O(\theta^2),
    \qquad
    b_{\theta}=\begin{pmatrix}(h-1)\theta\\1\end{pmatrix}+O(\theta^2).
\end{equation}

To expand $Z_\theta$, one must use the inverse square root rather than the square root. Functional calculus gives
\begin{equation*}
     P_\theta^{-1/2}
    =R_\theta\begin{pmatrix}h^{-1}&0\\0&1\end{pmatrix}R_\theta^*
=B_0+\theta  B_1+O(\theta^2),
\end{equation*}
  where
\begin{equation*}
    B_0=\begin{pmatrix}h^{-1}&0\\0&1\end{pmatrix},
\qquad
        B_1=(h^{-1}-1)\begin{pmatrix}0&1\\1&0\end{pmatrix}.
\end{equation*}
Therefore
\begin{align*}
Z_\theta
        &=(B_0+\theta B_1)P^t(B_0+\theta B_1)+O(\theta^2)\\
   &=B_0P^tB_0
    +\theta(B_1P^tB_0+B_0P^tB_1)+O(\theta^2).
 \end{align*}
Since $P^t=\operatorname{diag}(r^t,1)$ and $r^t=h^2/q$, direct multiplication yields
\begin{equation}\label{eq:Expansion_Z}
    Z_\theta
    =\begin{pmatrix}q^{-1}&v\theta\\v\theta&1\end{pmatrix}
    +O(\theta^2),
\qquad
          v=-\frac{(h-1)(q+h)}{hq}.
\end{equation}

    The row-sector decomposition used in the theorem is purely algebraic and  applies to the present centered function $f$; hence
\begin{equation}\label{eq:Rows_decomposition}
    D_f^t(P\Vert P_\theta)
=\langle a_\theta,f(qZ_\theta)a_\theta\rangle
         +\langle b_\theta,f(Z_\theta)b_\theta\rangle.
\end{equation}
Let
\begin{equation*}
A_{\theta}:=qZ_{\theta}.
\end{equation*}
Then
\begin{equation}\label{eq:Expansion_A_theta}
    A_{\theta}
=\begin{pmatrix}
        1 & qv \theta\\
   qv \theta & q
    \end{pmatrix}+ O(\theta^2).
 \end{equation}
At $\theta=0$, $A_0$ is diagonal with the  distinct eigenvalues $1$ and $q$, because $r>0$, $r\neq1$, and $t<1$. Thus, for sufficiently small $\theta$, the two eigenvalues remain  simple and  may be chosen smoothly. Since
\begin{equation*}
   A_0'=\begin{pmatrix}0&qv\\qv&0\end{pmatrix},
\end{equation*}
the standard first-order perturbation formula for a simple eigenvalue \cite{kato1995} gives
\begin{equation*}
         \lambda_1'(0)=\langle e_1,A_0'e_1\rangle=0,
   \qquad
    \lambda_q'(0)=\langle e_2,A_0'e_2\rangle=0.
 \end{equation*}
    Consequently,
\begin{equation}\label{eq:Eigenvalues_A_theta}
   \lambda_1(\theta)=1+O(\theta^2),
    \qquad
     \lambda_q(\theta)=q+O(\theta^2).
\end{equation}
Because $f(1)=f'(1)=0$, Taylor's theorem at $1$ gives
\begin{equation}\label{eq:Order_lambda_one}
    f(\lambda_1(\theta))=O(\theta^4).
 \end{equation}
Thus the eigenvalue branch near $1$ does not contribute to the coefficient of  $\theta^2$.

    For the normalized eigenvector corresponding to $\lambda_q(\theta)$, consider the expansion
\begin{equation}\label{eq:Eigenvector_wq}
    w_q(\theta)=e_2+\alpha \theta e_1+O(\theta^2)\,  .
\end{equation}
    The absence of a first-order component along $e_2$ follows from normalization. On the one hand, \eqref{eq:Expansion_A_theta} gives
\begin{equation*}
    A_{\theta}w_q(\theta)
    =q e_2+\theta(qv+\alpha)e_1+O(\theta^2)\, ,
\end{equation*}
  and, on the other hand, \eqref{eq:Eigenvalues_A_theta} gives
\begin{equation*}
    \lambda_q(\theta)w_q(\theta)
     =q e_2+q\alpha \theta e_1+O(\theta^2)\, .
\end{equation*}
Comparing the coefficients of $e_1$ yields
\begin{equation}\label{eq:alpha_value}
     \alpha=\frac{qv}{q-1}\, .
 \end{equation}
Let $w_1(\theta)$ be the normalized eigenvector corresponding to $\lambda_1(\theta)$. By the spectral theorem,
\begin{align*}
   \langle a_\theta,f(A_\theta)a_\theta\rangle
    &=f(\lambda_1(\theta))
    |\langle w_1(\theta),a_\theta\rangle|^2
+f(\lambda_q(\theta))
        |\langle w_q(\theta),a_\theta\rangle|^2.
\end{align*}
The first  term  is $O(\theta^4)$ by \eqref{eq:Order_lambda_one} and boundedness of $a_\theta$. Moreover,
\begin{equation*}
f(\lambda_q(\theta))=f(q)+O(\theta^2),
\end{equation*}
and, by \eqref{eq:Expansion_a_b}, \eqref{eq:Eigenvector_wq}, and \eqref{eq:alpha_value},
\begin{align*}
    \langle w_q(\theta),a_\theta\rangle
&=\left(h-1+\frac{hqv}{q-1}\right)\theta+O(\theta^2).
\end{align*}
Therefore
\begin{equation}\label{eq:expression_atheta}
     \langle  a_\theta,f(A_\theta)a_\theta\rangle
=f(q)\left(h-1+\frac{hqv}{q-1}\right)^2\theta^2
         +o(\theta^2).
\end{equation}

We now treat the second row sector. By \eqref{eq:Expansion_Z}, the eigenvalues of $Z_0$ are  $q^{-1}$  and $1$. The eigenvalue branch near $1$ again contributes only $O(\theta^4)$. Let $\mu_{q^{-1}}(\theta)$ denote the branch near $q^{-1}$ and let  $\psi_{q^{-1}}(\theta)$ be a corresponding normalized eigenvector. Then
\begin{equation*}
        \mu_{q^{-1}}(\theta)=q^{-1}+O(\theta^2),
    \qquad
    \psi_{q^{-1}}(\theta)=e_1+\beta\theta e_2+O(\theta^2).
 \end{equation*}
    Substitution into the  eigenvalue equation
\begin{equation*}
   Z_\theta\psi_{q^{-1}}(\theta)
    =\mu_{q^{-1}}(\theta)\psi_{q^{-1}}(\theta)
 \end{equation*}
  gives
\begin{equation*}
   \beta=\frac{v}{q^{-1}-1}.
\end{equation*}
  Using the spectral theorem exactly as above,  together  with the expansion of $b_\theta$ in \eqref{eq:Expansion_a_b},  gives
\begin{equation}\label{eq:expression_btheta}
        \langle b_{\theta}, f(Z_{\theta})b_{\theta}\rangle
   =f(q^{-1})
    \left(h-1+\frac{v}{q^{-1}-1}\right)^2\theta^2
    +o(\theta^2).
\end{equation}

  Combining \eqref{eq:Rows_decomposition}, \eqref{eq:expression_atheta}, and \eqref{eq:expression_btheta}, we obtain
\begin{align}\label{eq:before_simplifying}
    D_f^t(P\Vert P_\theta)
&=\left[
        f(q)\left(h-1+\frac{hqv}{q-1}\right)^2
   +f(q^{-1})\left(h-1+\frac{v}{q^{-1}-1}\right)^2
    \right]\theta^2
    +o(\theta^2).
\end{align}
Finally, substituting
\begin{equation*}
    v=-\frac{(h-1)(q+h)}{hq}
 \end{equation*}
gives
\begin{align}
   h-1+\frac{hqv}{q-1}
    &=(h-1)-\frac{(h-1)(q+h)}{q-1}
      =-\frac{h^2-1}{q-1}
=-\frac{r-1}{q-1},\label{eq:first_factor}\\
        h-1+\frac{v}{q^{-1}-1}
   &=h-1-\frac{qv}{q-1}
    =\frac{q(h^2-1)}{h(q-1)}
    =\frac{q(r-1)}{h(q-1)}.\label{eq:second_factor}
\end{align}
   Substituting \eqref{eq:first_factor} and \eqref{eq:second_factor} into \eqref{eq:before_simplifying}, and using $h^2=r$ and $q=r^{1-t}$, yields
\begin{align*}
    D_f^t(P\Vert P_\theta)
    &=\left[
\frac{(r-1)^2}{(q-1)^2}f(q)
        +\frac{q^2(r-1)^2}{r(q-1)^2}f(q^{-1})
   \right]\theta^2+o(\theta^2)\\
    &=(r-1)^2
    \frac{f(r^{1-t})+r^{1-2t}f(r^{t-1})}
{(r^{1-t}-1)^2}\,\theta^2
        +o(\theta^2),
\end{align*}
  which concludes the result.

\end{proof}

The scalar proof of the next lemma uses the same type of hyperbolic reparameterization and sign analysis that Besenyei used in the study of the Hasegawa--Petz mean \cite[Proposition 1]{besenyei2012hasegawa}.  Our function $\Phi_t$ is a similar  function but not the same, so we reproduce his proof.

\begin{lemma}\label{lemma:Phi_monotonicity_convexity}
For $u>0$ and $0\leq t<1$, define
\begin{equation*}
       \Phi_t(u):=
       \frac{(1-t)u\sinh u-\sinh((1-t)u)\sinh(tu)}
       {\sinh^2((1-t)u)}.
\end{equation*}
The function admits a continuous extension to $[0,1]\times[0,\infty)$ given at the boundary by
\begin{equation*}
       \Phi_t(0)=1,\qquad \Phi_1(u)=\cosh u.
\end{equation*}
For every fixed $u\geq0$, the map $t\mapsto\Phi_t(u)$ is nondecreasing and convex on $[0,1]$. Moreover, it is strictly increasing whenever $u>0$.
\end{lemma}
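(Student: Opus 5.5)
The plan is to reparametrise by $a:=1-t\in(0,1]$ and use the product-to-sum identity
\[
2\sinh(au)\sinh((1-a)u)=\cosh u-\cosh((2a-1)u).
\]
This gives
\begin{equation*}
\Phi_t(u)=\frac{2au\sinh u-\cosh u+\cosh((2a-1)u)}{2\sinh^2(au)}=:\frac{N(a,u)}{2\sinh^2(au)}.
\end{equation*}
Monotonicity and convexity in $t$ are then equivalent to $a\mapsto \Phi$ being nonincreasing and convex on $(0,1]$, with continuity at $a=0$ handled through the boundary value.

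\emph{Boundary values.} I would Taylor-expand $N$ in $a$ at fixed $u$. The terms of order $a^0$ and $a^1$ cancel, since $\cosh((2a-1)u)=\cosh u-2au\sinh u+2a^2u^2\cosh u+O(a^3)$. Hence $N=2a^2u^2\cosh u+O(a^3)$, while $2\sinh^2(au)=2a^2u^2+O(a^4)$, so $\Phi\to\cosh u$ as $t\uparrow 1$. Similarly, expanding in $u$ at fixed $a$ gives $N=2a^2u^2+O(u^4)$, so $\Phi_t(0)=1$. Because the quotient is real-analytic away from the removable set, joint continuity on $[0,1]\times[0,\infty)$ follows. The same analysis shows that at $u=0$ the map $t\mapsto\Phi_t(0)=1$ is constant, which disposes of that case.

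\emph{Monotonicity.} For fixed $u>0$ I would compute $\partial_a\Phi$. After clearing the positive factor $\sinh^3(au)$, its sign is that of
\begin{equation*}
G(a,u):=\bigl(u\sinh u-u\sinh((2a-1)u)\bigr)\sinh(au)-u\cosh(au)\,N(a,u).
\end{equation*}
I aim to show $G<0$ for $a\in(0,1)$, $u>0$. Following Besenyei's strategy, I would rewrite everything in terms of the variables $x=au$ and $y=(1-a)u$, so that $u=x+y$, and expand the sign-determining expression into exponentials or power series in $(x,y)$. The goal is to exhibit $-G$ as a series with nonnegative coefficients. If that fails, the fallback is to check that $-G$ vanishes together with its first derivative at $u=0$ and that a suitable derivative in $u$ has a definite sign, then integrate. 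Strictness for $u>0$ comes from at least one coefficient being strictly positive.

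\emph{Convexity, the expected main obstacle.} The second derivative $\partial_a^2\Phi$ produces a considerably longer hyperbolic expression. I would first try to avoid computing it by finding an integral representation. Since $\Phi=e^u/m_t(e^{2u})$ and $1/m_t$ arises from $\widetilde f(x)=x\log x-x+1=\int_0^1\!\int_0^1 \dots$, I would seek a formula of the form
\[
\Phi_t(u)=\int_0^1 k(\lambda,u)\,\cosh\bigl(\psi_t(\lambda)u\bigr)\,d\lambda ,
\]
with $k\ge 0$ and $\psi_t(\lambda)$ affine in $t$. A candidate follows from $N(a,u)=\int_0^{a}\!\int_0^{b}4u^2\cosh((2c-1)u)\,dc\,db$, divided by $2\sinh^2(au)$. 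Convexity in $t$ would then follow from convexity of $t\mapsto\cosh(\psi_t u)$, since $\cosh$ of an affine function is convex. The difficulty is the $a$-dependent denominator, which may spoil this argument. If so, I will revert to the direct approach: compute $\partial_a^2\Phi$, multiply by $\sinh^4(au)$, and prove nonnegativity of the resulting numerator in the variables $x,y$ by the same series-coefficient sign analysis used for monotonicity, checking the low-order coefficients by hand.
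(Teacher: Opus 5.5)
Your reparametrisation $a=1-t$, the product-to-sum form $\Phi=N/(2\sinh^2(au))$, and the boundary-value expansions are correct. The substance of the lemma, monotonicity and convexity in $t$, is not proved, though. Both are deferred to a two-variable series-coefficient sign analysis that you never carry out. For convexity, you yourself expect the $a$-dependent denominator to break the integral-representation idea.

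The one concrete inequality you do set up is also false. Since $\partial_a\cosh((2a-1)u)=2u\sinh((2a-1)u)$, one has $\partial_aN=2u\sinh u+2u\sinh((2a-1)u)$. This is consistent with $\partial_aN(0,u)=0$, as it must be since $N=O(a^2)$. So the first bracket of $G$ needs a plus sign. With your minus sign, $G=2au^2\sinh u+O(a^2)>0$ for small $a$, so no sign analysis could establish $G<0$.

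The missing idea is a separation of variables that removes the two-variable problem entirely. With $x=au$ and $y=(1-a)u$, you get $N=2x\sinh u-2\sinh x\sinh y$. Substituting $\sinh y=\sinh u\cosh x-\cosh u\sinh x$ gives
\[
\Phi_t(u)=\cosh u-\sinh u\,\psi\bigl((1-t)u\bigr),\qquad \psi(z):=\coth z-\frac{z}{\sinh^2 z},
\]
which is exactly the representation the paper's proof uses.

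Now $t$ enters only through one single-variable function, so
\[
\partial_t\Phi_t(u)=u\sinh u\,\psi'((1-t)u),\qquad \partial_t^2\Phi_t(u)=-u^2\sinh u\,\psi''((1-t)u).
\]
In your notation, the corrected $G$ factors as $G=-2u\sinh u\,(x\cosh x-\sinh x)$.

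Monotonicity then reduces to $\psi'(z)=2(z\cosh z-\sinh z)/\sinh^3 z>0$. Convexity reduces to $\psi''(z)=-\bigl(2z(\cosh 2z+2)-3\sinh 2z\bigr)/\sinh^4 z<0$. That follows from the expansion $2z(\cosh 2z+2)-3\sinh 2z=\sum_{n\ge 2}\frac{2^{2n+2}(n-1)}{(2n+1)!}z^{2n+1}$, whose coefficients are all positive. Strictness for $u>0$ comes from $\psi'>0$.

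The same representation also gives joint continuity on $[0,1]\times[0,\infty)$ at once, since $\psi(z)=\tfrac23 z+O(z^3)$ extends continuously with $\psi(0)=0$.
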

\begin{proof}
For $z>0$, define
\begin{equation}\label{eq:Definition_psi}
       \psi(z):=\coth z-\frac{z}{\sinh^2z}.
\end{equation}
Using
\begin{equation*}
 \cosh u\sinh z-\sinh u\cosh z=\sinh(z-u)
\end{equation*}
with $z=(1-t)u$, the defining expression for $\Phi_t(u)$ can be rewritten as
\begin{equation}\label{eq:Phi_psi_formula}
       \Phi_t(u)=\cosh u-\sinh u\,\psi((1-t)u).
\end{equation}
The singularity of $\psi$ at zero is removable. Indeed, the standard expansions of the hyperbolic functions give
\begin{equation*}
       \psi(z)=\frac23z-\frac4{45}z^3+O(z^5).
\end{equation*}
We therefore set $\psi(0)=0$. Formula \eqref{eq:Phi_psi_formula} then extends continuously to $t=1$, where it gives
\begin{equation*}
       \Phi_1(u)=\cosh u.
\end{equation*}
It also extends continuously to $u=0$, where $\Phi_t(0)=1$ for every $t\in[0,1]$.

We next establish the two sign properties of $\psi$ which are needed below. Differentiating \eqref{eq:Definition_psi} gives
\begin{equation*}
        \psi'(z)=\frac{2}{\sinh^2z}\bigl(z\coth z-1\bigr).
\end{equation*}
For $z>0$, the quantity in parentheses is strictly positive. Indeed,
\begin{equation*}
          z\coth z>1
       \quad\Longleftrightarrow\quad
        z\cosh z-\sinh z>0,
\end{equation*}
and the function on the right vanishes at zero and has derivative $z\sinh z>0$. Hence
\begin{equation*}
       \psi'(z)>0\qquad(z>0).
\end{equation*}
A second differentiation yields
\begin{equation*}
       \psi''(z)
        =-\frac{2z(\cosh(2z)+2)-3\sinh(2z)}{\sinh^4z}.
\end{equation*}
The numerator is strictly positive for $z>0$. To see this without using any auxiliary inequality, expand it into its power series:
\begin{equation*}
       2z(\cosh(2z)+2)-3\sinh(2z)
       =\sum_{n=2}^{\infty}
       \frac{2^{2n+2}(n-1)}{(2n+1)!}z^{2n+1}>0.
\end{equation*}
Consequently,
\begin{equation*}
       \psi''(z)<0\qquad(z>0).
\end{equation*}

Fix now $u>0$. From \eqref{eq:Phi_psi_formula}, for $0\leq t<1$,
\begin{equation}\label{eq:Phi_derivative_t}
       \frac{\partial}{\partial t}\Phi_t(u)
        =u\sinh u\,\psi'((1-t)u)>0
\end{equation}
and
\begin{equation*}
       \frac{\partial^2}{\partial t^2}\Phi_t(u)
        =-u^2\sinh u\,\psi''((1-t)u)>0.
\end{equation*}
For $u=0$, the function is identically equal to one. Thus, for every $u\geq0$, the map $t\mapsto\Phi_t(u)$ is nondecreasing and convex on $[0,1)$. Its continuous extension at $t=1$ is therefore nondecreasing and convex on the closed interval $[0,1]$. Moreover, it is strictly increasing whenever $u>0$.
\end{proof}

\end{document}